\newcommand{\Var}{\text{Var}}
\def\EE{{\mathbb E}}
\def\RR{{\mathbb R}}
\def\nn{{\EuScript N}}
\newcommand{\indep}{\perp \!\!\! \perp}
\newcommand*{\V}[1]{\mathbf{#1}}  
\newtheorem{theorem}{Theorem}
\newtheorem{assumption}{Assumption}
\newtheorem{proposition}{Proposition}
\newtheorem{lemma}{Lemma}
\newtheorem{definition}{Definition}
\newenvironment{manualcondition}[1]{%
	\manualconditioninner
}{\endmanualconditioninner}
\title{Heterogeneity-robust granular instruments\thanks{Email: \href{mailto:ericqian@jhu.edu}{\texttt{ericqian@jhu.edu}}.  I am grateful for the support and guidance from my third year paper advisor Mikkel Plagborg-M\o ller. I thank Mark Aguiar, Matias Cattaneo, Bo Honor\'e, John Grigsby, Michal Koles\'ar, Ulrich M\"{u}ller, Richard Rogerson, Karthik Sastry, Mark Watson, and participants from various workshops for their helpful comments.}}
\author{Eric Qian  \\
	Johns Hopkins University\\
}
\date{\today}
\begin{document}
		\onehalfspacing
		\maketitle
		\justifying
		
		\begin{abstract}
			\noindent Granular instrumental variables (GIV) has experienced sharp growth in empirical macro-finance. The methodology’s rise showcases granularity’s potential for identification across many economic environments, like the estimation of spillovers and demand systems. I propose a new estimator—called robust granular instrumental variables (RGIV)—that enables studying unit-level heterogeneity in spillovers. Unlike existing methods that assume heterogeneity is a function of observables, RGIV leaves heterogeneity unrestricted. In contrast to the baseline GIV estimator, RGIV allows for unknown shock variances and does not require skewness in the size distribution. \alert{I find evidence of unit-level heterogeneity in applications to sovereign yield spillovers and the inelastic markets hypothesis.}
		\end{abstract}
		\noindent\textit{Keywords}: internal instrument, generalized method of moments, granular instrumental variables, spillovers. \textit{JEL codes}: C33, C36.
\clearpage\newpage		
\section{Introduction}

In many macroeconomic settings, researchers wish to estimate the spillover of an idiosyncratic shock to one unit onto other units. Examples include estimating the contagion effects of financial market distress, aggregate demand externalities on individual consumption, strategic complementarities in price setting, and the price elasticity of demand of the stock market \citep{Allen2009,Auclert2023,Alvarez2023,Gabaix2023}. Identifying plausibly exogenous variation in these settings however is notoriously difficult and contributes to the continued challenges of conducting empirical work.

\cite{Gabaix2024} tackles this difficulty with a novel technique, called granular instrumental variables (GIV).   It considers environments where an individual unit's outcome is partially determined by the size-weighted outcome. Thus, idiosyncratic shocks to a single unit spill over to all other units in equilibrium. To overcome the resulting endogeneity bias, their instrument for the size-weighted outcome is constructed as the difference between size- and equal-weighted outcomes. Their instrument is ``granular'' in that idiosyncratic shocks to large units are the primary source of identifying variation as they disproportionately contribute to movements in the size-weighted outcome variable. The broad adoption of GIV in applied macroeconomics and finance is indicative of its usefulness \citep{Chodorow-Reich2021a,Adrian2022,Camanho2022,Gabaix2023}.

The baseline GIV estimator of \cite{Gabaix2024}, however, requires strong assumptions to satisfy the instrumental variables relevance condition and exclusion restriction and its extensions require further conditions. The baseline estimator requires homogeneous spillovers across units, homogeneous shock variances,  skewed unit size, and idiosyncratic shocks (uncorrelated shocks between units after accounting for a \emph{known} factor structure).   In particular, there is typically no ex-ante rationale for there to be homogeneous spillovers and shock variances across units, contributing to the gap between theory and practice. Furthermore, I show that there is no general guarantee that the GIV  spillover estimand admits a positive-weighted average of unit-specific spillovers. To address these concerns, \cite{Gabaix2024} generalizes their procedure to allow for heterogeneity that is a function of observables, but the empirically important question of how to best tackle unobserved heterogeneity remains open.

This paper's main contribution is to establish global identification for a GIV model with unit-specific spillovers and unknown shock variances. I propose an estimator, called robust granular instrumental variables (RGIV),  that is robust in the sense that it is applicable to a wider set of environments than the baseline \cite{Gabaix2024} GIV estimator.  Ultimately, RGIV brings the study of unit-level heterogeneity---a feature that is of substantial importance in other areas of macroeconomics{\interfootnotelinepenalty10000 \footnote{For instance, take household-level differences in the marginal propensity to consume akin to \cite{Fagereng2021,Lewis2021,Fuster2021}.}}---to the estimation of spillovers. 

 RGIV uses internally estimated idiosyncratic shocks as instruments for the size-weighted outcome variable. Informally, \alert{identification can be understood} sequentially.  An initial guess of the first unit's spillover gives an estimate of the first unit's idiosyncratic shock. Using this estimated shock as an instrument for the size-weighted outcome variable, the remaining units' spillover coefficients and estimated idiosyncratic shocks can be computed. If one or more pairs of estimated idiosyncratic shocks are correlated, we can guess a new spillover for the first unit and repeat until the estimated idiosyncratic shocks are uncorrelated. \alert{The estimator I propose minimizes the GMM objective jointly}. RGIV makes use of granularity by exploiting the contribution of individual shocks (even to relatively small units) to movements in the size-weighted outcome variable. This approach differs from the baseline GIV estimator in that skewness in the size distribution of units and homogeneity in shock variances are not required.

I prove that there is an inevitable tradeoff in assumptions between allowing for general spillover heterogeneity and a general shock covariance structure. Like GIV, the RGIV estimator can easily accommodate cross-unit correlations that are due to observed covariates (both time-varying and unit-specific) or a known factor structure in the residuals. However, I prove that a model with an \textit{unknown} factor structure and unrestricted spillover heterogeneity is not identified. Hence, absent additional structure, researchers must restrict either spillover heterogeneity or the correlation structure of the shocks.

I develop tests that evaluate the appropriateness of (1) the RGIV framework and (2) the assumption of homogeneous spillovers featured in \cite{Gabaix2024}.  These hypothesis tests are applications of standard GMM results \citep{Newey1994}. The first test is the Sargan--Hansen test, where the null hypothesis of correct specification is rejected when one or more pairs of estimated idiosyncratic shocks are correlated. The second test is the distance metric test, where the null hypothesis is rejected when the constraint of spillover homogeneity binds.

I demonstrate the usefulness of RGIV in two empirical applications.

 First, building on the analysis of a working paper version of \cite{Gabaix2024}, I apply RGIV to sovereign yield spillovers in the Euro area and find strong evidence of spillover heterogeneity among Euro area members. I control for correlations induced by heterogeneous loadings on unobserved aggregate shocks by including a rich set of observed explanatory variables. In addition, I account for the well-documented correlation of shocks among Euro area countries by size-aggregating countries into larger  ``core'' and ``periphery'' blocks \citep{Bayoumi1992}. Under the preferred specification, I fail to reject the null hypothesis of correct specification. Meanwhile,  the null hypothesis of spillover coefficient homogeneity is strongly rejected. I find that the size-weighted spillovers of countries in the western-periphery block (Ireland, Portugal, and Spain) are approximately twice that of the core country block. 
 
 \alert{Second, based on the empirical work of \citet{Gabaix2023}, I apply RGIV to study the ``inelastic markets hypothesis,'' that the aggregate stock market has a low price elasticity of demand. After including a wide set of observed explanatory variables, I apply RGIV after size-aggregating by operational similarities. The RGIV estimates support the inelastic markets hypothesis, with a size-weighted price elasticity of demand of $-0.05$. I also reject the null hypothesis of between-block homogeneity. I find that demand is most responsive for the Retirement and Insurance block (elasticity of $-0.2$), relatively inelastic for both the Real Domestic and Foreign blocks (elasticities of $-0.08$ and near zero respectively) and positively-sloped for the Intermediaries and Funds block (elasticity of $+0.09$).}

In a simulation study, I show that RGIV confidence intervals have good finite sample coverage properties \alert{based on a} DGP taken from the \alert{sovereign yield spillovers} empirical application. GIV confidence intervals in contrast can undercover under spillover heterogeneity and when shock variances are taken to be unknown. \alert{I also show that coverage for RGIV worsens for a large number of moments and that GIV coverage worsens with insufficient skewness in the size distribution (relative to the sample length)}. \\

\noindent \textsc{Literature}{} {} {} My model generalizes the baseline environment of \cite{Gabaix2024} to allow for spillover heterogeneity, shock variance heterogeneity, and equal-sized units. For estimation, RGIV fully exploits second moment information through GMM and can also be adapted to the case in which correlations in shocks are induced by factors with known loadings. In contrast, the spillover coefficient homogeneity, skewed size, and known shock variance conditions are used to justify the validity of the GIV instrument. \cite{Gabaix2024} also proposes extensions that separately estimate spillover coefficient heterogeneity\footnote{\cite{Gabaix2024} Proposition 6 considers the case where unit-level spillover heterogeneity is determined by observables, Section D.9.1 proposes unit-specific ``leave-one-out'' granular instruments for when shock variances are known, and Section D.9.2 lists moment conditions for adapting GIV to heterogeneous spillovers though without the analysis of its econometric properties.} and unknown shock variances\footnote{\cite{Gabaix2024} Section D.3 appends shock variance moments after assuming homogeneous spillovers and global identification.}. In contrast, RGIV doesn't require skewness of the size distribution, heterogeneity to depend on observables, or for shock variances to be known; uncorrelated shocks is sufficient for jointly allowing unit-level spillovers and unknown shock variances. 

\alert{\cite{Baumeister2023a} applies the insights of GIV to a rich, structural model of the oil market (with dynamics, unobserved explanatory variables, and inventories) in a full information maximum likelihood framework. While they consider a specification with elasticity heterogeneity, they do not develop formal theoretical results on identification in this setting. This paper instead adapts the parsimonious framework of \cite{Gabaix2024} to develop theoretical results on elasticity heterogeneity---on identification, non-identification with latent factors, and limiting behavior---using a moment condition framework. This paper ultimately seeks to provide researchers with theoretical results that can be adapted to their own applications.}

\cite{Banafti2022} proposes a refinement to the GIV estimator, allowing for unobserved explanatory variables but require the number of economic units to be large and the size distribution to be skewed, unlike RGIV. Here, the principal challenge is to prevent the law of large numbers from averaging away the granularity of units.  The authors overcome this challenge by requiring the right tail of the unit size distribution to be highly skewed, with a Pareto tail index of less than 1.  In contrast, RGIV is consistent even when the size distribution is uniform.

GIV procedures are closely linked to models found in the spatial panel and peer effects literature \citep{Su2023,Aquaro2021,Chen2022,Manski1993}. GIV's spillover network structure can be viewed as a restricted form of spatial autocorrelation. Unlike these papers, I allow the number of units to be small (finite), show that the GIV spillovers under unit-level heterogeneity are globally identified, and estimate spillovers without the use of external instruments. More broadly, my procedure can also be viewed as a special case of a simultaneous equation model with covariance restrictions, which is closely related to impact matrix identification in the structural VAR setting \citep{Sims1980, Hausman1983a}. \\

\noindent \textsc{Outline}{}{}{} Section \ref{sec:illustration} illustrates the properties of GIV and RGIV in a simple three unit setting. Section \ref{sec:general} presents the main consistency and asymptotic normality results for the RGIV estimator. Section \ref{sec:testing} describes the RGIV specification test and parameter homogeneity tests. Section \ref{sec:extensions} extends RGIV to include observable explanatory variables and discusses non-identification under unobserved explanatory variables. Section \ref{sec:application} applies the RGIV estimator to investigate sovereign yield spillovers in the Euro area and the inelastic markets hypothesis. Section \ref{sec:simulation} examines finite-sample coverage accuracy of RGIV and GIV through simulations.  Section \ref{sec:discussion} concludes.

\section{Simple illustration}
\label{sec:illustration}

The remainder of this paper will consider the sovereign yield spillovers application as a running example.  An idiosyncratic shock to one Euro area country raises yields in that country, as well as those of other Euro area countries since losses from default are partially shared. Country-specific spillovers are permitted.

In this section, I use a simple three-country setting to illustrate the construction and properties of GIV and RGIV. Section \ref{sec:general} formalizes the discussion and generalizes to $n$ countries. \\

\noindent \textsc{Notation}{} {} {} Throughout this paper, I will follow the notation of \cite{Gabaix2024} for convenience. For a vector $X = (X_i)_{i=1,...,n}$ and size $S_i$ satisfying $\sum_{i=1}^n S_i = 1$, the equal-weighted sum $X_E$ and size-weighted sum $X_S$ are defined below:
\begin{align*}
	X_E \equiv \frac{1}{n} \sum_{i=1}^n X_i, \quad X_S \equiv \sum_{i=1}^n S_i X_i.
\end{align*}
Similarly, the equal-weighted and size-weighted cross-sectional averages for time series data $X_t = (X_{it})_{i=1,...,n}$ are defined as $X_{Et} \equiv \frac{1}{n} \sum_{i=1}^n X_{it}$ and $X_{St} \equiv \sum_{i=1}^n S_i X_{it}$ respectively.

\subsection{Baseline three country setting}
\label{ill:general_setting}

I will outline my ``baseline'' setting for the three country case using sovereign yield spillovers in the Euro area as a running example. For countries $i=1,2,3$, let $y_{it}$ be the yield spread relative to some comparison country. Yield spread growth is $r_{it} = \frac{ y_{it}- y_{i,t-1}}{y_{i,t-1}}$. Size $S_i$ is observed and corresponds to a country's ``debt at risk.''  Here, $S_i \in (0,1)$ is taken to be time-invariant and sums to 1.

Suppose the researcher is interested in estimating elasticity $\phi_i$, which will be called the ``spillover coefficient.'' Specifically, yield spread growth $r_{it}$ is determined by a country-specific spillover coefficient $\phi_i$, size-aggregated yield spread $r_{St}$, and idiosyncratic shock $u_{it}$:
\begin{align}
	\label{eq:illustration}
	r_{it} &= \phi _i r_{St} + u_{it}, \quad i=1,...,n, \quad 	\phi_S < 1.
	\end{align}
The size-weighted spillover coefficient $\phi_S$ is taken to be less than 1. Moreover, shocks $u_{it}$ are mutually  and serially independent, mean zero, and have country-specific variance $\EE(u_{it}^2) = \sigma^2_i > 0$. 

The propagation of an idiosyncratic shock depends on the size of the shock's origin country, the recipient country's spillover coefficient, and the size-weighted average spillover coefficient. To illustrate, consider a unit idiosyncratic shock to country 1 ($u_{1t} = 1$) holding the idiosyncratic shocks of countries 2 and 3 at zero ($u_{2t} = u_{3t} =0$). Then, the size-weighted yield spread growth $r_{St}$ increases by $S_1 \times \frac{1}{1-\phi_S}$, which is computed from taking a size-weighted average of Equation \ref{eq:illustration}:
\begin{align}
	\sum_{i=1}^3 S_i r_{it} &= \sum_{i=1}^3 S_i (\phi _i r_{St} + u_{it}) \implies r_{St} = \frac{u_{St}}{1 - \phi_S}. \label{eq:ill_multiplier}
\end{align}
 The increase in $r_{St}$ is larger when the size of country 1 is large and when countries are, on average, sensitive to spillovers (from multiplier $\frac{1}{1-\phi_S}$). Then, idiosyncratic shock $u_{1t}$ spills over to $r_{2t}$ and $r_{3t}$, giving rise to increases of $\phi_j r_{St} = \phi_j\frac{S_1}{1-\phi_S}$ for $j=2,3$. The total increase in $r_{1t}$ is $1+\phi_1 \frac{S_1}{1-\phi_S}$, a composition of the direct effect of idiosyncratic shock $u_{1t}$ and a spillover effect.

Granularity rules out the ordinary least squares regression of $r_{it}$ on $r_{St}$ as a method for estimating $\phi_i$. Equation \ref{eq:ill_multiplier} highlights granularity in the baseline setting, showing that idiosyncratic shocks are responsible for movements in the aggregated yield spread $r_{St}$ since $S_i>0$ and  $\sigma^2_i>0$. Therefore, regressing $r_{it}$ on $r_{St}$ to estimate $\phi_i$ as $T \to \infty$ suffers from endogeneity bias since $\EE(r_{St} u_{1t}) = \frac{S_1 \sigma^2_1}{1-\phi_S} \neq 0$. An alternative estimator is needed.

\subsection{Illustration of GIV}

In this section, I review the baseline GIV estimator of \cite{Gabaix2024} applied to this simple setting and discuss the conditions for its validity.  I show that a skewed size distribution is necessary for the relevance condition to hold. Shock orthogonality and homogeneous idiosyncratic shock variances are necessary for the instrumental variables exclusion restriction to hold. Moreover, the exclusion restriction fails under heterogeneity of spillover coefficients across countries. As discussed later, there are GIV extensions that individually accommodate spillover coefficient heterogeneity and heterogeneous (and unknown) shock variances, but these require additional conditions.

Take the setting of Section \ref{ill:general_setting} and further assume homogeneous spillover coefficients  ($\phi_1=\phi_2=\phi_3 = \phi$), homogeneous shock variances ($\sigma^2_1= \sigma^2_2 = \sigma^2_3 = \sigma^2$), and skewed unit sizes (ruling out $S_1 = S_2 = S_3$). Just as in Section \ref{ill:general_setting}, granularity induces endogeneity bias in the regression of $r_{it}$ on $r_{St}$ for regression coefficient $\widehat{\phi}^i_{OLS}$
\begin{align*}
	\widehat{\phi}^{i}_{OLS} - \phi \xrightarrow{p} \frac{\EE(r_{St} r_{it})}{\EE(r_{St}^2)} - \phi = \frac{S_i \sigma^2}{\EE(r_{St}^2)[1- \phi]} > 0.
\end{align*} 

\cite{Gabaix2024} propose estimating $\phi$ using instrumental variables. An equal-weighted aggregation of the country-level yield spread growth gives rise to the following IV regression model
\begin{align*}
 r_{Et} &= \phi r_{St} + u_{Et}.
\end{align*}
The equal-weighted yield spread growth responds to the size-weighted yield spread growth $r_{St}$ (through spillover coefficient $\phi$) and equal-weighted idiosyncratic shocks $u_{Et}$. For the IV regression model in the above display, ``granular instrument'' $z_t$ is constructed as the difference between the size- and equal-weighted yield spread growth
\begin{align*}
	z_t := r_{St} - r_{Et} = \big(\phi r_{St} + u_{St}\big) - \big( \phi r_{St} + u_{Et}\big) =u_{St} - u_{Et}.
\end{align*}
From the homogeneous spillovers assumption, the endogeneity from the size-weighted yield spread growth $r_{St}$ is differenced away.

Skewness in country sizes is crucial for the instrumental variables relevance condition to hold and $z_t$ reflects variation coming from idiosyncratic shocks to relatively large countries. To illustrate, first consider the extreme case of no skewness ($S_1 = S_2 = S_3$). Then the size- and equal-weighted yield spread growths are identical, producing a granular instrument that is identically zero $z_t = 0$. More generally, the instrumental variables relevance condition can be explicitly computed:
\begin{align*}
		\EE\left[z_t r_{S t}\right] & =\frac{1}{1-\phi} \mathbb{E}\left[u_{S t}\left(u_{S t}-u_{E t}\right)\right] \\
		&=\frac{\sigma^2}{1-\phi}\left[S_1\left(S_1-\frac{1}{3}\right)+S_2\left(S_2-\frac{1}{3}\right)+S_3\left(S_3-\frac{1}{3}\right)\right].
\end{align*}
When the size of country 1 is large relative to other countries (when $S_1 \gg S_2,S_3$), the relevance condition is further from zero reflected by the $S_1(S_1 - 1/3)$ term. 

The shock orthogonality and homogeneous (or known) shock variance assumptions are crucial for the exclusion restriction to hold. Explicitly, the covariance between the instrument and IV regression model error $u_{Et}$ is
\begin{align*}
\EE[u_{Et} (u_{St} - u_{Et})] &= \frac{1}{3}\big[S_1 \sigma^2 + S_2 \sigma^2 + S_3 \sigma^2\big] - \frac{1}{3}\big[\frac{1}{3} \sigma^2 + \frac{1}{3} \sigma^2 + \frac{1}{3} \sigma^2\big] \\
&= \frac{1}{3} \sigma^2 - \frac{1}{3} \sigma^2 = 0.
\end{align*}
In the first line of the above display, the uncorrelated shock assumption ensures that covariance terms $\EE(u_{it} u_{jt})$ are zero. The homogeneous shock variance assumption ensures that the difference between the first line's bracketed terms is zero. Note that any instrument of the form $z_t' = W_1 r_{1t} + W_2 r_{2t} + W_3 r_{3t} - \frac{1}{3}r_{1t} - \frac{1}{3}r_{2t}  - \frac{1}{3} r_{3t}$ where $W_1 + W_2+W_3= 1$ satisfies the instrumental variables exclusion restriction. \cite{Gabaix2024} show that weighting by size gives a variance-minimizing estimator for the IV regression model. The spillover coefficient can also be consistently estimated when a time fixed effect is included, as the time fixed effect is differenced away in the construction of $z_t$. 

The exclusion restriction argument described in the preceding paragraph can easily be extended to settings in which the shock variances are known to the econometrician rather than being homogeneous across units. \cite{Gabaix2024} show that identical computations hold after replacing the equal weights in $z_t = r_{St} - r_{Et}$ with inverse variance weights.

Spillover coefficient heterogeneity leads to a failure in the instrumental variables exclusion restriction. Allowing for unit-specific spillover coefficients, the granular instrument contains variation from $r_{St}$:
\begin{align}
	z_t &= r_{St} - r_{Et} = (\phi_S - \phi_E) r_{St} + u_{St} - u_{Et}. \label{equation:GK_contamination} 
\end{align}
From the $(\phi_S - \phi_E)$ term, the magnitude of the exclusion restriction's violation is greater when spillover coefficients vary systematically by size. Moreover, I show later in Proposition \ref{prop:GIV_het_elasticities} that there is no guarantee that the GIV spillover coefficient estimand is a non-negative weighted average of spillover coefficients. In this sense, the GIV spillover coefficient estimand could be far from the potentially heterogeneous true spillover coefficients.

\subsection{RGIV illustration: Environment as GMM moment conditions}
This subsection introduces the estimator proposed in this paper, robust granular instrumental variables (RGIV). RGIV exploits the uncorrelatedness of idiosyncratic shocks through the generalized method of moments. The estimator's identifying variation comes from individual country-level idiosyncratic shocks.

Returning to the baseline environment described in Section \ref{ill:general_setting}---which again features unit-specific spillover coefficients $\phi_i$ and unit-specific shock variances $\sigma^2_i$---the RGIV estimator encodes the uncorrelatedness of idiosyncratic shocks through GMM moment conditions. Taking sizes $S_1, S_2, S_3$ as given, store data in the vector $\V{r}_t = [r_{1t}, r_{2t}, r_{3t}]'$ and parameters in $\boldsymbol{\phi} = [\phi_1, \phi_2, \phi_3]'$. Letting $u_{i}(\V{r}_t, \boldsymbol{\phi}) = r_{it} - \phi_i r_{St}$, moment function $g(\V{r}_t, \boldsymbol{ \phi})$ encodes the condition that idiosyncratic shocks are uncorrelated
\begin{align*}
	g(\V{r}_t, \boldsymbol{\phi}) &= \begin{bmatrix}
		u_{1}(\V{r}_t, \boldsymbol{\phi}) u_{2}(\V{r}_t, \boldsymbol{\phi})& u_{1}(\V{r}_t, \boldsymbol{\phi}) u_{3}(\V{r}_t, \boldsymbol{\phi})& u_{2}(\V{r}_t, \boldsymbol{\phi}) u_{3}(\V{r}_t, \boldsymbol{\phi})
	\end{bmatrix}' 
\end{align*}
where 	$\EE[g(\V{r}_t, \boldsymbol{ \phi}_0)] = 0$ for true parameter $\boldsymbol{ \phi}_0$. The parameter-dependent weight matrix is $\widehat{W}(\boldsymbol{\phi}) = \textrm{diag}\big(\frac{1}{\widehat{\sigma}^2_1(\boldsymbol{\phi})\widehat{\sigma}^2_2(\boldsymbol{\phi})}, \frac{1}{\widehat{\sigma}^2_1(\boldsymbol{\phi})\widehat{\sigma}^2_3(\boldsymbol{\phi})}, \frac{1}{\widehat{\sigma}^2_2(\boldsymbol{\phi})\widehat{\sigma}^2_3(\boldsymbol{\phi})}\big)$ for $\widehat{\sigma}^2_i(\boldsymbol{\phi}) = \frac{1}{T} \sum_{t=1}^T u_{i}(\V{r}_t, \boldsymbol{\phi}) ^2$. Then, for parameter space $\boldsymbol{\Phi}$ where $\phi_S < 1$ for $\phi \in \Phi$, the robust granular instrumental variables (RGIV) estimator is defined as a continuously updating GMM estimator:
\begin{align}
	\widehat{\boldsymbol{ \phi}}^{RGIV} &= \arg\min_{\boldsymbol{\phi} \in \boldsymbol{\Phi}} \Big(\frac{1}{T} \sum^T_{t=1} g(\V{r}_t, \boldsymbol{ \phi})\Big)' \widehat{W}(\boldsymbol{ \phi}) \Big(\frac{1}{T} \sum^T_{t=1} g(\V{r}_t, \boldsymbol{ \phi})\Big). \label{ill:gmm}
\end{align}
$\widehat{W}(\boldsymbol{ \phi})$ is an efficient GMM weight matrix when idiosyncratic shocks are independent. Avoiding inversion of a potentially non-diagonal matrix, $\widehat{W}(\boldsymbol{ \phi})$ also ensures numerical stability if the initialization of  $\boldsymbol{\phi}$ is far from the GMM objective function's minimum.

Equivalently, the continuously updating GMM formulation minimizes the average squared correlation coefficients between pairs of estimated idiosyncratic shocks. To see this, define the estimated correlation coefficient of idiosyncratic shocks as $\widehat{\rho}_{ij}(\boldsymbol{\phi}) = \frac{\frac{1}{T} \sum_{t=1}^T u_{i}(\V{r}_t, \boldsymbol{\phi})u_{j}(\V{r}_t, \boldsymbol{\phi}) }{\sqrt{ \widehat{\sigma}_i^2(\boldsymbol{\phi})\widehat{\sigma}_j^2(\boldsymbol{\phi})} }$. Then, the RGIV estimator is
\begin{align*}
	\widehat{\boldsymbol{\phi}}^{RGIV} &= \arg \min_{\boldsymbol{\phi} \in \boldsymbol{\Phi}}  \frac{1}{3}\left[\widehat{\rho}_{12}(\boldsymbol{\phi})^2 + \widehat{\rho}_{13}(\boldsymbol{\phi})^2 + \widehat{\rho}_{23}(\boldsymbol{\phi})^2 \right]
\end{align*}
after multiplying the objective function in Equation \ref{ill:gmm} by $\frac{1}{3}$. Intuitively, RGIV chooses the spillover coefficient vector $\boldsymbol{\phi}$ that makes the estimated shocks the least correlated.

RGIV also admits an instrumental variables interpretation, as a country's spillover coefficient is estimated using information from internally estimated idiosyncratic shocks to other countries. The asymptotic variance matrix of the RGIV estimator  is $V = (G' \Sigma^{-1}G)^{-1}$  for Jacobian matrix $G = \EE[\nabla_{\phi} g(\V{r}_t, \boldsymbol{ \phi}_0)]$ and moment covariance matrix $\Sigma = \EE[g(\V{r}_t, \boldsymbol{ \phi}_0) g(\V{r}_t, \boldsymbol{ \phi}_0)']$. Then the diagonal entries of $V$ are
\begin{align*}
\operatorname{Avar}\left(\widehat{\phi}_i^{RGIV}\right)=\frac{\sigma_i^2}{\prod_{j \neq i}\left(S_j^2 \sigma_j^2\right)} \frac{\left(1-\phi_S\right)^2\left(S_1^2 \sigma_1^2+S_2^2 \sigma_2^2+S_3^2 \sigma_3^2\right)}{4} .
\end{align*}
The above display illustrates that the identifying variation of the RGIV estimator doesn't require skewness in the unit size distribution, as the identifying variation comes from individual idiosyncratic shocks; estimates for $\widehat{\phi}_i^{RGIV}$ are more precise when idiosyncratic shocks to countries $j \neq i$ have higher variance \alert{and are relatively large---giving the analogous ``relevance condition.''} \alert{Intuitively, identification can be viewed sequentially where shock uncorrelatedness can be viewed as the analogous ``exclusion restriction.''} Guessing the spillover coefficient of country 1, the resulting estimated idiosyncratic shock to country 1 can be used as an instrument for the estimation of the spillover coefficients for countries 2 and 3. If one or more pairs of estimated idiosyncratic shocks are too correlated, the procedure is repeated for a new spillover coefficient. \alert{In practice, the GMM objective function is minimized jointly.} 

\section{General model and main results}
\label{sec:general}
This section describes the assumptions needed for the robust GIV estimator for $n<\infty$ countries.

\subsection{Assumptions}

\begin{assumption}{(Baseline model)}
	 	
\label{ass:general}
\begin{enumerate}[label = (\roman*), ref= \theassumption(\roman*)]
\item \textbf{Model}: For fixed $n  \geq 3$ units, let known sizes $S_i \in (0,1)$ sum to 1.  Outcome $\V{r}_t =[r_{1t},\dots,r_{nt}]'$ responds to the size-aggregated outcome $r_{St}$ according to spillover coefficient $\phi_i$ and unobserved shocks $\V{u}_{t} = [u_{1t}, u_{2t}, \dots,u_{nt}]'$ \label{ass:general_reducedform}
	\begin{align*}
		r_{it} &= \phi_i r_{St} + u_{it}, \quad \forall i = 1,...,n, \quad \phi_S < 1.
	\end{align*}

\item \textbf{Shock moments}: For $\sigma^2_{i} > 0$, shocks $\V{u}_t$ are i.i.d. with moments $\EE(\V{u}_t) = 0$, $\EE(\V{u}_t \V{u}_t') =  \mathrm{diag}(\sigma^2_1, \dots,\sigma^2_n)$,  and $\EE(\| \V{u}_t\|^4) < \infty.$ Moreover, $u_{it}$ is independent of $u_{jt}$ for $i \neq j$.  \label{ass:general_shockmoments}

\item \textbf{Parameter space}: 
For spillover coefficient $\boldsymbol{\phi} = [\phi_1,\dots,\phi_n]'$,  the true parameter $\boldsymbol{\phi}_0$ is in the interior of parameter space $\boldsymbol{\Phi}$.  $\boldsymbol{\Phi}$ is compact and for any $\boldsymbol{\widetilde{\phi}}  \in \boldsymbol{\Phi}$,  $\widetilde{\phi}_S   < 1$. \label{ass:general_parameterspace}
\end{enumerate}
\end{assumption}

In Assumption \ref{ass:general_reducedform}, the outcome variable $r_{it}$ responds to the size-aggregated outcome $r_{St}$ according to spillover coefficient $\phi_i$ and idiosyncratic shock $u_{it}$. Through $\phi_S< 1$, positive idiosyncratic shocks increase the size-weighted outcome $r_{St}$. As will be outlined below, the RGIV estimator is just-identified for $n=3$ and over-identified for $n>3$. Moreover, size $S_i$ is taken to be \textit{known} by the econometrician. Hence, the model presented in \ref{ass:general_reducedform} can be modified to allow for time-varying size (for $S_{it} \in (0,1)$ and $\sum_{i = 1}^n S_{it}=1$) without changing the proofs to follow.  In Assumption \ref{ass:general_parameterspace}, the parameter space is assumed to be compact and restricted to encode the sign restriction of $\phi_S< 1$. Compactness is a standard technical assumption for the consistency of extremum estimators \citep{Newey1994}.

\subsection{RGIV estimator}
\label{sec:general_RGIVestimator}

The GMM estimator established in Definition \ref{defn:naive} below (called RGIV) exploits the uncorrelated unit-specific shock condition established in Assumption \ref{ass:general}(ii). RGIV is constructed as a continuously updating GMM estimator \citep{Hansen1996}.  The moment function $g(\V{r}_t, \boldsymbol{ \phi})$ contains the pairwise products of each of the estimated shocks, and the weight matrix inversely weights each moment by the product of the respective estimated shock variances. Exploiting shock uncorrelatedness can be best understood as being consistent with the tradition of exploiting second moment information in the traditional SVAR identification setting \citep{Kilian2017,Leeper1996}. 

\begin{definition}
	\label{defn:naive}
For  outcome variable $\V{r}_t = [r_{1t}, \dots,r_{nt}]'$, let $u_{i}(\V{r}_t, \boldsymbol{\phi}) = r_{it}  - \phi_i r_{St} $ for $i=1,...,n$.
The moment function $g(\V{r}_t, \boldsymbol{\phi})$  is
{\small
\begin{align*}
		g(\V{r}_t, \boldsymbol{\phi}) &= [u_{1}(\V{r}_t, \boldsymbol{\phi}) u_{2}(\V{r}_t, \boldsymbol{\phi}), \dots,\,u_{1}(\V{r}_t, \boldsymbol{\phi}) u_{n}(\V{r}_t, \boldsymbol{\phi}),\,u_{2}(\V{r}_t, \boldsymbol{\phi}) u_{3}(\V{r}_t, \boldsymbol{\phi}), \dots ,\text{}u_{n-1}(\V{r}_t, \boldsymbol{\phi}) u_{n}(\V{r}_t, \boldsymbol{\phi})]'.
\end{align*}
}
For $\widehat{\sigma}^2_i(\boldsymbol{\phi}) = \frac{1}{T} \sum_{t=1}^T u_{i}(\V{r}_t, \boldsymbol{\phi}) ^2$, define the sample weight matrix 
{ \small \begin{align*}
\widehat{W}(\boldsymbol{ \phi}) = \mathrm{diag}\Big(&\frac{1}{\widehat{\sigma}^2_1(\boldsymbol{\phi})\widehat{\sigma}^2_2(\boldsymbol{\phi})}, \dots,\,\frac{1}{\widehat{\sigma}^2_1(\boldsymbol{\phi}) \widehat{\sigma}^2_n(\boldsymbol{\phi})},\,\frac{1}{\widehat{\sigma}^2_2(\boldsymbol{\phi}) \widehat{\sigma}^2_3(\boldsymbol{\phi})}, \dots, \, \frac{1}{\widehat{\sigma}^2_{n-1}(\boldsymbol{\phi}) \widehat{\sigma}^2_n(\boldsymbol{\phi})} \Big).
\end{align*}}

Then, for GMM objective function $\widehat{Q}_T(\boldsymbol{\phi})=  \Big[\frac{1}{T} \sum_{t=1}^T g(\V{r}_t, \boldsymbol{\phi}) \Big]' \widehat{W}(\boldsymbol{\phi}) \Big[\frac{1}{T} \sum_{t=1}^T g(\V{r}_t, \boldsymbol{\phi})\Big]$, the \textbf{robust granular instrumental variables (RGIV)} estimator is \\$
\widehat{\boldsymbol{\phi}}^{RGIV} = \arg\min_{\boldsymbol{\phi} \in \boldsymbol{\Phi}} \widehat{Q}_T(\boldsymbol{\phi} )$.
\end{definition}

The RGIV estimator is robust in that the estimator allows for unit-specific spillover coefficient heterogeneity while also allowing for unknown shock variances and equal unit sizes. When the researcher is  \textit{a priori} certain of homogeneous spillover coefficients $\phi_i = \phi$, the RGIV moment vector can accommodate this parameter restriction by restricting the elements of parameter vector $\boldsymbol{ \phi}$ to be homogeneous across units yielding possible gains in efficiency. The assumption of parameter homogeneity is also testable as will be discussed in Section \ref{sec:testing}.

\begin{lemma}[Identification]
	\label{lemma:gmm_id}
	Impose Assumption \ref{ass:general}. For $g_0(\boldsymbol{ \phi}) = \EE[g(\V{r}, \boldsymbol{ \phi})]$, $g_0(\boldsymbol{ \phi}_0)= 0$ for the true parameter $\boldsymbol{\phi}_0$ and $g_0(\widetilde{\boldsymbol{ \phi}})  \neq 0$ for $\widetilde{\boldsymbol{ \phi}} \in \boldsymbol{\Phi}$ such that $\widetilde{\boldsymbol{ \phi}} \neq \boldsymbol{ \phi}_0$.
\end{lemma}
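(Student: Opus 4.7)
The first claim $g_0(\boldsymbol{\phi}_0) = 0$ is immediate from Assumption \ref{ass:general}\ref{ass:general_shockmoments}, because at the true parameter $u_{it}(\V{r}_t, \boldsymbol{\phi}_0)$ coincides with the mutually uncorrelated shock $u_{it}$. For the uniqueness direction, my plan is to parametrize any candidate $\widetilde{\boldsymbol{\phi}} \in \boldsymbol{\Phi}$ by the deviation $d_i := \phi_{0,i} - \widetilde{\phi}_i$ and to translate the system $g_0(\widetilde{\boldsymbol{\phi}}) = 0$ into an explicit algebraic system in the $d_i$'s. Substituting the reduced form $r_{St} = u_{St}/(1-\phi_{0,S})$ into $u_{it}(\V{r}_t, \widetilde{\boldsymbol{\phi}}) = d_i r_{St} + u_{it}$ and computing $\EE[r_{St} u_{it}] = S_i \sigma_i^2/(1-\phi_{0,S}) =: c_i$ along with $\EE[r_{St}^2] =: V$, each pairwise moment condition collapses to the quadratic equation $V d_i d_j + c_i d_j + c_j d_i = 0$ for every $i \neq j$.

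With this system in hand, the strategy is a case split on $I := \{i : d_i \neq 0\}$. Note that $c_i > 0$ for every $i$ by Assumption \ref{ass:general}\ref{ass:general_shockmoments} together with the restriction $\phi_{0,S} < 1$. If there exists a pair with $d_i = 0$ but $d_j \neq 0$, the associated moment reduces to $c_i d_j = 0$, an immediate contradiction; hence either $I = \emptyset$---in which case $\widetilde{\boldsymbol{\phi}} = \boldsymbol{\phi}_0$ and we are done---or $I = \{1, \dots, n\}$. In the latter case, because $n \geq 3$, I pick three distinct indices, divide the three corresponding equations through by $d_i d_j$, and conclude from the resulting linear system $c_i/d_i + c_j/d_j = -V$ (over the three pairs) that $c_i/d_i = -V/2$ uniformly in $i$, so that $d_i = -2c_i/V$ for every $i$.

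The crucial remaining step is to rule out this ``reflection'' solution using Assumption \ref{ass:general}\ref{ass:general_parameterspace}. Direct computation using $V = \sum_k S_k^2 \sigma_k^2/(1-\phi_{0,S})^2$ gives $\sum_i S_i d_i = -2(1-\phi_{0,S})$, so $\widetilde{\phi}_S = \phi_{0,S} - \sum_i S_i d_i = 2 - \phi_{0,S} > 1$, contradicting $\widetilde{\boldsymbol{\phi}} \in \boldsymbol{\Phi}$. The main obstacle---and the point that makes the argument non-trivial---is recognizing that the covariance restrictions alone admit a spurious second root in addition to $\boldsymbol{\phi}_0$, and that the sign restriction $\phi_S < 1$ embedded in the parameter space is exactly what selects the correct root. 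This is analogous to standard sign normalizations used in SVAR identification via second-moment restrictions.
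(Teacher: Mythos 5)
Your proof is correct and follows essentially the same route as the paper: reduce each pairwise moment condition to the quadratic $V d_i d_j + c_i d_j + c_j d_i = 0$, identify the two roots $d\equiv 0$ and the reflection $d_i = -2c_i/V$, and rule out the latter because it forces $\widetilde{\phi}_S = 2-\phi_{0,S} > 1$, violating Assumption \ref{ass:general_parameterspace}. Your case split on the set of nonzero deviations is a slightly tidier way of organizing the algebra than the paper's substitution of every deviation in terms of $\phi_1 - \widetilde{\phi}_1$ (it avoids dividing by an expression that could in principle vanish), but the substance is identical.
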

\begin{proof}
	See Appendix \ref{lemma:gmm_id_proof}.
\end{proof}

For Lemma \ref{lemma:gmm_id}, the parameter space restriction $\widetilde{\phi}_S  < 1$ in Assumption \ref{ass:general_parameterspace} rules out the false solution to the population moment condition. The proof of Lemma \ref{lemma:gmm_id} shows that there is a second parameter $\check{\boldsymbol{ \phi}}$ such that $g_0(\check{\boldsymbol{ \phi}}) = 0$. However, $\check{\boldsymbol{ \phi}}$ is not in the parameter space because $\check{\phi}_S = 1 + (1-\phi_S) > 1$ and thus is not a candidate solution.  In economic terms, Assumption \ref{ass:general_parameterspace} represents the researcher's knowledge of the sign of an idiosyncratic shock's effect on the size-weighted outcome $r_{St}$. This knowledge can come from an application's institutional details; for the case of the running example of Euro area yield spread spillovers, a positive idiosyncratic shock to one country gives rise to an increase in aggregated yield spreads since losses from the default of government debt are partially shared.  The identification lemma can also be adapted for the opposite case $\phi_S>1$ by reversing the inequality specified in Assumption \ref{ass:general_parameterspace} to $\widetilde{\phi}_S > 1$. 

Mimicking Assumption 1 of \cite{Gabaix2024}, the Lemma is derived under the weaker condition that the correlation structure of shocks is known (see Condition (ii') in Appendix \ref{lemma:gmm_id_proof} for details). Theorems \ref{thm:gmm_consistency} and \ref{thm:gmm_asymptotic_normality} (below), which establish consistency and asymptotic normality of the RGIV estimator, are shown under the assumption of uncorrelated shocks but can analogously be adapted to the known shock correlation case. These results follow from a standard application of arguments provided in \cite{Pakes1989}. 

\begin{theorem}[Consistency of RGIV]
	\label{thm:gmm_consistency}
	Impose Assumption \ref{ass:general}. The RGIV estimator is consistent $\widehat{\boldsymbol{\phi}}^{RGIV} \xrightarrow{p} \boldsymbol{\phi}_0$ for the true parameter $\boldsymbol{\phi}_0$ as $T \to \infty$.
\end{theorem}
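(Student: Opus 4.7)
The plan is to apply a standard consistency theorem for extremum estimators, such as Theorem 2.1 of Newey and McFadden (1994) or the arguments in Pakes and Pollard (1989). The three things I need to verify are: (i) uniform convergence of the sample objective $\widehat{Q}_T$ to a population objective $Q_0$ over the compact space $\boldsymbol{\Phi}$; (ii) uniqueness of the minimizer of $Q_0$ at $\boldsymbol{\phi}_0$; and (iii) continuity of $Q_0$. Items (ii) and (iii) are the easy parts: continuity of $Q_0$ follows because $g_0(\boldsymbol{\phi})$ and $W_0(\boldsymbol{\phi})$ are smooth functions of $\boldsymbol{\phi}$, and the unique minimizer property will combine Lemma \ref{lemma:gmm_id} with positive definiteness of the limiting weight matrix.

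First, I would write out the population analogues explicitly. Let $g_0(\boldsymbol{\phi}) = \EE[g(\V{r}_t, \boldsymbol{\phi})]$ and define the population weight matrix $W_0(\boldsymbol{\phi}) = \mathrm{diag}(\EE[g(\V{r}_t, \boldsymbol{\phi})^{\circ 2}])^{-1}$, giving the limiting objective $Q_0(\boldsymbol{\phi}) = g_0(\boldsymbol{\phi})' W_0(\boldsymbol{\phi}) g_0(\boldsymbol{\phi})$. Each entry of $g(\V{r}_t, \boldsymbol{\phi})$ has the form $(r_{it} - \phi_i r_{St})(r_{jt} - \phi_j r_{St})$, a polynomial of degree two in $\boldsymbol{\phi}$ with coefficients that are quadratic in $\V{r}_t$. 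By Assumption \ref{ass:general_reducedform}, $r_{St} = u_{St}/(1-\phi_{S,0})$, so all data-based coefficients have finite second moments, hence $g(\V{r}_t, \boldsymbol{\phi})^{\circ 2}$ has finite expectation. Moreover, for any $\boldsymbol{\phi} \in \boldsymbol{\Phi}$ and each pair $(i,j)$, $\EE[(u_{it}(\V{r}_t,\boldsymbol{\phi}) u_{jt}(\V{r}_t,\boldsymbol{\phi}))^2]$ is strictly positive (it equals $\sigma_i^2 \sigma_j^2$ at $\boldsymbol{\phi}_0$ plus additional non-negative contributions when $\boldsymbol{\phi} \neq \boldsymbol{\phi}_0$), so $W_0(\boldsymbol{\phi})$ exists and is positive definite uniformly on the compact set $\boldsymbol{\Phi}$.

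Next, I would establish the uniform convergence of sample means. Since $\V{u}_t$ is i.i.d. with finite fourth moments (Assumption \ref{ass:general_shockmoments}), $\V{r}_t$ is i.i.d. with finite fourth moments as well, making $g(\V{r}_t,\boldsymbol{\phi})$ and $g(\V{r}_t,\boldsymbol{\phi})^{\circ 2}$ have uniformly (in $\boldsymbol{\phi} \in \boldsymbol{\Phi}$) integrable envelopes by compactness of $\boldsymbol{\Phi}$ and the polynomial structure. A standard uniform law of large numbers (e.g., Lemma 2.4 of Newey and McFadden) then yields
\begin{equation*}
\sup_{\boldsymbol{\phi} \in \boldsymbol{\Phi}} \left\| \frac{1}{T}\sum_{t=1}^T g(\V{r}_t,\boldsymbol{\phi}) - g_0(\boldsymbol{\phi}) \right\| \xrightarrow{p} 0, \quad \sup_{\boldsymbol{\phi} \in \boldsymbol{\Phi}} \left\| \frac{1}{T}\sum_{t=1}^T g(\V{r}_t,\boldsymbol{\phi})^{\circ 2} - \EE[g(\V{r}_t,\boldsymbol{\phi})^{\circ 2}] \right\| \xrightarrow{p} 0.
\end{equation*}
Combined with the uniform lower bound on $\EE[g(\V{r}_t,\boldsymbol{\phi})^{\circ 2}]$, the continuous mapping theorem gives $\sup_{\boldsymbol{\phi}} \|\widehat{W}(\boldsymbol{\phi}) - W_0(\boldsymbol{\phi})\| \xrightarrow{p} 0$, and hence $\sup_{\boldsymbol{\phi}} |\widehat{Q}_T(\boldsymbol{\phi}) - Q_0(\boldsymbol{\phi})| \xrightarrow{p} 0$.

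Finally, I invoke Lemma \ref{lemma:gmm_id}: $g_0(\boldsymbol{\phi}_0) = 0$ and $g_0(\widetilde{\boldsymbol{\phi}}) \neq 0$ for any other $\widetilde{\boldsymbol{\phi}} \in \boldsymbol{\Phi}$. Since $W_0(\boldsymbol{\phi})$ is positive definite, $Q_0(\boldsymbol{\phi}_0) = 0$ and $Q_0(\widetilde{\boldsymbol{\phi}}) > 0$ elsewhere, so $\boldsymbol{\phi}_0$ is the unique minimizer. Compactness of $\boldsymbol{\Phi}$ together with continuity of $Q_0$ delivers a well-separated minimum, and the standard extremum estimator consistency argument produces $\widehat{\boldsymbol{\phi}}^{RGIV} \xrightarrow{p} \boldsymbol{\phi}_0$. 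The main obstacle is verifying uniform convergence of the weight matrix, which requires both the uniform bound on the denominator $\EE[g(\V{r}_t,\boldsymbol{\phi})^{\circ 2}]$ away from zero and uniform integrability of the squared moment envelopes; both follow cleanly from the polynomial structure of $g$ and the fourth-moment condition in Assumption \ref{ass:general_shockmoments}.
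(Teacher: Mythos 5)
Your proposal is correct and lands on the same broad template as the paper (standard GMM consistency machinery plus Lemma \ref{lemma:gmm_id} for identification, compactness, and a quadratic-in-shocks dominance bound), but it handles the continuously-updating weight matrix by a genuinely different route. You prove uniform convergence of the full objective $\widehat{Q}_T(\boldsymbol{\phi})$ to $Q_0(\boldsymbol{\phi}) = g_0(\boldsymbol{\phi})'W_0(\boldsymbol{\phi})g_0(\boldsymbol{\phi})$, which requires showing $\sup_{\boldsymbol{\phi}}\|\widehat{W}(\boldsymbol{\phi}) - W_0(\boldsymbol{\phi})\| \xrightarrow{p} 0$ and hence a uniform lower bound on the denominators $\EE[g(\V{r}_t,\boldsymbol{\phi})^{\circ 2}]$ over $\boldsymbol{\Phi}$. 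The paper instead argues in two steps: it first verifies consistency for the identity-weighted GMM estimator via Theorem 2.6 of Newey and McFadden (equivalently Theorem 3.1 of Pakes and Pollard), and then transfers the result to the data-dependent weight matrix via Lemma 3.4 of Pakes and Pollard, which only requires $\|\widehat{A}(\boldsymbol{\phi}_0)\| = O_p(1)$ and $\sup_{\boldsymbol{\phi}}\|\widehat{A}(\boldsymbol{\phi})^{-1}\| = O_p(1)$ for the Cholesky-type factor $\widehat{A}$ of $\widehat{W}$ --- conditions that follow directly from the LLN and the dominance bound, with no uniform convergence of the weight matrix needed. The practical difference is that your route asks for slightly more: the claim that $\EE[(u_{it}(\V{r}_t,\boldsymbol{\phi})u_{jt}(\V{r}_t,\boldsymbol{\phi}))^2]$ is bounded away from zero uniformly on $\boldsymbol{\Phi}$ is stated a bit loosely (``$\sigma_i^2\sigma_j^2$ plus non-negative contributions'' is not quite right under mere uncorrelatedness, since $\EE[u_{it}^2u_{jt}^2]$ need not factor and cross terms in the expansion need not be non-negative), and it deserves an explicit continuity-plus-compactness argument; by contrast the paper's Lemma 3.4 conditions only involve boundedness at $\boldsymbol{\phi}_0$ and an upper bound on the sample second moments. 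Your approach buys a self-contained, textbook-style proof with an explicit population objective; the paper's buys weaker side conditions on the weight matrix at the cost of invoking a less standard transfer lemma.
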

\begin{proof}
	See Appendix \ref{thm:gmm_consistency_proof}.
\end{proof}

\begin{theorem}[Asymptotic normality of RGIV]
	\label{thm:gmm_asymptotic_normality}
	Impose Assumption \ref{ass:general}. The RGIV estimator is asymptotically normal 
	\begin{align*}
		\sqrt{T}(\widehat{\boldsymbol{ \phi}}^{RGIV} - \boldsymbol{ \phi}_0) \xrightarrow{d} \nn(0, (G'WG)^{-1}G'W \Sigma W G (G'WG)^{-1} )
	\end{align*}
	for RGIV population weight matrix $W =  \mathrm{diag}(\frac{1}{\sigma_1^2 \sigma_2^2},\dots, \frac{1}{\sigma_1^2 \sigma_n^2}, \frac{1}{\sigma_2 ^2\sigma_3^2},\dots, \frac{1}{\sigma_{n-1}^2 \sigma_n^2})$, moment covariance matrix $\Sigma = \EE[g(\V{r}_t, \boldsymbol{ \phi}_0)g(\V{r}_t, \boldsymbol{ \phi}_0)']$ and $G = \EE[\nabla_{\boldsymbol{\phi}} g(\V{r}_t, \boldsymbol{ \phi}_0)]$ as $T\to\infty$.
\end{theorem}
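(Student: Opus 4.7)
The plan is to follow the standard continuously-updating GMM asymptotic normality argument, as formalized by Theorem 3.2 of \cite{Pakes1989}, using consistency (Theorem \ref{thm:gmm_consistency}) as an input and verifying the remaining regularity conditions specific to the RGIV moment structure. The target is the usual linearization $\sqrt{T}(\widehat{\boldsymbol{\phi}}^{RGIV} - \boldsymbol{\phi}_0) = -(G'WG)^{-1} G'W \sqrt{T}\, g_T(\boldsymbol{\phi}_0) + o_p(1)$, where $g_T(\boldsymbol{\phi}) = T^{-1} \sum_t g(\V{r}_t, \boldsymbol{\phi})$, after which a CLT on $\sqrt{T}\, g_T(\boldsymbol{\phi}_0)$ yields the sandwich formula via the continuous mapping theorem.

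First I would verify the primitive smoothness and convergence conditions. The true parameter lies in the interior of $\boldsymbol{\Phi}$ by Assumption \ref{ass:general_parameterspace}, and each entry of $g(\V{r}_t, \boldsymbol{\phi})$ is a degree-two polynomial in $\boldsymbol{\phi}$, so $g$ and $\nabla_{\boldsymbol{\phi}} g$ are smooth and dominated on compact $\boldsymbol{\Phi}$ by quadratic and linear polynomials in $\V{r}_t$, respectively. Because $r_{St} = u_{St}/(1-\phi_S)$ is linear in $\V{u}_t$ with coefficients bounded uniformly on $\boldsymbol{\Phi}$ (thanks to $\phi_S < 1$ and compactness), the finite fourth moment of $\V{u}_t$ in Assumption \ref{ass:general_shockmoments} furnishes an integrable envelope, and a standard uniform law of large numbers then gives uniform convergence of $g_T(\boldsymbol{\phi})$, of $\nabla_{\boldsymbol{\phi}} g_T(\boldsymbol{\phi})$, and of $\widehat{W}(\boldsymbol{\phi})$ to their respective continuous population limits, with the last converging to $W(\boldsymbol{\phi})$ satisfying $W(\boldsymbol{\phi}_0)=W$ as stated.

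Next, the CUE first-order condition at the interior minimizer is $\nabla_{\boldsymbol{\phi}} g_T(\widehat{\boldsymbol{\phi}})'\, \widehat{W}(\widehat{\boldsymbol{\phi}})\, g_T(\widehat{\boldsymbol{\phi}}) + R_T = 0$, where $R_T$ bundles the terms arising from differentiating $\widehat{W}(\boldsymbol{\phi})$ in $\boldsymbol{\phi}$. Each such term is quadratic in $g_T(\widehat{\boldsymbol{\phi}})$; since $g_T(\widehat{\boldsymbol{\phi}}) = O_p(T^{-1/2})$ (by consistency, the Jacobian ULLN, and a CLT at $\boldsymbol{\phi}_0$), $R_T$ is $o_p(T^{-1/2})$ and drops out of the first-order analysis. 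A mean-value expansion of $g_T(\widehat{\boldsymbol{\phi}})$ around $\boldsymbol{\phi}_0$, combined with the uniform Jacobian convergence and continuity of $W(\boldsymbol{\phi})$, then delivers the linear representation above, and the Lindeberg--L\'evy CLT applies to the i.i.d.\ mean-zero sequence $\{g(\V{r}_t, \boldsymbol{\phi}_0)\}$ (its covariance $\Sigma$ is finite because $g$ is quadratic in $\V{r}_t$ and $\V{u}_t$ has a finite fourth moment).

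The step I expect to be the main obstacle is verifying that the Jacobian $G = \EE[\nabla_{\boldsymbol{\phi}} g(\V{r}_t, \boldsymbol{\phi}_0)]$ has full column rank, which is what makes $(G'WG)^{-1}$ well defined. Global identification (Lemma \ref{lemma:gmm_id}) rules out any other root of the moment condition in $\boldsymbol{\Phi}$ but does not by itself preclude a local rank deficiency at $\boldsymbol{\phi}_0$. Establishing the rank condition requires computing $G$ explicitly from the product structure of $g$; each row corresponds to a pair $(i,j)$ and, after passing the expectation through and using shock independence, has entries proportional to $S_k \sigma_i^2$ or $S_k \sigma_j^2$ divided by $1-\phi_S$. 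Exploiting $S_i \in (0,1)$, $\sigma_i^2 > 0$, and $\phi_S < 1$, I would exhibit an $n \times n$ submatrix with nonzero determinant to conclude that $G$ has full column rank at $\boldsymbol{\phi}_0$, which in turn certifies the sandwich variance as stated.
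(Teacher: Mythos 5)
Your proposal is correct and follows essentially the same route as the paper: verify the standard GMM regularity conditions (interiority, smoothness, a dominance condition obtained from the quadratic-in-$\V{u}_t$ structure of $g$ and the finite fourth moment, a Lindeberg--L\'evy CLT for the i.i.d.\ mean-zero moments), and establish invertibility of $G'WG$ by computing $G$ explicitly and exhibiting an $n\times n$ submatrix with nonzero determinant---the paper does exactly this, reducing $G'_{1:n,1:n}$ by row operations to an upper-triangular matrix with determinant $2(1-\phi_S)^{-n}(S_1\sigma_1^2)^{n-2}S_2S_3\sigma_2^2\sigma_3^2\neq 0$. The one place you diverge is the treatment of the continuously-updating weight matrix: the paper factors $\widehat W(\boldsymbol{\phi})=\widehat A(\boldsymbol{\phi})'\widehat A(\boldsymbol{\phi})$ and invokes Lemma 3.5 of \cite{Pakes1989}, which only requires uniform convergence of $\widehat A(\boldsymbol{\phi})$ to $A$ over shrinking neighborhoods of $\boldsymbol{\phi}_0$, whereas you expand the CUE first-order condition directly and discard the terms arising from $\nabla_{\boldsymbol{\phi}}\widehat W$ as $o_p(T^{-1/2})$ because they are quadratic in $g_T(\widehat{\boldsymbol{\phi}})=O_p(T^{-1/2})$. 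Both are standard and valid; your version is more self-contained about why continuous updating is first-order irrelevant, but it leans on the intermediate rate $g_T(\widehat{\boldsymbol{\phi}})=O_p(T^{-1/2})$, which is cleanest to obtain from $\widehat Q_T(\widehat{\boldsymbol{\phi}})\le\widehat Q_T(\boldsymbol{\phi}_0)=O_p(T^{-1})$ together with the eigenvalues of $\widehat W(\boldsymbol{\phi})$ being bounded away from zero in probability, rather than from ``consistency and a CLT at $\boldsymbol{\phi}_0$'' alone. One small correction: computing $G$ needs only the uncorrelatedness in Assumption \ref{ass:general_shockmoments} (each nonzero entry is $\EE[-r_{St}u_{jt}]=-S_j\sigma_j^2/(1-\phi_S)$), not full shock independence.
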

\begin{proof}
	See Appendix \ref{thm:gmm_asymptotic_normality_proof}.
\end{proof}

\alert{Granularity ($S_i > 0$) and non-zero shock variances can loosely be viewed as the corresponding ``relevance conditions'' for the RGIV setting. These two conditions help guarantee that  $G'WG$ is full rank, ensuring that idiosyncratic shocks contribute to fluctuations in the endogenous variable $r_{St}$ and can be exploited as identifying variation.} 

\alert{The diagonal weight matrix permits the RGIV estimator to be interpreted as the $\boldsymbol{\phi}$ that minimizes the average squared correlation coefficient between shocks. In addition, $\widehat{W}(\widehat{\boldsymbol{ \phi}}^{RGIV})$ is an efficient sample weight matrix under shock independence\footnote{Under shock independence, the off-diagonal elements of $W$ are zero and the on-diagonal elements are multiplicatively separable i.e. that $\EE(u_{it}^2 u_{jt}^2) = \EE(u_{it}^2)\EE( u_{jt}^2)$ for $i \neq j$.} and helps ensure the stability of the GMM objective function when $T$ is relatively small. In practice, the procedure is implemented in MATLAB using the constrained nonlinear optimizer \texttt{fmincon()}. We recommend comparing objective function values from multiple starting points to help verify convergence to a global optimum.} \alert{Under the weaker condition of cross-sectional shock uncorrelatedness rather than shock independence, the RGIV estimator is consistent and asymptotically normal. However, the GMM weight matrix  $\widehat{W}(\widehat{\boldsymbol{ \phi}}^{RGIV})$ would no longer converge in probability to the efficient one since higher order dependence (like that arising from a shared volatility term) is permissible. Here, $\widehat{\boldsymbol{\phi}}^{RGIV}$ can still be used as a plug-in estimator for the efficient GMM weight matrix.\footnote{\label{footnote:higher_order_efficient}The following gives an example of one such procedure. First, RGIV can be used to obtain a preliminary (consistent) estimate of the spillover coefficient. Second, GMM can be used to estimate $\widehat{\boldsymbol{ \phi}}^\text{Step 2}$ with sample GMM weight matrix $\widehat{W}^\text{Step 2} =  [\frac{1}{T} \sum_{t=1}^T g(\V{r}_t, \widehat{\boldsymbol{ \phi}}^{RGIV})g(\V{r}_t, \widehat{\boldsymbol{ \phi}}^{RGIV})']^{-1}$.}\footnote{Serial correlation is another form of dependence of interest for applied work. Here, Theorem \ref{thm:gmm_asymptotic_normality} can be adapted to serially correlated shocks by exchanging the currently used central limit theorem for independent and identically distributed data for one applicable to data with dependence, like those found in \cite{davidson1994stochastic}.}}

\subsection{GIV under spillover coefficient heterogeneity}

In general, the baseline \cite{Gabaix2024} GIV estimand $\phi^{GIV}$ under spillover coefficient heterogeneity cannot be interpreted as a positive-weighted average of unit-specific spillover coefficients. From Equation  \ref{equation:GK_contamination}, the spillover coefficient homogeneity assumption is necessary for the instrumental variables exclusion restriction to hold, as it ensures the  endogenous term $r_{St}$ is differenced away. When shock variances are homogeneous across countries, Proposition \ref{prop:GIV_het_elasticities} (below) decomposes  the GIV estimand into an equal-weighted spillover coefficient term and a term that depends on $(\phi_S - \phi_E)$. The bias is larger when the spillover coefficient varies systematically with the size distribution, giving rise to a larger gap between $\phi_S$ and $\phi_E$. Moreover, as also discussed in Appendix D.8 of \cite{Gabaix2024}, the bias is smaller when the number of units is large.

\begin{proposition}
	\label{prop:GIV_het_elasticities}
	Impose Assumption \ref{ass:general} and  shock variance homogeneity ($\sigma^2_i = \sigma^2$). Then the GIV estimand is $\phi^{GIV}= \frac{\EE(z_t r_{Et})}{\EE(z_t r_{St})} = \phi_E + \frac{\phi_S - \phi_E}{n} \cdot \frac{1}{\frac{\phi_S - \phi_E}{1- \phi_S}[\sum_{i=1}^n S_i^2] -\frac{1}{n} + \sum_{i=1}^n S_i^2}.$

\end{proposition}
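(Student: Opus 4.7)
The plan is to explicitly evaluate both the numerator $\EE(z_t r_{Et})$ and denominator $\EE(z_t r_{St})$ of the GIV estimand in terms of the primitive quantities $\phi_S, \phi_E, \sum_i S_i^2$ and $\sigma_U^2$, then form the ratio and simplify. The workhorse is the reduced-form multiplier from Equation \ref{eq:ill_multiplier}, which gives $r_{St} = u_{St}/(1-\phi_S)$, together with Equation \ref{equation:GK_contamination}, which expresses the instrument as $z_t = (\phi_S - \phi_E) r_{St} + u_{St} - u_{Et}$. Under Assumption \ref{ass:general_shockmoments} and the added homogeneity $\sigma_i^2 = \sigma_U^2$, the only shock second-moment inputs I will need are $\EE(u_{St}^2) = \sigma_U^2 \sum_{i=1}^n S_i^2$, $\EE(u_{Et}^2) = \sigma_U^2/n$, and the cross term $\EE(u_{St} u_{Et}) = \sigma_U^2/n$ (using $\sum_i S_i = 1$).

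First I would compute $\EE(z_t r_{St})$. Substituting the expression for $z_t$ and using $r_{St} = u_{St}/(1-\phi_S)$ reduces this to a linear combination of $\EE(r_{St}^2)$, $\EE(u_{St} r_{St})$, and $\EE(u_{Et} r_{St})$, each of which is directly read off from the shock-moment inputs above. Collecting terms yields
\begin{equation*}
\EE(z_t r_{St}) = \frac{\sigma_U^2}{1-\phi_S}\left[\frac{\phi_S - \phi_E}{1-\phi_S}\sum_{i=1}^n S_i^2 + \sum_{i=1}^n S_i^2 - \frac{1}{n}\right].
\end{equation*}
Next I would handle the numerator by writing $r_{Et} = \phi_E r_{St} + u_{Et}$, so that
\begin{equation*}
\EE(z_t r_{Et}) = \phi_E\, \EE(z_t r_{St}) + \EE(z_t u_{Et}).
\end{equation*}
The residual term $\EE(z_t u_{Et})$ again expands into three pieces using the formula for $z_t$; the terms $\EE(u_{St} u_{Et})$ and $\EE(u_{Et}^2)$ are both $\sigma_U^2/n$ and cancel, leaving $\EE(z_t u_{Et}) = (\phi_S - \phi_E)\sigma_U^2/[n(1-\phi_S)]$.

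Assembling, the ratio gives $\phi^{GIV} = \phi_E + \EE(z_t u_{Et})/\EE(z_t r_{St})$. The common factor $\sigma_U^2/(1-\phi_S)$ cancels between numerator and denominator of the second term, leaving exactly the expression in the proposition. The only genuine obstacle is algebraic bookkeeping: being careful with the identity $\EE(u_{St} u_{Et}) = \EE(u_{Et}^2) = \sigma_U^2/n$, which produces the crucial cancellation that makes $\EE(z_t u_{Et})$ proportional to $\phi_S - \phi_E$ (and hence makes the homogeneous-coefficient limit $\phi^{GIV} = \phi$ transparent), and tracking the $1-\phi_S$ factors through the cancellation so that the denominator collapses to the displayed form.
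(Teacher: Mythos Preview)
Your proposal is correct and follows essentially the same route as the paper: both compute $\EE(z_t r_{St})$ directly, use $r_{Et}=\phi_E r_{St}+u_{Et}$ to write the estimand as $\phi_E+\EE(z_t u_{Et})/\EE(z_t r_{St})$, and obtain $\EE(z_t u_{Et})$ via the cancellation $\EE(u_{St}u_{Et})=\EE(u_{Et}^2)=\sigma_U^2/n$. The paper just presents the cancellation as a separate precomputation $\EE[(u_{St}-u_{Et})u_{Et}]=0$, but the substance is identical.
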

\begin{proof}
	See Appendix \ref{prop:GIV_het_elasticities_proof}.
\end{proof}

Proposition \ref{prop:GIV_het_elasticities}  implies that the GIV estimand doesn't admit a weighted average interpretation of unit-specific spillover coefficients. To see this, consider the following example. Suppose $n=3, \, \V{S} = \begin{bmatrix}
	0.2 & 0.3 & 0.5
\end{bmatrix}',$ and $\boldsymbol{ \phi} = \begin{bmatrix}
0.6 & 0.3 & 0.3 
\end{bmatrix}'$. Applying Proposition \ref{prop:GIV_het_elasticities},  $ \phi^{GIV} =-0.18 \not\in [0.3,0.6]$, so $\phi^{GIV}$ is not a positive-weighted average of individual spillover coefficients. Section \ref{sec:simulation} investigates the practical implications of GIV under spillover coefficient heterogeneity using an empirically relevant DGP.

While Proposition \ref{prop:GIV_het_elasticities} highlights the potential pitfalls from mistakenly applying the baseline GIV estimator, \cite{Gabaix2024} also gives guidance for estimating unit-specific spillover coefficients under additional restrictions. Their Proposition 6 and Appendix D.9 give alternative procedures that require heterogeneity to depend on observables and for the shock variance to be known, respectively. These restrictions would be unnecessary for RGIV.

\section{Testing}
\label{sec:testing}
In this section, I propose two tests derived from standard GMM results \citep{Newey1994}: a test of over-identifying restrictions that evaluates the uncorrelatedness of idiosyncratic shocks and a test that evaluates the homogeneous spillover coefficient condition of \cite{Gabaix2024}. For what follows, impose Assumption \ref{ass:general}.

\subsection{RGIV specification test}

The uncorrelatedness of  idiosyncratic shocks is testable when there are four or more units, as the number of moment conditions exceeds the number of unit-specific spillover coefficients. Recall that the RGIV estimator is a GMM estimator for moment function $g(\V{r}_t, \boldsymbol{ \phi})$, which encodes the pairwise uncorrelatedness of  idiosyncratic shocks. For $n\geq 4$ countries, the number of moments exceeds the number of estimated parameters allowing for the use of the Sargan--Hansen test. The null hypothesis $H_0{:}\, \, \EE[g(\V{r}_t, \boldsymbol{ \phi}_0)] = \V{0}$ for true parameter $\boldsymbol{\phi}_0$ is rejected for large values of the $J$-statistic $J_T =T\cdot  \widehat{Q}_T(\widehat{\boldsymbol{ \phi}}^{RGIV})$. Intuitively, $J_T$ is large when one or more pairs of estimated idiosyncratic shocks are correlated. \alert{The Sargan--Hansen test is subject to the usual caveat: failure to reject should not be interpreted as proof of correct specification.}

\subsection{Spillover coefficient homogeneity test}

Spillover coefficient homogeneity across units is not only a subject of potential substantive interest---particularly given its role in the \cite{Gabaix2024} GIV estimator---but also can be formally evaluated within the framework of RGIV. Since the null hypothesis of coefficient homogeneity $H_0{:}\,\, \phi_1=\phi_2=...=\phi_n$ is a special case of RGIV's unit-specific spillover coefficients, $H_0$ can be tested with the distance metric test. Here, estimator $\overline{\boldsymbol{\phi}}$ minimizes  $\widehat{Q}_T(\boldsymbol{\phi})$ subject to the constraints of null hypothesis $H_0$. Then, spillover coefficient homogeneity is rejected when the distance metric test statistic $DM_T = T(\widehat{Q}_T(\overline{\boldsymbol{\phi}}) - \widehat{Q}_T(\widehat{\boldsymbol{\phi}}^{RGIV}) )$ is large since $DM_T \xrightarrow{d} \chi^2_{n-1}$ under the null hypothesis. A large value of $DM_T$ indicates that the constraints of null hypothesis $H_0$ bind.

\section{Extensions to additional explanatory variables}
\label{sec:extensions}

Motivated by the requirements of empirical applications, this section discusses two extensions that relax the condition of uncorrelated idiosyncratic shocks discussed in Section \ref{sec:general} through observed and unobserved explanatory variables. I show that when observable time-varying explanatory variables determine the shocks' correlation structure, RGIV can be used after residualizing the outcome variables with respect to these observables. When the correlation structure is instead determined by unobserved factors with unknown loadings, the global identification condition fails.

\subsection{RGIV with observed explanatory variables}
\label{sec:extensions_controls}

In practice, observable characteristics can determine the correlation structure among shocks as described by the following two situations. First, the outcome variable could have unit-specific exposures to a particular observed variable---take country-specific exposures to the USD-EUR exchange rate in the Euro area sovereign yields example. Second, observable characteristics could also be used to account for a correlation structure driven by unobserved explanatory variables---like country-specific exposures to a ``global financial conditions'' factor---so long as such unobserved factors are in the span of the observed explanatory variables.

Observed variable $\V{x}_{it}$ ($k \times 1$) affects outcome variable $r_{it}$ through a direct effect and an indirect effect. Modifying Assumption \ref{ass:general_reducedform}, unit-specific coefficients $\boldsymbol{\beta}_i$ determine the cross-sectional correlation of $v_{it}$ 
\begin{align}
	r_{it} &= \phi_i r_{St} + \underbrace{\boldsymbol{\beta}_i' \V{x}_{it} + u_{it}}_{v_{it}}, \quad \V{x}_{it} \indep u_{it}, \quad i = 1,\dots,n
	\label{eq:RGIV_controls_model}
\end{align}
where $u_{it}$ is still idiosyncratic in the sense that $u_{it}$ is independent of $u_{jt}$ for $i \neq j$. The orthogonality condition $u_{it} \indep \V{x}_{it}$ can be interpreted as a ``selection-on-observables'' assumption; the cross-sectional correlation of $v_{it}$ is entirely determined by the observed explanatory variables. Holding $r_{St}$ and $u_{it}$ constant, $\boldsymbol{{\beta}}_i$ can be interpreted as the direct effect of $\V{x}_{it}$ on $r_{it}$. Additionally, mediated through changes in $r_{St} = \frac{1}{1-\phi_S}[\sum_{i=1}^n S_i \boldsymbol{\beta}'_i \V{x}_{it} + u_{St}]$,  there also exists an indirect effect of $\V{x}_{it}$ on $r_{it}$. \alert{The assumption of unit-specific observables $\V{x}_{it}$ nests two notable cases. The first is of a common observed factor ($\V{x}_{it} = \V{x}_t$) where the cross-sectional correlation in $v_{it}$ is induced by the unit-specific loadings $\boldsymbol{\beta}_i$. The second is of an observed group-specific factor ($\V{x}_{it} = \V{x}_{g(i)t}$) where $v_{it}$ is correlated only within groups $g(i)$.}

The spillover coefficients in Equation \ref{eq:RGIV_controls_model} can be estimated by using RGIV after residualizing the outcome variable $r_{it}$ with respect to observed variables $\V{x}_{it}$. Residualizing purges $r_{it}$ of the variation induced by the direct and indirect effects of $\V{x}_{it}$ on $r_{it}$. Concretely, the procedure has two steps: (1) Compute the residual $\dot{r}_{it}$ of the regression of $r_{it}$ on $\V{x}_{it}$, (2) treating $\dot{r}_{it}$ as data, estimate $\boldsymbol{\phi}$ using the RGIV estimator described in Definition \ref{defn:naive}. Conveniently, estimation uncertainty of the first step's regression coefficients has no effect on the asymptotic variance of the RGIV estimator in the second step. Thus, treating $\dot{r}_{it}$ as data in the second step produces valid standard errors for $\widehat{\boldsymbol{\phi}}^{RGIV}$. See Appendix \ref{sec:RGIV_controls} for formal results.

\subsection{Non-identification when explanatory variables are unobserved}
\label{sec:extensions_factor}

In this section, I describe the tradeoff in assumptions between allowing spillover coefficient heterogeneity/unknown shock variances and a factor structure for the shocks. I show that global identification of unit-specific spillover coefficients is lost when a single unobserved factor is included in the error term. Such a case is empirically relevant because there is typically no a priori reason to believe that spillover coefficients are homogeneous across units and because estimated latent factors are commonly used as control variables in the applied GIV literature.  

 GIV regressions with estimated latent factors as control variables are ubiquitous in the applied GIV literature (\cite{Flynn2022, Gabaix2024,Gabaix2023, Camanho2022,Baumeister2023a,Adrian2022} among others). Typically, latent factors are estimated using principal components on the demeaned outcome variable before being included as control variables in the GIV regression. Such an approach is attractive because it enables practitioners to apply GIV to applications where the correlations between unit shocks are driven by a small number of latent factors. These estimated factors, however, are subject to measurement error, so their inclusion as control variables in subsequent GIV regressions gives rise to attenuation bias.  

\cite{Banafti2022} addresses this concern by extending GIV with homogeneous spillover coefficients to a large time and panel dimension framework. When the size distribution of units is very skewed (more skewed than Zipf's law), the sampling uncertainty arising from latent factors and loadings is negligible under their procedure. Homogeneity of spillover coefficients across units is crucial for the procedure's validity. When spillover coefficients are heterogeneous, cross-sectionally demeaning each unit no longer differences away the (no longer constant) contribution of spillovers. As a result, PCA estimates of the latent factors are polluted by the presence of spillovers.

Given the empirical relevance of unit-level heterogeneity and latent factors, I consider a heterogeneous spillover coefficient latent factor model. Taking $n$ to be fixed, restrictions on the skewness of the unit size distribution are unneeded. Investigating identification, I augment Assumption \ref{ass:general_reducedform} to include a single latent factor $f_t$ with unknown unit-specific loading $\lambda_i$ 
\begin{align}
	r_{it} = \phi_i r_{St} + \lambda_i f_t + u_{it},\quad \lambda_i \neq 0, \quad f_t \indep u_{it}, \quad n \geq 5
	\label{eq:singlefactor}
\end{align}
where the number of units $n$ is fixed and time $T\to\infty$. Factor $f_t$ is normalized so that $\EE(f_t) = 0$, $\EE(f_t^2) =1$, and $\lambda_1 > 0$. Shocks $u_{it}$ are idiosyncratic in that they are independent of latent factor $f_t$ and $u_{it} \indep u_{jt}$ for $i \neq j$. Then, extending the logic of the RGIV estimator to the single factor case, the moment function between units $i \neq j$ is 
\begin{align*}
	g_{ij}^\text{factor}(\V{r}_t, \boldsymbol{\theta}) = (r_{it} - \phi_i r_{St})(r_{jt} - \phi_j r_{St}) - \lambda_i \lambda_j
\end{align*}
where  $\boldsymbol{\theta} = [\boldsymbol{ \phi}', \boldsymbol{\lambda}']'$ for $\boldsymbol{\lambda} = [\lambda_1,\dots,\lambda_n]'$. $g_{ij}^\text{factor}(\V{r}_t, \boldsymbol{\theta})$ can then be stored in moment vector $g^\text{factor}(\V{r}_t, \boldsymbol{\theta})$. When $n \geq 5$, the number of moments (a total of $n(n-1)/2$) is greater than or equal to the number of parameters to be estimated (a total of $2n$).

Lemma \ref{lemma:notid} (below) however shows that the population moment condition $g_0^\text{factor}(\boldsymbol{\theta}) = \EE[g^{\mathrm{factor}}(\V{r}, \boldsymbol{\theta})]$ has multiple roots, so the model described in Equation \ref{eq:singlefactor} is not identified.

\begin{lemma}
	Consider Equation \ref{eq:singlefactor} and let $\boldsymbol{\theta}_0$ be the true parameter. There exists $\widetilde{\boldsymbol{\theta}} \neq \boldsymbol{\theta}_0$ such that $g_0^\mathrm{factor}(\widetilde{\boldsymbol{\theta}}) = 0$.
	\label{lemma:notid}
\end{lemma}
\begin{proof}
	See Appendix \ref{lemma:notid_proof}.
\end{proof}

The failure in the global identification condition comes from the assumptions of unknown shock variances and unknown factor loadings. If instead shock variances and factor loadings were taken to be known---mirroring Assumption 1 of \cite{Gabaix2024} and as is the case for the proof of Lemma \ref{lemma:gmm_id}---then the unit-specific spillover coefficients are identified.\footnote{In a related case, Appendix \ref{sec:factor_known_observables} shows that differencing RGIV moments can account for factor loadings determined by unit-specific observables.} In the proof of Lemma \ref{lemma:notid}, I show that factor loadings can compensate for incorrect guesses for the spillover coefficient. To see this, let $\widetilde{\boldsymbol{\theta}} = [\widetilde{\boldsymbol{\phi}}', \, \widetilde{\boldsymbol{\lambda}}']'$ be a candidate root to the population moment condition. In the proof, I consider the set of solutions where the first unit's spillover coefficient and loading equal their true values ($\widetilde{\phi}_1 = \phi_1$ and $\widetilde{\lambda}_1 = \lambda_1$). I then show that a subset of the population moment conditions implies that $ \widetilde{\phi}_k-	\phi_k   =- \frac{\lambda_1(1-\phi_S)}{  \lambda_1 \lambda_S + S_1 \sigma^2_1 }(\widetilde{\lambda}_k - \lambda_k   )$ for $k \geq 2$. In words, $\widetilde{\lambda}_k$ can compensate for an incorrect spillover coefficient $\widetilde{\phi}_k \neq \phi_k$. Formally, the proof shows that there is at least one $\widetilde{\boldsymbol{\theta}}\neq \boldsymbol{\theta}_0$ such that $g_0^\textrm{factor}(\widetilde{\boldsymbol\theta})=0$. Moreover,  $\widetilde{\boldsymbol{\theta}}$ need not be ``close'' to the true parameter, as there is no guarantee that $\max_i \widetilde{\phi}_i \geq \min_i \phi_i$ or $\min_i \widetilde{\phi}_i \leq \max_i \phi_i$. In this sense, $\widetilde{\boldsymbol{\phi}}$ is potentially far from the true spillover coefficient $\boldsymbol{\phi}_0$.

Lemma \ref{lemma:notid} also implies a tradeoff between modeling unit-level heterogeneity and allowing for correlated shocks. Recall that Proposition 7 of \cite{Gabaix2024} shows that a \textit{homogeneous} spillover coefficient is identified when shocks admit a factor structure with unknown loadings and if shock variances are homogeneous across units. In contrast, Lemma \ref{lemma:notid} shows that global identification is lost under unrestricted heterogeneity on spillover coefficients and shock variances. Taken together, these results suggest that practitioners face a tradeoff between two empirically-relevant models.

\alert{At the same time, if the true model contained a single latent factor, but the researcher mistakenly applied the (over-identified) baseline RGIV moment condition of Definition \ref{defn:naive}, the population moment condition wouldn't generically hold:}

\begin{lemma}
	\alert{Consider Equation \ref{eq:singlefactor} with $n \geq 4$. Suppose there is no constant $c \in \RR$ such that $S_i\sigma^2_i/\lambda_i = c$ for all but at most one $i \in \{1,\ldots,n\}$. Then, there exists no $\widetilde{\boldsymbol{\phi}} \in \Phi$ such that $g_0(\widetilde{\boldsymbol{\phi}}) = 0.$}
	\label{lemma:no_id_factor}
\end{lemma}

\begin{proof}
	\alert{See Appendix \ref{app:no_id_factor}.}
\end{proof}

\alert{Thus, with a large enough sample, the result  suggests that the $J$ test of Section~\ref{sec:testing} would generically detect misspecification arising from a missing latent factor. Note that the result includes a condition that rules out a knife-edge case on homogeneity across sizes, factor loadings, and shock variances. Since the proof reduces to a nonlinear system with more equations than unknowns, the condition ensures no solution exists for such a system.}

\section{Applications}
\label{sec:application}

I illustrate the usefulness of RGIV in two settings. In Section \ref{subsec:sovereign_yields}, I apply RGIV to estimate sovereign yield spillovers. In Section \ref{subsec:inelastic_markets}, I adapt RGIV to a demand system to study investor-level heterogeneity in the ``inelastic markets hypothesis.''

\subsection{Sovereign yield spillovers}
\label{subsec:sovereign_yields}
Applying the robust granular instrumental variables (RGIV) methodology to the Euro area sovereign yield spillovers application of a working paper version of \cite{Gabaix2024}, this section finds strong evidence of country-level heterogeneity in the spillovers of idiosyncratic shocks. 

Just as in \cite{Gabaix2024}, the sample consists of daily data on 10-year zero coupon yields from Bloomberg from September 1, 2009 to May 31, 2018 giving a total of 2283 observations. The included countries are Austria, Belgium, Finland, France, Germany, Greece, Ireland, Italy, Netherlands, Portugal, Slovenia, and Spain. For the yield spread of country $i$ (relative to Germany) $y_{it}$, the outcome variable $r_{it}$ is defined as $r_{it} = \frac{y_{it} - y_{it-1}}{0.01 + y_{i,t-1}}$ just as in \cite{Gabaix2024}. Since shocks are linearly related to the outcome variable as outlined in Assumption \ref{ass:general_reducedform}, $r_{it}$ is winsorized (over time) at the 0.5th and 99.5th percentiles. Following \cite{Gabaix2024}, size is time-varying and computed as ``debt-at-risk'' $S_{i,t-1} = \frac{B_{i,t-1} y_{i,t-1}}{\sum_j B_{j,t-1} y_{j,t-1}}$ where $B_{i,t-1}$ is the outstanding government debt of country $i$.

Observed explanatory variables are included to account for unit-specific exposures to latent aggregate shocks. In particular, I include the STOXX 50 Volatility Index (differences), the STOXX Europe 600 Index (growth), the EUR-USD exchange rate (growth), the United States 10 Year Treasury yield (growth), BBB/Baa-10Y spread (differences), and the European Fama-French 5 factors \citep{Fama2015}. These observed explanatory variables account for differential exposure of countries to uncertainty, exchange rates, equity prices, and risk. All data are downloaded from Bloomberg.

Even with observed explanatory variables, well-documented regional correlations among Europe's core and periphery countries likely drive correlations between shocks \citep{Bayoumi1992}. To address this concern, I size-aggregate $r_{it}$ to form larger country blocks; even if shocks \emph{within} blocks are correlated, RGIV is still valid so long as shocks \emph{between} blocks are uncorrelated. Hence, uncorrelatedness between blocks is a weaker condition than uncorrelatedness between countries. Specifically, I consider the following four blocks:
\begin{itemize}
	\item Block 1 (\textit{Core}):  Austria, Belgium, Finland, France, Netherlands;
	\item Block 2 (\textit{Western periphery}): Ireland, Portugal, and Spain;
	\item Block 3 (\textit{Eastern periphery}): Greece and Italy;
	\item Block 4: Slovenia.
\end{itemize} 
Block 1 includes countries typically classified as being members of the EU ``core.'' Blocks 2 and 3 contain countries typically classified as being members of the EU ``periphery.'' Block 4 contains Slovenia, which is typically left uncategorized on account of its distinct institutional structure as a former member of Yugoslavia. 

\alert{Blockwise uncorrelatedness also allows for this paper's conventional, fixed-dimensional moment condition framework to be applied. Without aggregation, the number of moments would grow with the cross-section. Then, exact uncorrelatedness of the idiosyncratic shocks would become increasingly restrictive (analogous to an exact factor structure in the approximate factor model literature e.g., \cite{Chamberlain1983}) and standard GMM inference would be subject to distortions (familiar from the many-moments literature e.g., \cite{Han2006}).}

I report RGIV point estimates and standard errors for individual spillover coefficients.\footnote{\alert{The estimator is initialized at $(0.5, 0.5, 0.5, 0.5)'$. Checking convergence, I estimate the preferred specification from 2,000 starting points drawn uniformly over $[0, 0.99]^4$. Of these, 90.4\% of the draws converge to a spillover coefficient vector that is within 0.001 (by Euclidean distance) of the reported estimate. The remaining all obtain objective function values that are greater than the reported optimum.}} Conservatively, standard errors are computed using a HAC GMM weight matrix to account for the possible serial correlation of idiosyncratic shocks.\footnote{Specifically, a Newey-West kernel with the \cite{Lazarus2018} truncation parameter rule of $1.3 \sqrt{T}$.}   Size-weighted spillover coefficients are constructed using the Delta method, using the average block size over the estimation sample.

I also present GIV results based on the shock variance approximation and factor estimation procedures detailed in Section 5.3 of the July 2021 working paper version of \cite{Gabaix2024}. I include this procedure for comparison, as it is used in the original application. This procedure approximates the variance of the shocks with $\Var(r_{it})$, valid when spillovers are small. As a diagnostic for GIV instrument strength, the first stage $F$-statistic is also reported.

I find strong evidence of spillovers in the aggregate and of spillover heterogeneity across countries. In the preferred specification (Column 1 of Table \ref{table:application_main}), \alert{we fail to reject the null hypothesis of correct specification} at conventional significance levels. I find evidence of spillovers in the aggregate. The size-weighted spillover coefficient is 0.54 (with a standard error of 0.08). Moreover, with a $p$ value of $<0.001$ for the spillover coefficient homogeneity test,  the null hypothesis of spillover coefficient homogeneity is rejected at conventional significance levels. 

\alert{The core countries exhibit the lowest sensitivity to aggregate yields (0.39), consistent with their perceived soundness over the sample. The block consisting of Greece and Italy exhibits a moderate spillover coefficient (0.44). In contrast, the block consisting of Ireland, Portugal, and Spain exhibits the highest size-weighted spillover coefficient (0.82). One possible explanation is that for much of the sample, Greece’s passthrough was buffered by its involvement in international bailout programs despite its severe distress.  Ireland, Portugal, and Spain, however, spent substantial portions of the sample either in pre-program market distress or in post-program recovery.}

\alert{The RGIV model also provides a lens through which to view the historical narrative analogous to the ``narrative'' identification approach of a working paper version of \citet{Gabaix2024}. Recall that the contribution of an idiosyncratic shock to the size-weighted average yield is given by $S_i u_{it}/(1-\phi_S)$. Table \ref{table:top10_narrative} lists the dates that correspond to the ten largest (by absolute value) idiosyncratic contributions to movements in the size-weighted average yield. The table also lists the relevant events that occurred around each date as reported by Bloomberg News. All ten dates can be matched to events that correspond to shocks to the Italy/Greece periphery block (five each). For example, the largest contribution occurred on July 6, 2015. July 6 corresponds to the first trading day after the July 5 Greek bailout referendum where 61\% of Greek voters voted ``No'' on the terms of the bailout. The contribution's positive sign (+0.068) is consistent with the historical narrative of an elevated risk of Greece's exit from the Euro area.}

The qualitative features of the preferred specification are robust to omitting observed explanatory variables and using an estimator that is efficient under higher-order dependence among shocks. Omitting explanatory variables, Column 2 of Table \ref{table:application_main}  shows spillover coefficient heterogeneity across blocks. Relative to the preferred specification, the estimated size-weighted spillover coefficient is slightly larger (at 0.63). Recall that the RGIV estimator is efficient under the independence of idiosyncratic shocks, but isn't guaranteed to be efficient under the weaker condition of idiosyncratic shock uncorrelatedness.\footnote{For example, a shared volatility term would induce dependence of higher order moments.} To address this concern, Column 3 shows the results of a ``higher-order efficient'' estimator, which uses the procedure outlined in Footnote \ref{footnote:higher_order_efficient} of Section \ref{sec:general_RGIVestimator}. The point estimates and standard errors are nearly identical, suggesting that higher order dependence of idiosyncratic shocks plays a minor role in this application. The spillover coefficient estimated with the GIV procedure featured in a working paper version of \cite{Gabaix2024} is comparable to the size-weighted spillover coefficient computed with RGIV, but is unable to speak to country-level heterogeneity. Column 4 of Table \ref{table:application_main} presents results of the baseline GIV methodology applied to this section's core-periphery-aggregated panel (distinct from the country-level panel of \cite{Gabaix2024}). The spillover coefficient of 0.56 is close to the size-aggregated spillover coefficient computed in the preferred RGIV specification.

\begin{table}[tbp]
	\centering
	\small
	\begin{tabular}{l|cccc}
		\toprule
		& Preferred & No controls & Higher-order efficient & 1-factor GIV \\
		\hline
		RGIV results: \\
		\quad$\phi_S$ & 0.54 & 0.63 & 0.54 & \\
		& (0.08) & (0.07) & (0.08) & \\
		\quad$\phi_\text{IRL, PRT, ESP}$ & 0.82 & 0.86 & 0.83 & \\
		& (0.06) & (0.05) & (0.06) & \\
		\quad$\phi_\text{GRC, ITA}$ & 0.44 & 0.56 & 0.44 & \\
		& (0.16) & (0.13) & (0.16) & \\
		\quad$\phi_\text{Core}$ & 0.39 & 0.47 & 0.40 & \\
		& (0.03) & (0.02) & (0.03) & \\
		\quad $\phi_\text{SVN}$ & 0.33 & 0.49 & 0.33 & \\
		& (0.06) & (0.05) & (0.05) & \\
		$\phi^{GIV}$ & & & & 0.56 \\
		& & & & (0.02) \\
		\hline
		Tests ($p$-values): \\
		\quad Specification & 0.951 & 0.923 & 0.937 & \\
		\quad Homogeneity & <0.001 & <0.001 & <0.001 & \\
		First stage $F$-stat.\! & & & & 592 \\
		\bottomrule
	\end{tabular}
	\caption{Spillover coefficient estimation results. RGIV coefficient estimates for $\phi_S$, $\phi_\text{IRL, PRT, ESP}$, $\phi_\text{GRC, ITA}$, $\phi_\text{Core}$, and $\phi_\text{SVN}$ are listed above standard errors, which are provided in parentheses. $p$ values are provided in the bottom section of the table for the specification test and parameter homogeneity tests. Estimates for the GIV spillover coefficient and standard error are listed in the $\phi^{GIV}$ row, and the instrumental variables first stage $F$-statistic is given in the table's last row. See the main text for details on the table's columns.}
	\label{table:application_main}
\end{table}

Section \ref{sec:application_robustness} gives additional RGIV results under alternative winsorizations, the more widely-used \cite{Andrews1991} truncation parameter, omitting the Fama-French factors as observed explanatory variables, and alternative choices of country blocking. \alert{In particular, the section considers a specification where each country represents its own block---here, the $J$ test's null hypothesis of correct specification is rejected at conventional significance levels, evidence of the correlation of idiosyncratic shocks.} Section \ref{sec:application_robustness} also contains results for the 0-factor and 2-factor GIV specifications.

Summarizing, RGIV finds strong evidence of country-level differences in sovereign yield spillovers. In response to a 1\% increase in the size-weighted relative yield spread, the relative yield spread of ``core'' countries increases by 0.4\% compared to an increase of 0.8\% for countries in the western ``periphery.'' Substantively, these estimates point to the importance of understanding the role of country-level characteristics in the heterogeneous propagation of idiosyncratic shocks during sovereign debt crises.

\subsection{Inelastic markets hypothesis}
\label{subsec:inelastic_markets}

		\alert{In this section, I use RGIV to investigate the inelastic markets hypothesis of \cite{Gabaix2023}. They find that when the price of the equity market portfolio increases by 5\%, the quantity demanded falls by 1\% (equivalently an elasticity of $-0.2$). This finding contrasts with the traditional view of investors as rational agents, which would predict an elasticity that would be substantially larger in magnitude  (e.g., an elasticity of $-20$ in Figure 3 of \citet{Gabaix2023}).} \alert{Using RGIV, I also find evidence supporting the inelastic markets hypothesis with a point estimate of $-0.05$, with evidence of investor-level heterogeneity.  Econometrically, the application demonstrates how the theory from Section \ref{sec:general} can be adapted to a specialized application.}

		\alert{I replicate the data used in \cite{Gabaix2023} for my analyses, which uses sector-level data from the Flow of Funds on equity holdings---both levels and flows. See Appendix C of \cite{Gabaix2023} for details on the construction of the dataset. The dataset used for the analyses includes 12 sectors described in Table \ref{table:ffunds_block_def} from 1993:Q1 to 2018:Q4. For sector $i$ and time $t$, the growth in investors' equity holdings is $\Delta q_{it}$ and the fraction of the market held is $S_{it-1}$. My measure of prices is the growth in the value-weighted return ex-dividends from the Center for Research in Security Prices (CRSP). Quarterly returns are noted as $\Delta p_t$.} 

		\alert{For demand elasticities $\phi_1, \dots, \phi_n$ and supply elasticity $\gamma$, I adapt the RGIV framework for spillovers to the following demand system
		\begin{align}
			\Delta q_{it} &=  \phi_i \Delta p_t + \boldsymbol{\beta}_i' \mathbf{x}_{t} + u_{it} 			\label{eq:demand_system_flowfunds}\\
			\Delta q_{St} &= \gamma \Delta p_t + \boldsymbol{\beta}_0' \mathbf{x}_{t} + \varepsilon_{t} 			\label{eq:supply_system_flowfunds}
		\end{align}
		 where $\mathbf{x}_t$ is a vector of observed explanatory variables and $(\varepsilon_{t}, u_{1t}, u_{2t}, \dots, u_{nt})'$ are \! cross-sectionally uncorrelated. Equation \ref{eq:demand_system_flowfunds} describes the demand for each sector while Equation \ref{eq:supply_system_flowfunds} describes the supply. Relative to \citet{Gabaix2023}, demand elasticities $\phi_i$ are permitted to be unit-specific and $\gamma$ is estimated. Analogous to the restriction of the size weighted spillover coefficient described in Section \ref{sec:general}, the size-weighted demand elasticity is assumed to be less than the supply elasticity ($\phi_S < \gamma$).}
		
		\alert{Endogeneity is induced by each unit's contribution to movements in the price. Setting the size-weighted sum of Equation \ref{eq:demand_system_flowfunds} equal to Equation \ref{eq:supply_system_flowfunds} implies that the price is a function of observables, the size-weighted idiosyncratic demand shocks, and the supply shock}
\begin{align}
	\label{eq:price_demand_controls}
	\Delta p_t = \frac{1}{\phi_S-\gamma}(\varepsilon_t - u_{St} + (\boldsymbol{\beta}_0 - \boldsymbol{\beta}_S)' \mathbf{x}_t ).
\end{align}		
	\alert{Thus, a regression of $\Delta q_{it}$ on $\Delta p_t$ would give rise to endogeneity bias in the estimation of $\phi_i$. Here, the RGIV estimator exploits the uncorrelatedness of the supply and demand shocks $(\varepsilon_{t}, u_{1t}, u_{2t}, \dots, u_{nt})'$ giving $n(n+1)/2$ moment conditions. See Appendix \ref{sec:demand_framework} for the full environment and identification argument.}
		
	\alert{I use a wide set of observed explanatory variables to control for investor-specific exposures to aggregate shocks. Following \cite{Gabaix2023}, I include GDP growth, log market cap, log book-to-market ratio, and  momentum. The latter three variables are constructed using the data of \citet{Jensen2023}. In addition, I include the 2-year Treasury, the 5-year Treasury, the 10-year Treasury, the 30-year Treasury, TED spread, University of Michigan Consumer Sentiment Index, the Major Currencies Dollar Index (Goods only),  the credit spread of \citet{Gilchrist2012}, and the WTI spot crude oil price. These observed variables help capture differential exposure to monetary policy, credit conditions, and sentiment.}

\alert{From cross-sector operational similarity, I consider the weaker assumption of blockwise uncorrelatedness. Like the application to sovereign yield spillovers, the assumption permits the correlatedness of shocks within blocks but requires that shocks are uncorrelated across blocks. Specifically, I consider four blocks: Retirement \& Insurance (Block I), Intermediaries and Funds (Block II), Foreign (Block III), and Real Domestic (Block IV). See Table \ref{table:ffunds_block_def} for details. The weaker assumption of blockwise uncorrelatedness permits for the correlation of shocks not fully accounted for by the observed explanatory variables. Examples include the reaction to catastrophic events or changes in pension regulation in Block I, client pass-through for Block II, and exposure to the real economy in Block IV.}

\begin{table}[t]
	\centering
\begin{tabular}{c|c|c|l|c}
    \hline
    Block & Label & Size & \multicolumn{1}{c|}{Sector} & Relative size \\
    \hline
    \multirow{5}{*}{I} & \multirow{5}{*}{\makecell{Retirement \\ \& Insurance}} & \multirow{5}{*}{0.21} & Private pension funds                   & 0.46 \\
                       &                                                         &                     & Federal government retirement funds     & 0.02 \\
                       &                                                         &                     & State and local govt. retirement funds  & 0.37 \\
                       &                                                         &                     & Life insurance companies                & 0.09 \\
                       &                                                         &                     & Property-casualty insurance companies   & 0.06 \\
    \hline
    \multirow{3}{*}{II} & \multirow{3}{*}{\makecell{Intermediaries \\ \& Funds}} & \multirow{3}{*}{0.25} & ETFs and mutual funds                  & 0.96 \\
                        &                                                         &                     & Closed-end funds                       & 0.02 \\
                        &                                                         &                     & Brokers and dealers                    & 0.02 \\
    \hline
    III & Foreign & 0.11 & Rest of the world & 1.00 \\
    \hline
    \multirow{3}{*}{IV} & \multirow{3}{*}{Real Domestic} & \multirow{3}{*}{0.42} & Household sector                        & 0.98 \\
                        &                                &                     & U.S.-chartered depository institutions  & 0.01 \\
                        &                                &                     & State and local governments             & 0.01 \\
    \hline
\end{tabular}
\caption{Summary of blocks for the inelastic markets hypothesis. The ``Size'' column refers to the average block size over the sample of interest (1993:Q1 to 2018:Q4). The ``Relative size'' column refers to the average size of each sector relative to the total block size over the sample of interest (1993:Q1 to 2018:Q4).}
\label{table:ffunds_block_def}
\end{table}

\alert{I consider the two-step GMM estimator described in Footnote \ref{footnote:higher_order_efficient} as a baseline for efficiency under higher order dependence of idiosyncratic shocks. Like the previous application, standard errors are conservatively computed using a HAC GMM weight matrix to account for potential serial correlation of the idiosyncratic shocks. Size-weighted elasticities are again constructed using the Delta method using the average block size over the estimation sample.}

\alert{Under RGIV, I estimate an inelastic size-weighted price elasticity of demand. From the first column of Table \ref{table:ffunds_results}, the size-weighted demand elasticity for the preferred specification is $-0.05$. For comparison, the demand elasticity computed using the 1-factor GIV estimator under the same assumption of blockwise uncorrelatedness\footnote{As in \citet{Gabaix2023}, the specification includes the following control variables: GDP growth, log market cap, the log book-to-market ratio, and a constant. The point estimate differs from the original reference (of approximately $-0.20$) due in part to the weaker assumption of blockwise uncorrelatedness.} is broadly comparable, with a point estimate of $-0.15$ and standard error of 0.08 (final column of Table \ref{table:ffunds_results}). At the same time, I find evidence of investor-level heterogeneity using RGIV; the null hypothesis of elasticity homogeneity is rejected at conventional significance levels for the preferred specification.}

\alert{Investigating heterogeneity, the RGIV results rule out large elasticities for each of the four considered blocks. Table \ref{table:ffunds_results} shows that the elasticity for the Retirement and Insurance block is $-0.20$. The category contains a mixture of investors with fixed glide path defined contribution plans (with elasticities near zero) and plans with fixed-share pension mandates (with more moderate elasticities). Next, the elasticity for the Intermediaries and Funds block is 0.09. The positive elasticity is consistent with the pro-cyclicality of the flow-performance relationship. The elasticity of the Foreign block is nearly zero, consistent with offsetting elasticities of heterogeneous investors (like foreign wealth funds, central banks, and retail investors) and would benefit from further study with more granular data. The elasticity for the Real Domestic block is $-0.08$, which is dominated by the Household sector (See Table \ref{table:ffunds_block_def}). Finally, the supply elasticity is computed to be 0.20. To interpret, as in \citet{Gabaix2023}, recall that five sectors are dropped from the demand system due to poor data quality.\footnote{These sectors include Nonfinancial Corporate Business, Foreign Banking Offices, Federal Government, Monetary Authority, and Funding Corporations.} Thus, the parameter should be viewed as a function of the omitted-sector residual supply and total market supply.}  

\begin{table}[t]
    \centering
    \small
    \begin{tabular}{l|cccccc}
        \toprule
     & Preferred & \makecell{Brokers \& \\ dealers to  \\Real Domestic} & \makecell{State \& local \\ govts.\! to \\ Retire.\!/Insur.\!} & \makecell{Closed end \\ to Real \\Domestic} & CUE & \makecell{1-factor \\ GIV} \\
     \hline
     RGIV results: &&&&&&\\
\quad$\phi_S$ & $-0.05$ & $-0.05$ & $-0.04$ & $-0.05$ & $-0.06$ & \\
& $(0.01)$ & $(0.01)$ & $(0.01)$ & $(0.01)$ & $(0.02)$ \\
\quad$\phi_\text{Retire.\!/Insur.\!}$ & $-0.20$ & $-0.21$ & $-0.17$ & $-0.19$ & $-0.19$ \\
& $(0.02)$ & $(0.02)$ & $(0.02)$ & $(0.02)$ & $(0.05)$ \\
\quad$\phi_\text{Interm.\!/Funds}$ & $0.09$ & $0.11$ & $0.12$ & $0.09$ & $0.07$ \\
& $(0.03)$ & $(0.03)$ & $(0.03)$ & $(0.03)$ & $(0.04)$ \\
\quad$\phi_\text{Foreign}$ & $0.00$ & $-0.02$ & $0.00$ & $0.00$ & $-0.04$ \\
& $(0.02)$ & $(0.02)$ & $(0.02)$ & $(0.02)$ & $(0.03)$ \\
\quad$\phi_\text{Real Domestic}$ & $-0.08$ & $-0.08$ & $-0.07$ & $-0.07$ & $-0.07$ \\
& $(0.02)$ & $(0.02)$ & $(0.02)$ & $(0.02)$ & $(0.03)$ \\
\quad$\gamma$ & $0.20$ & $0.16$ & $0.22$ & $0.19$ & $0.21$ \\
& $(0.04)$ & $(0.02)$ & $(0.04)$ & $(0.03)$ & $(0.05)$ \\
$\phi^{GIV}$ & &&&&& $-0.15$\\
&&&&&& $(0.08)$\\
\hline
Tests ($p$-values): &&&&&&\\
\quad Specification & $0.258$ & $0.278$ & $0.274$ & $0.278$ & $0.006$ \\
\quad Homogeneity & $<0.001$ & $<0.001$ & $<0.001$ & $<0.001$ & $0.001$ \\
\makecell{First-stage $F$-stat.\!} & &&&&& $7.6$\\
\bottomrule
    \end{tabular}
\caption{Demand system estimation results. RGIV coefficient estimates are listed above their standard errors, which are provided in parentheses. $p$ values are provided in
the bottom section of the table for the specification test and parameter homogeneity tests.}
\label{table:ffunds_results}
\end{table}

\alert{The qualitative findings of the preferred specification are robust to the alternative 
specifications featured in Columns 2--5 of Table \ref{table:ffunds_results}. Columns 2--4 reassign sectors across blocks to help address threats to the assumption of blockwise uncorrelatedness. Specifically, Column 2 moves Brokers \& dealers to the Real Domestic block (addressing the concern that its shocks primarily reflect funding liquidity rather than passthrough), Column 3 moves State \& local governments to the Retirement and Insurance block (addressing the concern that it behaves as an institutional investor), and Column 4 moves the Closed-end funds to the Real Domestic block (addressing the concern that closed-end funds are in large part held by retail investors). Column 5 displays results for the CUE estimator. The size-weighted aggregate elasticity and point estimates are qualitatively similar to the preferred specification across the considered alternative specifications. The mean squared off-diagonal correlation among the moment conditions for the preferred GMM estimator is 0.23, consistent with the presence of higher-order dependence of the idiosyncratic shocks. While RGIV inference under CUE would still be valid under higher-order dependence, the results on hypothesis testing should be interpreted with caution as the diagonal CUE weight matrix would no longer be efficient.}

\alert{Summarizing, RGIV finds evidence supporting the inelastic markets hypothesis of \citet{Gabaix2023} in the aggregate and evidence of investor-level heterogeneity.}

\section{Simulation study}

\label{sec:simulation}

I close by showing that RGIV has good finite sample performance in simulation using six DGPs based on the preferred specification studied in Section \ref{sec:application}. In contrast, I show that two versions of the GIV procedures---the ``feasible'' GIV procedure (featured in this paper's application) and ``oracle'' GIV procedure that takes idiosyncratic shock variances $\sigma^2_i$ to be known---can severely under-cover under spillover coefficient heterogeneity. \\

\noindent\textsc{Setup.}{ }{ }  The ``homogeneous spillovers'' DGP is loosely based on the preferred specification of Section \ref{sec:application}. Under spillover coefficient homogeneity across units, the DGP serves as a baseline, as both the oracle GIV and RGIV estimators are correctly specified. For $n=4$ units and a sample length of $T = 2283$, the model parameters are below:
\begin{align*}
	\boldsymbol{\phi}&= 0.54 \cdot \mathbf{1}_{4,1}\text{, } \boldsymbol{\sigma} = 0.014 \cdot \mathbf{1}_{4,1}\text{, } 
	 \text{ and } \V{S} = \begin{bmatrix}
	0.29 & 0.56 & 0.14 & 0.01
	\end{bmatrix}'.
\end{align*}

Having established an environment  for which both the oracle GIV and RGIV estimators are valid, I study five additional DGPs that deviate from the homogeneous spillovers DGP. These illustrate the finite sample properties of RGIV and GIV. First, the ``coefficient outlier'' DGP takes the homogeneous spillovers specification and sets the spillover coefficient of the fourth unit to  $0.75$. Second, the ``shock variance outlier'' DGP takes the homogeneous spillovers specification and sets the shock standard deviation of the first unit to $0.03$. \alert{Third, the ``short $T$'' DGP takes the homogeneous spillovers specification and instead considers $T=100$ observations. Fourth, the ``near-homogeneous size'' specification takes the homogeneous spillovers specification and modifies the size vector to be $[0.250, 0.253, 0.249, 0.248]'$.} Fifth, the ``application'' DGP allows for heterogeneous spillover coefficients and shock variances by assigning them to be the estimated values from the application's preferred specification. 

For each DGP, I consider results from the RGIV estimator, ``feasible'' GIV estimator, and ``oracle'' estimator. For the RGIV estimator, I compute confidence intervals for the spillover coefficients of individual units, size-weighted spillover coefficients, and equal-weighted spillover coefficients.  These confidence intervals highlight how RGIV can be used for researchers interested in individual spillover coefficients and for those interested in a single, aggregated parameter.  The ``feasible'' GIV estimator is computed using the $\Var(r_{it}) \approx \Var(u_{it})$ approximation (valid when spillovers are small) while the ``oracle'' GIV estimator is computed as if the true idiosyncratic shock variance $\Var(u_{it})$ were known. Both GIV estimators are included to show the practical implications of mistakenly assuming spillover coefficient homogeneity across units. The feasible GIV estimator, in particular, is included to evaluate the approximation $\Var(r_{it}) \approx \Var(u_{it})$ and for completeness (it is featured in the application section).

A conservative notion of coverage is used for the GIV estimators. Since the GIV estimators require spillover coefficient homogeneity across units, I report the proportion of GIV confidence intervals that contain \textit{any} positive-weighted average of potentially heterogeneous spillover coefficients---measured as the proportion of GIV confidence intervals with a nonempty intersection with the closed interval $[\min_{1\leq i \leq n} \phi_i, \max_{1\leq i \leq n} \phi_i]$. \\

\noindent\textsc{Results.}{ }{ } While the empirical coverage is near the nominal level for RGIV \alert{for the DGPs with $T = 2283$}, feasible GIV can severely under-cover. Table \ref{table:simulation_main} shows that nearly 95\% of the confidence intervals for $\widehat{\phi}_S$ and $\widehat{\phi}_E$ contain the true estimands $\phi_S$ and $\phi_E$ for these DGPs. In contrast, the true spillover coefficient is contained in none of the feasible GIV confidence intervals in the homogeneous spillovers DGP. Considering that the true coefficient is contained in 95\% of the oracle estimator confidence intervals, together these results suggest that the approximation $\Var(r_{it}) \approx \Var(u_{it})$ is inappropriate for this DGP. 

Under spillover coefficient heterogeneity, the GIV estimators can substantially under-cover---even with the conservative notion of coverage featured in this simulation study. For the coefficient outlier DGP, none of the feasible GIV confidence intervals contain \textit{any} positive-weighted average of heterogeneous spillovers, compared to 15\% for the oracle estimator. Hence, for certain DGPs, mistakenly assuming spillover coefficient homogeneity and applying versions of the baseline GIV estimator can result in unreliable inference.

\begin{table}[t!]
	\centering
	\small
	\begin{tabular}{l|cc|cc|cc}
	\toprule
\multicolumn{1}{c}{DGP}		&	\multicolumn{2}{|c|}{RGIV} & \multicolumn{2}{c|}{GIV} & \multicolumn{2}{c}{Testing ($\alpha=0.05$)} \\
		 & $\widehat{\phi_S}$ & $\widehat{\phi_E}$ & Feasible & Oracle & Spec. & Homog. \\
		\hline
Homogeneous spillovers & 0.94  & 0.97  & 0 & 0.95  & 0.054 & 0.042\\
& (0.12) &(0.038)&(0.058)  &(0.046) && \\
Coefficient outlier & 0.94  & 0.97  & 0  & 0.15  & 0.047 & 0.998\\
& (0.12) &(0.037)&(0.078)&(0.071)& & \\
Variance outlier & 0.97  & 0.94  & 0.0068  & 0.95  & 0.045 & 0.052\\
& (0.045) &(0.067)&(0.037)&(0.033) &&\\
Application & 0.95  & 0.97  & 1.00  & 1.00  & 0.064 & 0.996\\
&(0.14) &(0.052)&(0.083)&(0.10)&&\\
Short $T$ & 0.90 & 0.98 & 0.69 & 0.94 & 0.060 & 0.061 \\
&(0.61) & (0.21) & (0.27) & (0.22) &  &  \\
Near-homogeneous size & 0.95 & 0.95 & 0.92 & 0.63 & 0.050 & 0.053 \\
& (0.0308) & (0.0308) & (0.925) & (1.35) &  &  \\		
		\bottomrule
	\end{tabular}

\caption{Simulation results (5000 simulations, $T = 2283$, 95\% confidence intervals).  Empirical coverage probabilities are listed above median confidence interval lengths, which are provided in parentheses. RGIV: Coverage probabilities are computed as the proportion of confidence intervals containing the estimand ($\phi_S$ for $\widehat{\phi}_S$ and $\phi_E$ for $\widehat{\phi}_E$). GIV: Coverage probabilities are computed as the proportion of confidence intervals containing any positive-weighted average of $\phi_i$ for the feasible and oracle GIV estimators. Computed at the 5\% level, the RGIV specification test (Spec.) and coefficient homogeneity test (Homog.) are provided in the final two columns.}
\label{table:simulation_main}
\end{table}

Evaluating the properties of the testing procedures featured in Section \ref{sec:testing}, the empirical size of the RGIV specification and coefficient homogeneity tests are near their nominal levels. \alert{Comparing DGPs where the null hypothesis of spillover coefficient homogeneity is true, the empirical rejection rate of the coefficient homogeneity test is, at worst, 6.1\% for the $T=100$ DGP.} When spillover coefficients are not homogeneous, as is the case under the elasticity outlier and application DGPs, the null hypothesis of homogeneous coefficients is rejected in more than 99\% of Monte Carlo replications. Turning to the specification test, note that the RGIV framework is correctly specified in all six DGPs. At worst, the null hypothesis of idiosyncratic shock uncorrelatedness is rejected in 6.4\% of Monte Carlo replications under the application DGP.

\alert{For the $T=2283$ DGPs,} the RGIV estimator presents a modest (if any) power tradeoff relative to the oracle GIV estimator, and unit-specific RGIV spillover coefficients have good coverage properties despite applying to a more general environment. In the homogeneous spillovers application, the equal-weighted RGIV spillover coefficient has a comparable median confidence interval length (0.038) to that of GIV (0.046).\footnote{In general, researchers interested in \textit{any} positive-weighted average of potentially heterogeneous spillover coefficients could estimate unit-specific spillover coefficients using RGIV and compute the weights that minimize the confidence interval length of the resulting aggregated estimator.} Turning to unit-specific spillover coefficients, Table \ref{table:simulation_spillovercoeff} shows that the empirical coverage is near $0.95$ for the five $T=2283$ DGPs.

\begin{table}[t!]
	\centering
	\small
	\begin{tabular}{l|cccc}
		\toprule 
		\multicolumn{1}{c|}{DGP}	& $\phi_1$&$\phi_2$&$\phi_3$&$\phi_4$\\
		\hline
		Homogeneous spillovers & 0.96 & 0.95 & 0.95 & 0.95 \\
		& (0.16) & (0.3) & (0.075) & (0.058) \\
		Coefficient outlier & 0.95 & 0.95 & 0.95 & 0.94 \\
		& (0.16) & (0.3) & (0.075) & (0.058) \\
		Variance outlier & 0.95 & 0.96 & 0.95 & 0.95 \\
		& (0.43) & (0.18) & (0.046) & (0.044) \\
		Application & 0.96 & 0.95 & 0.96 & 0.95 \\
		& (0.12) & (0.32) & (0.091) & (0.14) \\
Short $T$ & 1.0 & 0.90 & 0.97 & 0.94 \\
 & (0.84) & (1.5) & (0.37) & (0.28) \\
Near-homogeneous size & 0.95 & 0.95 & 0.95 & 0.95 \\
 & (0.098) & (0.098) & (0.097) & (0.097) \\		
		\bottomrule
	\end{tabular}
	\caption{Simulation results for the RGIV unit-specific spillover coefficients (5000 simulations, $T = 2283$, 95\% confidence intervals).  Empirical coverage probabilities are listed above median confidence interval lengths, which are provided in parentheses. }
	\label{table:simulation_spillovercoeff}
\end{table}

\alert{I find that coverage worsens for RGIV for small sample sizes and for Oracle GIV when size is insufficiently skewed. For the ``Short $T$'' DGP, the empirical coverage of the size-weighted RGIV spillover coefficient is 90\% compared to 94\% for Oracle GIV. These results suggest caution in applying RGIV when the number of moment conditions is large relative to the sample size. For the ``Near-homogeneous size'' DGP, Oracle GIV severely under-covers with an empirical coverage rate of 63\% compared to 95\% for size-weighted RGIV. These results suggest that when all other requirements are satisfied, the researcher should be cautious when using the GIV estimator if size is insufficiently skewed relative to the sample size.}

\section{Discussion}
\label{sec:discussion}
\cite{Gabaix2024} introduces granularity-based identification, a substantial step forward for credible spillover estimates in macroeconomics and finance. Its baseline granular instrumental variables procedure, however, requires strong assumptions---namely homogeneous spillovers across units, homogeneous shock variances, skewed unit size, and idiosyncratic shocks (after accounting for common factors with known loadings). I build on this innovative approach by showing that unit-specific spillover coefficient heterogeneity with heterogeneous (and unknown) shock variances can jointly be accounted for without further restrictions. My estimator, called robust granular instrumental variables, also allows for homogeneous unit sizes unlike GIV.  Intuitively, my approach uses internally estimated individual idiosyncratic shocks as instruments. I give results on identification and inference, showing that the required GIV assumption of coefficient homogeneity is directly testable.  Relaxing the idiosyncratic shock assumption, I highlight a tradeoff: practitioners must choose between allowing unrestricted unit-level heterogeneity and a general shock covariance structure. Studying Euro area sovereign yields, I find strong evidence of country-level heterogeneity in spillovers. I find that my proposed estimator has good finite sample properties through simulation.

There are several directions for future work. First, my approach builds on the baseline framework proposed by \cite{Gabaix2024}. Here, an individual unit's outcome is determined by the size-weighted aggregate outcome. A structure where spillover responses are allowed to differ by the shock's source could be of interest for empirical work, \alert{like in explicitly allowing for differential spillovers for shocks originating from core versus periphery countries in the sovereign yields application}. \alert{Second, while the empirical applications featured in this paper have natural bases for grouping units (like the core--periphery for the sovereign yields application and operational similarity for the inelastic markets hypothesis application), the grouping of units or the choice of moments to use could be automated (for instance see \citet{Donald2001} and \citet{Cheng2015}). Doing so would ensure scalability as the number of units increases.} Third, as discussed in Section \ref{sec:extensions_factor}, practitioners face a tradeoff between unrestricted unit-level heterogeneity in spillover coefficients and correlated shocks. Further work on alternative economically-motivated conditions that preserve global identification would be of considerable interest to applied users.

\appendix
\section{Main proofs}

\subsection{Proof of Lemma \ref{lemma:gmm_id}}
	\label{lemma:gmm_id_proof}
		
	I prove Lemma \ref{lemma:gmm_id} under the weaker condition that the covariance of shocks $i$ and $j$ is known. The below proof will use Assumption \ref{ass:general} after replacing Condition (ii) with Condition (ii'):

	\begin{manualcondition}{(ii')}
 For $\sigma^2_{i} > 0$, shocks $\V{u}_t$ are i.i.d. with moments $\EE(\V{u}_t) = 0$, $\EE(u_{it}^2) = \sigma^2_i$,  and $\EE(\| \V{u}_t\|^4) < \infty.$ Moreover, $\EE(u_{it} u_{jt}) = \mu_{ij}$ for known $\mu_{ij}$. 		
	\end{manualcondition}

	 Take $\widetilde{\boldsymbol{ \phi}} \in \boldsymbol{\Phi}$. The moment condition for the idiosyncratic shocks of units $i$ and $j$ 
{\small	\begin{align}
		\mu_{ij} &= \EE[(r_{it} - \widetilde{\phi}_i r_{St})(r_{jt} - \widetilde{\phi}_j r_{St})] = \EE[((\phi_i - \widetilde{\phi}_i)r_{St} + u_{it})((\phi_j - \widetilde{\phi}_j)r_{St} + u_{jt})] \tag{Assumption \ref{ass:general_reducedform}} \\
		&= (\phi_i - \widetilde{\phi}_i)\EE[r_{St} u_{jt}] + (\phi_j - \widetilde{\phi}_j) \EE[r_{St} u_{it}] + (\phi_i - \widetilde{\phi}_i)(\phi_j - \widetilde{\phi}_j) \EE[r_{St}^2]  + \mu_{ij}. \label{eq:general_shockcorr}
	\end{align}}
	Rearranging, the above display implies 
{\small	\begin{align}
		\phi_i - \widetilde{\phi}_i &= - \frac{(\phi_j - \widetilde{\phi}_j) \EE[r_{St} u_{it}]}{\EE[r_{St} u_{jt}] + (\phi_j - \widetilde{\phi}_j) \EE[r_{St}^2]} \text{ and } 
		 		\phi_k - \widetilde{\phi}_k = - \frac{(\phi_1 - \widetilde{\phi}_1) \EE[r_{St} u_{kt}]}{\EE[r_{St} u_{1t}] + (\phi_1 - \widetilde{\phi}_1) \EE[r_{St}^2]} \label{eq:general_phitilde}
	\end{align}}
	for $k>1$. Substituting Equation \ref{eq:general_phitilde} into Equation \ref{eq:general_shockcorr} and rearranging gives
	\begin{align}
		0 &= \frac{(\phi_1 - \widetilde{\phi}_1) \EE[r_{St} u_{it}] \EE[r_{St} u_{jt}] }{\EE[r_{St} u_{1t} ]+ (\phi_1 - \widetilde{\phi}_1) \EE[r_{St}^2]} \left\{ -2 +  \frac{(\phi_1 - \widetilde{\phi}_1) \EE[r_{St}^2] }{\EE[r_{St} u_{1t} ]+ (\phi_1 - \widetilde{\phi}_1) \EE[r_{St}^2]}\right\}. \label{eq:general_momcond}
	\end{align}
	
	Equation \ref{eq:general_momcond} equals zero when either the first or second terms equal zero. Beginning with the first case, the first multiplicative term equals zero when $\widetilde{\phi}_1 = \phi_1$. Applying Equation \ref{eq:general_phitilde} for $k > 1$, $\widetilde{\phi}_k = \phi_k$. Thus, the first case corresponds to the true solution $\widetilde{\boldsymbol{\phi}} = \boldsymbol{\phi}$.
	
	Focusing on the second case, the second multiplicative term of Equation \ref{eq:general_momcond} equals zero when 		$0=-2 +  \frac{(\phi_1 - \widetilde{\phi}_1) \EE[r_{St}^2] }{\EE[r_{St} u_{1t} ]+ (\phi_1 - \widetilde{\phi}_1) \EE[r_{St}^2]}$ or equivalently $\EE[r_{St} u_{1t}] = -\frac{1}{2} (\phi_1 - \widetilde{\phi}_1) \EE[r_{St}^2]$.	Substituting to Equation \ref{eq:general_phitilde} implies 
	\begin{align}
\phi_k - \widetilde{\phi}_k = -2 \frac{\EE[r_{St} u_{kt}]}{\EE[r_{St}^2]} \text{ for } k\geq1.		
\label{eq:general_falsesolution}
	\end{align}

	After taking a size-weighted sum of $\widetilde{\phi}_k$, $\sum_{i=1}^n S_i \widetilde{\phi}_i = \phi_S + 2(1-\phi_S) = 1 + (1-\phi_S) > 1.$
		The final inequality follows from the $\phi_S< 1$ condition from Assumption \ref{ass:general_reducedform}.	In words, the second multiplicative term of Equation \ref{eq:general_momcond} equals zero for a value of $\widetilde{\boldsymbol{\phi}}$ that is outside the parameter space.		Hence, the only solution to $\EE[g(\V{r}_t, \widetilde{\boldsymbol{ \phi}})] = 0$ is $\boldsymbol{ \phi} = \boldsymbol{ \phi}_0$.

\subsection{Proof of Theorem \ref{thm:gmm_consistency}}
\label{thm:gmm_consistency_proof}

I will begin by showing that the hypotheses of Theorem 3.1 of \cite{Pakes1989} are satisfied by Assumption \ref{ass:general}. Doing so gives consistency of a GMM estimator with moment function $g(\V{r}_t, \boldsymbol{\phi})$ and the weight matrix taken to be the identity matrix.  To do so, I proceed by verifying the (more stringent) conditions of \cite{Newey1994} Theorem 2.6, taking the weight matrix as the identity matrix. I close by showing that the constructed weight matrix $\widehat{W}(\boldsymbol{\phi})$ satisfies Lemma 3.4 of \cite{Pakes1989}, implying consistency of the RGIV estimator.

Below, I verify the conditions of Theorem 2.6 of \cite{Newey1994} for an identity weight matrix $I_{n(n-1)/2}$ (preserving the same numbering as the original reference): 
\begin{enumerate}[label=\roman*.]
	\item 	Since the weight matrix $\widehat{W} = I_{n(n-1)/2}$ is taken to be the identity matrix, trivially $\widehat{W} \xrightarrow{p} I_{n(n-1)/2}.$ Lemma \ref{lemma:gmm_id} shows $\EE[g(\V{r}_t, \boldsymbol{\phi})] = 0$  if and only if $\boldsymbol{\phi} = \boldsymbol{\phi}_0$.
	\item Parameter space $\boldsymbol{\Phi}$ is compact from Assumption \ref{ass:general_parameterspace}.
	\item By inspection, $g(\V{r}_t, \boldsymbol{\phi})$ is continuous at each $\boldsymbol{\phi}$ with probability 1.
	\item Take $\widetilde{\boldsymbol{ \phi}} \in \boldsymbol{\Phi}$. Then, consider the element of $g(\V{r}_t, \widetilde{\boldsymbol{\phi}})$ corresponding to the orthogonality of shocks to units $i$ and $j$. Then, Assumption \ref{ass:general_reducedform} implies
$(r_{it} - \widetilde{\phi}_i r_{St})(r_{jt} - \widetilde{\phi}_j r_{St}) = [u_{it} + (\phi_i - \widetilde{\phi}_i)r_{St}][u_{jt} + (\phi_j - \widetilde{\phi}_j)r_{St}] = [u_{it} + (\phi_i - \widetilde{\phi}_i)\frac{u_{St}}{1-\phi_S}][u_{jt} + (\phi_j - \widetilde{\phi}_j)\frac{u_{St}}{1-\phi_S}].$
Thus, the moment condition is quadratic in idiosyncratic shocks. Since $\EE[\|\V{u}_t\|^2] < \infty$, $\EE[\sup_{\phi \in \boldsymbol{\Phi}} \| g(\V{r}, \boldsymbol{ \phi})\|] < \infty$.			
\end{enumerate}		

Since the more stringent conditions of \cite{Newey1994} Theorem 2.6 are satisfied, the conditions of Theorem 3.1 of \cite{Pakes1989} are also satisfied. 

$\widehat{W}(\boldsymbol\phi)$ is positive definite, so it can be decomposed as $\widehat{W}(\boldsymbol{ \phi}) = \widehat{A}(\boldsymbol{ \phi})'\widehat{A}(\boldsymbol{ \phi})$ where $\widehat{A}(\boldsymbol{\phi})=	\sqrt{\widehat{W}(\boldsymbol\phi)}$ where $\sqrt{\cdot}$ represents the element-wise square root operator. Next, proceed by verifying the conditions of Lemma 3.4 of \cite{Pakes1989} (preserving the same numbering as the original reference). Conditions (i) and (ii) follow from the law of large numbers, continuous mapping theorem, and $0<\sigma_i^2 < \infty$. Hence, RGIV is consistent.

\bibliographystyle{apalike}
\bibliography{references}

@article{Adrian2022,
  title = {The {{Term Structure}} of {{Growth-at-Risk}}},
  author = {Adrian, Tobias and Grinberg, Federico and Liang, Nellie and Malik, Sheheryar and Yu, Jie},
  year = 2022,
  month = jul,
  journal = {American Economic Journal: Macroeconomics},
  volume = {14},
  number = {3},
  pages = {283--323},
  issn = {1945-7707, 1945-7715},
  doi = {10.1257/mac.20180428},
  urldate = {2022-06-30},
  abstract = {We show that the conditional distribution of forecasted GDP growth depends on financial conditions in a panel of 11 advanced economies. Financial conditions have a larger effect on the lower fifth percentile of conditional growth---which we call growth-at-risk (GaR)---than the median. In addition, the term structure of GaR reflects that when initial financial conditions are loose, downside risks are lower in the near term but increase in later quarters. This intertemporal tradeoff for loose financial conditions is amplified when credit-to-GDP growth is rapid. Using granular instrumental variables, we also provide evidence that the relationship from loose financial conditions to future downside risks is causal. Our results suggest that models of macrofinancial linkages should incorporate the endogeneity of higher-order moments to systematically account for downside risks to growth in the medium run. (JEL E23, E27, E32, E44)},
  langid = {english},
  keywords = {_tablet_modified},
  file = {/Users/ericqian/Zotero/storage/ET3CXHTH/Adrian et al_2022.pdf}
}

@article{Allen2009,
  title = {Financial {{Crises}}: {{Theory}} and {{Evidence}}},
  shorttitle = {Financial {{Crises}}},
  author = {Allen, Franklin and Babus, Ana and Carletti, Elena},
  year = 2009,
  month = dec,
  journal = {Annual Review of Financial Economics},
  volume = {1},
  number = {1},
  pages = {97--116},
  issn = {1941-1367, 1941-1375},
  doi = {10.1146/annurev.financial.050808.114321},
  urldate = {2023-03-25},
  abstract = {Financial crises have occurred for many centuries. They are often preceded by a credit boom and a rise in real estate and other asset prices, as in the current crisis. They are also often associated with severe disruption in the real economy. This paper surveys the theoretical and empirical literature on crises. The first explanation of banking crises is that they are a panic. The second is that they are part of the business cycle. Modeling crises as a global game allows the two to be unified. With all the liquidity problems in interbank markets that have occurred during the current crisis, there is a growing literature on this topic. Perhaps the most serious market failure associated with crises is contagion, and there are many papers on this important topic. The relationship between asset price bubbles, particularly in real estate, and crises is discussed at length.},
  langid = {english},
  file = {/Users/ericqian/Zotero/storage/FAU9IPMI/annurev.financial.050808.114321.pdf}
}

@article{Alvarez2023,
  title = {Price {{Setting}} with {{Strategic Complementarities}} as a {{Mean Field Game}}},
  author = {Alvarez, Fernando and Lippi, Francesco and Souganidis, Takis},
  year = 2023,
  month = nov,
  journal = {Econometrica},
  volume = {91},
  number = {6},
  pages = {2005--2039},
  doi = {10.3982/ECTA20797},
  urldate = {2023-03-24},
  langid = {english},
  file = {/Users/ericqian/Zotero/storage/FSLNGJTW/w30193.pdf}
}

@article{Andrews1991,
  title = {Heteroskedasticity and {{Autocorrelation Consistent Covariance Matrix Estimation}}},
  author = {Andrews, Donald W. K.},
  year = 1991,
  month = may,
  journal = {Econometrica},
  volume = {59},
  number = {3},
  eprint = {2938229},
  eprinttype = {jstor},
  pages = {817},
  issn = {00129682},
  doi = {10.2307/2938229},
  urldate = {2023-09-28},
  langid = {english},
  file = {/Users/ericqian/Zotero/storage/4UZQB2AB/Andrews_1991_Heteroskedasticity a.pdf}
}

@article{Aquaro2021,
  title = {Estimation and Inference for Spatial Models with Heterogeneous Coefficients: {{An}} Application to {{US}} House Prices},
  shorttitle = {Estimation and Inference for Spatial Models with Heterogeneous Coefficients},
  author = {Aquaro, Michele and Bailey, Natalia and Pesaran, M. Hashem},
  year = 2021,
  month = jan,
  journal = {Journal of Applied Econometrics},
  volume = {36},
  number = {1},
  pages = {18--44},
  issn = {0883-7252, 1099-1255},
  doi = {10.1002/jae.2792},
  urldate = {2024-04-11},
  abstract = {This paper considers the estimation and inference of spatial panel data models with heterogeneous spatial lag coefficients, with and without weakly exogenous regressors, and subject to heteroskedastic errors. A quasi maximum likelihood (QML) estimation procedure is developed and the conditions for identification of the spatial coefficients are derived. The QML estimators of individual spatial coefficients, as well as their mean group estimators, are shown to be consistent and asymptotically normal. Small-sample properties of the proposed estimators are investigated by Monte Carlo simulations and results are shown to be in line with the paper's key theoretical findings, even for panels with moderate time dimensions and irrespective of the number of cross-section units. A detailed empirical application to US house price changes during the 1975--2014 period shows a significant degree of heterogeneity in spatiotemporal dynamics over the 338 Metropolitan Statistical Areas considered.},
  langid = {english},
  file = {/Users/ericqian/Zotero/storage/A4M3K8T7/Aquaro et al._2021_Estimation and infer.pdf}
}

@techreport{Auclert2023,
  type = {Working {{Paper}}},
  title = {Managing an {{Energy Shock}}: {{Fiscal}} and {{Monetary Policy}}},
  author = {Auclert, Adrien and Monnery, Hugo and Rognlie, Matthew and Straub, Ludwig},
  year = 2023,
  number = {31543},
  institution = {National Bureau of Economic Research},
  abstract = {This paper revisits the macroeconomic effects of energy price shocks in energy-importing economies. In standard representative-agent models, increases in energy prices trigger a boom in aggregate demand by reallocating spending towards domestic production. In heterogeneousagent models, by contrast, we show that they cause a recession by pushing down on real wages and therefore on consumer spending, provided that the elasticity of substitution between energy and domestic goods is realistically low. Imported energy inflation can spill over to wage inflation through a wage-price spiral, without mitigating the real wage decline. Monetary policy tightening has limited effect on imported inflation when done in isolation, but can be powerful when done in conjunction with other energy importers by lowering world energy demand. Fiscal policy, especially energy price subsidies, can isolate individual energy importers from the shock, but it has large negative externalities on other economies.},
  langid = {english},
  annotation = {Working paper},
  file = {/Users/ericqian/Zotero/storage/AP5IJC5Q/energy_shock.pdf}
}

@techreport{Banafti2022,
  type = {Working Paper},
  title = {Inferential {{Theory}} for {{Granular Instrumental Variables}} in {{High Dimensions}}},
  author = {Banafti, Saman and Lee, Tae-Hwy},
  year = 2022,
  institution = {arXiv},
  abstract = {The Granular Instrumental Variables (GIV) methodology exploits panels with factor error structures to construct instruments to estimate structural time series models with endogeneity even after controlling for latent factors. We extend the GIV methodology in several dimensions. First, we extend the identification procedure to a large N and large T framework, which depends on the asymptotic Herfindahl index of the size distribution of N cross-sectional units. Second, we treat both the factors and loadings as unknown and show that the sampling error in the estimated instrument and factors is negligible when considering the limiting distribution of the structural parameters. Third, we show that the sampling error in the high-dimensional precision matrix is negligible in our estimation algorithm. Fourth, we overidentify the structural parameters with additional constructed instruments, which leads to efficiency gains. Monte Carlo evidence is presented to support our asymptotic theory and application to the global crude oil market leads to new results.},
  langid = {english},
  keywords = {_tablet_modified},
  file = {/Users/ericqian/Zotero/storage/RX2PAIUC/Banafti_Lee_2022.pdf}
}

@techreport{Baumeister2023a,
  title = {A {{Full-Information Approach}} to {{Granular Instrumental Variables}}},
  author = {Baumeister, Christiane and Hamilton, James D},
  year = 2023,
  institution = {University of California, San Diego},
  abstract = {Modeling how individual units interact to determine aggregate outcomes can be a rich source of identifying information. We use this insight to develop a generalization of granular instrumental variables estimation and show how parameters of a dynamic structural model can be estimated using full-information maximum likelihood. We apply the method to a study of the world oil market. We conclude that the supply responses of Saudi Arabia and adjustments of inventories have historically played a key role in stabilizing the price of oil.},
  langid = {english},
  file = {/Users/ericqian/Zotero/storage/8INPNPYW/Baumeister and Hamilton_A Full-Information A.pdf}
}

@techreport{Bayoumi1992,
  title = {Shocking {{Aspects}} of {{European Monetary Unification}}},
  author = {Bayoumi, Tamim and Eichengreen, Barry},
  year = 1992,
  month = jan,
  number = {w3949},
  pages = {w3949},
  address = {Cambridge, MA},
  institution = {National Bureau of Economic Research},
  doi = {10.3386/w3949},
  urldate = {2023-09-27},
  abstract = {Data on output and prices for 11 EC member nations are analyzed to extract information on underlying aggregate supply and demand disturbances using a VAR decomposition. The coherence of the underlying shocks across countries and the speed of adjustment to these shocks are then compared to the results from US regional data. We find that the underlying shocks are significantly more idiosyncratic across EC countries than across US regions, which may indicate that the EC will find it more difficult to operate a monetary union. However a core of EC countries, made up of Germany and her immediate neighbors, experience shocks of similar magnitude and cohesion as the US regions. EC countries also exhibit a slower response to aggregate shocks than US regions, presumably reflecting lower factor mobility.},
  langid = {english},
  file = {/Users/ericqian/Zotero/storage/65MIY2UM/Bayoumi and Eichengreen_1992_Shocking Aspects of .pdf}
}

@article{Camanho2022,
  title = {Global {{Portfolio Rebalancing}} and {{Exchange Rates}}},
  author = {Camanho, Nelson and Hau, Harald and Rey, H{\'e}l{\`e}ne},
  year = 2022,
  journal = {The Review of Financial Studies},
  volume = {00},
  number = {0},
  pages = {47},
  langid = {english},
  file = {/Users/ericqian/Zotero/storage/NPES8IDP/hhac023.pdf}
}

@article{Chamberlain1983,
  title = {Arbitrage, {{Factor Structure}}, and {{Mean-Variance Analysis}} on {{Large Asset Markets}}},
  author = {Chamberlain, Gary and Rothschild, Michael},
  year = 1983,
  month = sep,
  journal = {Econometrica},
  volume = {51},
  number = {5},
  eprint = {1912275},
  eprinttype = {jstor},
  pages = {1281},
  issn = {00129682},
  doi = {10.2307/1912275},
  urldate = {2022-09-12},
  file = {/Users/ericqian/Zotero/storage/F9I6S3NC/1912275.pdf}
}

@article{Chen2022,
  title = {Estimation and Inference in Heterogeneous Spatial Panels with a Multifactor Error Structure},
  author = {Chen, Jia and Shin, Yongcheol and Zheng, Chaowen},
  year = 2022,
  month = jul,
  journal = {Journal of Econometrics},
  volume = {229},
  number = {1},
  pages = {55--79},
  issn = {03044076},
  doi = {10.1016/j.jeconom.2021.05.003},
  urldate = {2024-04-11},
  abstract = {We develop a unifying econometric framework for the analysis of heterogeneous panel data models that can account for both spatial dependence and common factors. To tackle the challenging issues of endogeneity due to the spatial lagged term and the correlation between the regressors and factors, we propose the CCEX-IV estimation procedure that approximates factors by the cross-section averages of regressors and deals with the spatial endogeneity using the internal instrumental variables. We develop the individual and Mean Group estimators, and establish their consistency and asymptotic normality. By contrast, the Pooled estimator is shown to be inconsistent in the presence of parameter heterogeneity. Monte Carlo simulations confirm that the finite sample performance of the proposed estimators is quite satisfactory. We demonstrate the usefulness of our approach with an application to the house price growth for Local Authority Districts in the UK over 1997Q1--2016Q4.},
  langid = {english},
  file = {/Users/ericqian/Zotero/storage/B5CK4AHT/Chen et al._2022_Estimation and infer.pdf}
}

@article{Cheng2015,
  title = {Select the Valid and Relevant Moments: {{An}} Information-Based {{LASSO}} for {{GMM}} with Many Moments},
  shorttitle = {Select the Valid and Relevant Moments},
  author = {Cheng, Xu and Liao, Zhipeng},
  year = 2015,
  month = jun,
  journal = {Journal of Econometrics},
  volume = {186},
  number = {2},
  pages = {443--464},
  issn = {03044076},
  doi = {10.1016/j.jeconom.2015.02.019},
  urldate = {2026-04-24},
  abstract = {This paper studies the selection of valid and relevant moments for the generalized method of moments (GMM) estimation. For applications with many candidate moments, our asymptotic analysis accommodates a diverging number of moments as the sample size increases. The proposed procedure achieves three objectives in one-step: (i) the valid and relevant moments are distinguished from the invalid or irrelevant ones; (ii) all desired moments are selected in one step instead of in a stepwise manner; (iii) the parameters of interest are automatically estimated with all selected moments as opposed to a post-selection estimation. The new method performs moment selection and efficient estimation simultaneously via an information-based adaptive GMM shrinkage estimation, where an appropriate penalty is attached to the standard GMM criterion to link moment selection to shrinkage estimation. The penalty is designed to signal both moment validity and relevance for consistent moment selection. We develop asymptotic results for the high-dimensional GMM shrinkage estimator, allowing for non-smooth sample moments and weakly dependent observations. For practical implementation, this one-step procedure is computationally attractive.},
  langid = {english},
  file = {/Users/ericqian/Zotero/storage/NWI5TXZE/Cheng and Liao_2015_Select the valid and.pdf}
}

@article{Chodorow-Reich2021a,
  title = {Asset {{Insulators}}},
  author = {{Chodorow-Reich}, Gabriel and Ghent, Andra and Haddad, Valentin},
  editor = {Koijen, Ralph},
  year = 2021,
  month = feb,
  journal = {The Review of Financial Studies},
  volume = {34},
  number = {3},
  pages = {1509--1539},
  issn = {0893-9454, 1465-7368},
  doi = {10.1093/rfs/hhaa061},
  urldate = {2022-08-25},
  abstract = {Abstract             We construct a new data set tracking the daily value of life insurers' assets at the security level. Outside of the 2008--2009 crisis, a \$\textbraceleft\textbackslash\$\textbraceright\$ 1 drop in the market value of assets reduces an insurer's market equity by \$\textbraceleft\textbackslash\$\textbraceright\$ 0.10. During the ?nancial crisis, this pass-through rises to \$\textbraceleft\textbackslash\$\textbraceright\$ 1. We explain this pattern by viewing insurance companies as asset insulators, institutions with stable, long-term liabilities that can ride out transitory dislocations in market prices. Illustrating the macroeconomic importance of insulation, insurers' market equity declined by \$\textbraceleft\textbackslash\$\textbraceright\$50 billion less than the duration-adjusted value of their securities during the crisis.},
  langid = {english},
  file = {/Users/ericqian/Zotero/storage/2HMGXTGJ/hhaa061.pdf}
}

@book{davidson1994stochastic,
  title = {Stochastic Limit Theory: {{An}} Introduction for Econometricians},
  author = {Davidson, J.},
  year = 1994,
  series = {Advanced Texts in Econometrics},
  publisher = {OUP Oxford},
  isbn = {978-0-19-152504-9}
}

@article{Donald2001,
  title = {Choosing the {{Number}} of {{Instruments}}},
  author = {Donald, Stephen G. and Newey, Whitney K.},
  year = 2001,
  month = sep,
  journal = {Econometrica},
  volume = {69},
  number = {5},
  pages = {1161--1191},
  issn = {0012-9682, 1468-0262},
  doi = {10.1111/1468-0262.00238},
  urldate = {2026-04-24},
  copyright = {http://doi.wiley.com/10.1002/tdm\_license\_1.1},
  langid = {english},
  file = {/Users/ericqian/Zotero/storage/VBQJG66R/Donald and Newey_2001_Choosing the Number.pdf}
}

@article{Fagereng2021,
  title = {{{MPC Heterogeneity}} and {{Household Balance Sheets}}},
  author = {Fagereng, Andreas and Holm, Martin B. and Natvik, Gisle J.},
  year = 2021,
  month = oct,
  journal = {American Economic Journal: Macroeconomics},
  volume = {13},
  number = {4},
  pages = {1--54},
  issn = {1945-7707},
  doi = {10.1257/mac.20190211},
  urldate = {2021-09-29},
  abstract = {We use sizable lottery prizes in Norwegian administrative panel data to explore how transitory income shocks are spent and saved over time and how households' marginal propensities to  consume (MPCs) vary with household characteristics and shock size. We find that spending peaks in the year of winning and gradually reverts to normal within five years. Controlling for all items on  households' balance sheets and characteristics such as education and income, it is the amount won, age, and liquid assets that vary systematically with MPCs. Low-liquidity winners of the smallest  prizes (around US\$1,500) are estimated to spend all within the year of winning. The corresponding estimate for high-liquidity winners of large prizes (US\$8,300--150,000) is slightly below one-half.  While conventional models will struggle to account for such high MPC levels, we show that a two-asset life cycle model with a realistic earnings profile and a luxury bequest motive can account for  both the time profile of consumption responses and their systematic covariation with observables.},
  langid = {english},
  keywords = {Consumer Economics: Empirical Analysis Intertemporal Household Choice,includes inheritance and gift taxes,Life Cycle Models and Saving Macroeconomics: Consumption,Saving,Wealth Household Finance: Household Saving Borrowing Debt and Wealth Personal Income and Other Nonbusiness Taxes and Subsidies},
  file = {/Users/ericqian/Zotero/storage/KCXXXHPM/Fagereng_Holm_Natvik_2021_annotated.pdf;/Volumes/GoogleDrive-110771550417977972773/My Drive/Papers/Fagereng_Holm_Natvik_2021.pdf}
}

@article{Fama2015,
  title = {A Five-Factor Asset Pricing Model},
  author = {Fama, Eugene F. and French, Kenneth R.},
  year = 2015,
  month = apr,
  journal = {Journal of Financial Economics},
  volume = {116},
  number = {1},
  pages = {1--22},
  issn = {0304405X},
  doi = {10.1016/j.jfineco.2014.10.010},
  urldate = {2023-09-27},
  abstract = {A five-factor model directed at capturing the size, value, profitability, and investment patterns in average stock returns performs better than the three-factor model of Fama and French (FF, 1993). The five-factor model's main problem is its failure to capture the low average returns on small stocks whose returns behave like those of firms that invest a lot despite low profitability. The model's performance is not sensitive to the way its factors are defined. With the addition of profitability and investment factors, the value factor of the FF three-factor model becomes redundant for describing average returns in the sample we examine.},
  langid = {english},
  file = {/Users/ericqian/Zotero/storage/NGBLF3VE/Fama and French_2015_A five-factor asset .pdf}
}

@article{Flynn2022,
  title = {The {{Macroeconomics}} of {{Narratives}}},
  author = {Flynn, Joel P. and Sastry, Karthik},
  year = 2022,
  journal = {SSRN Electronic Journal},
  issn = {1556-5068},
  doi = {10.2139/ssrn.4140751},
  urldate = {2023-01-15},
  abstract = {We study the macroeconomic implications of narratives, or beliefs about the economy that affect decisions and spread contagiously. Empirically, we use natural-languageprocessing methods to measure textual proxies for narratives in US public firms' endof-year reports (Forms 10-K). We find that: (i) firms' hiring decisions respond strongly to narratives, (ii) narratives spread contagiously among firms, and (iii) this spread is responsive to macroeconomic conditions. To understand the macroeconomic implications of these forces, we embed a contagious optimistic narrative in a business-cycle model. We characterize, in terms of the decision-relevance and contagiousness of narratives, when the unique equilibrium features: (i) non-fundamental business cycles, (ii) non-linear belief dynamics (narratives ``going viral'') that generate multiple stable steady states (hysteresis), and (iii) the coexistence of hump-shaped responses to small shocks with regime-shifting behavior in response to large shocks. Our empirical estimates discipline both the static, general equilibrium effect of narratives on output and their dynamics. In the calibrated model, we find that contagious optimism explains 32\% and 18\% of the output reductions over the early 2000s recession and Great Recession, respectively, as well as 19\% of the unconditional variance in output. We find that overall optimism is not sufficiently contagious to generate hysteresis, but other, more granular narratives are.},
  langid = {english},
  file = {/Users/ericqian/Zotero/storage/L649Y464/Narratives_JMP.pdf}
}

@article{Fuster2021,
  title = {What {{Would You Do}} with \$500? {{Spending Responses}} to {{Gains}}, {{Losses}}, {{News}}, and {{Loans}}},
  shorttitle = {What {{Would You Do}} with \$500?},
  author = {Fuster, Andreas and Kaplan, Greg and Zafar, Basit},
  year = 2021,
  month = jul,
  journal = {The Review of Economic Studies},
  volume = {88},
  number = {4},
  pages = {1760--1795},
  issn = {0034-6527},
  doi = {10.1093/restud/rdaa076},
  urldate = {2021-09-25},
  abstract = {We use survey questions about spending in hypothetical scenarios to investigate features of propensities to consume that are useful for distinguishing between consumption theories. We find that (1) responses to unanticipated gains are vastly heterogeneous (either zero or substantially positive); (2) responses increase in the size of the gain, driven by the extensive margin of spending adjustments; (3) responses to losses are much larger and more widespread than responses to gains; and (4) even those with large responses to gains do not respond to news about future gains. These four findings suggest that limited access to disposable resources, and frictions in adjusting consumption, are important determinants of consumption behaviour. We also find that (5) households do not respond to the offer of a one-year interest-free loan, suggesting that this is not a consequence of short-term credit constraints; and (6) people do cut spending in response to news about future losses, suggesting that neither is this a consequence of myopia. A calibrated precautionary savings model with utility costs of changing consumption, and a sufficient fraction of low-wealth households, can account for these features of propensities to consume on both the extensive and intensive margins.},
  file = {/Users/ericqian/Zotero/storage/DZXSY7XF/fuster_kaplan_zafar_final_aug2020.pdf;/Volumes/GoogleDrive-110771550417977972773/My Drive/Papers/Fuster_Kaplan_Zafar_2021_annotated.pdf;/Volumes/GoogleDrive-110771550417977972773/My Drive/Papers/Fuster_Kaplan_Zafar_2021.pdf;/Volumes/GoogleDrive-110771550417977972773/My Drive/Papers/Fuster_Kaplan_Zafar_22.pdf;/Users/ericqian/Zotero/storage/23RQXWPW/5962017.html}
}

@techreport{Gabaix2023,
  title = {In {{Search}} of the {{Origins}} of {{Financial Fluctuations}}: {{The Inelastic Markets Hypothesis}}},
  author = {Gabaix, Xavier and Koijen, Ralph S J},
  year = 2023,
  number = {w28967},
  institution = {National Bureau of Economic Research},
  abstract = {We develop a framework for analyzing stock market fluctuations, both theoretically and empirically. Households allocate capital to institutions with limited flexibility to respond to asset prices, leading to a small price elasticity of demand of the aggregate market and significant price impacts from flows in and out of the market. Using granular instrumental variables, we find that every \$1 invested increases the market value by \$5. We explore the implications of inelastic markets for macro-finance and generalize equilibrium models and pricing kernels to account for inelastic financial markets, enhancing their applicability in more realistic macroeconomic models and policy analysis.},
  langid = {english},
  file = {/Users/ericqian/Zotero/storage/ALIZGHQG/Gabaix and Koijen_In Search of the Ori.pdf}
}

@article{Gabaix2024,
  title = {Granular {{Instrumental Variables}}},
  author = {Gabaix, Xavier and Koijen, Ralph S. J.},
  year = 2024,
  month = jul,
  journal = {Journal of Political Economy},
  volume = {132},
  number = {7},
  pages = {2274--2303},
  issn = {0022-3808, 1537-534X},
  doi = {10.1086/728743},
  urldate = {2024-09-10},
  langid = {english},
  file = {/Users/ericqian/Zotero/storage/43U9KLII/Gabaix and Koijen_2024_Granular Instrumenta.pdf}
}

@article{Gilchrist2012,
  title = {Credit {{Spreads}} and {{Business Cycle Fluctuations}}},
  author = {Gilchrist, Simon and Zakrajsek, Egon},
  year = 2012,
  month = jun,
  journal = {American Economic Review},
  volume = {102},
  number = {4},
  pages = {1692--1720},
  langid = {english},
  file = {/Users/ericqian/Zotero/storage/CVP34LYU/Gilchrist and Zakrajsek_Credit Spreads and B.pdf}
}

@article{Han2006,
  title = {{{GMM}} with {{Many Moment Conditions}}},
  author = {Han, Chirok and Phillips, Peter C. B.},
  year = 2006,
  month = jan,
  journal = {Econometrica},
  volume = {74},
  number = {1},
  pages = {147--192},
  issn = {0012-9682, 1468-0262},
  doi = {10.1111/j.1468-0262.2006.00652.x},
  urldate = {2023-05-02},
  langid = {english},
  file = {/Users/ericqian/Zotero/storage/VDWM3JD6/Han and Phillips_2006_GMM with Many Moment.pdf}
}

@article{Hansen1996,
  title = {Finite-{{Sample Properties}} of {{Some Alternative GMM Estimators}}},
  author = {Hansen, Lars Peter and Heaton, John and Yaron, Amir},
  year = 1996,
  month = jul,
  journal = {Journal of Business \& Economic Statistics},
  volume = {14},
  number = {3},
  eprint = {1392442},
  eprinttype = {jstor},
  pages = {262},
  issn = {07350015},
  doi = {10.2307/1392442},
  urldate = {2023-03-14},
  file = {/Users/ericqian/Zotero/storage/WGQQ9T5H/1392442.pdf}
}

@incollection{Hausman1983a,
  title = {Chapter 7 {{Specification}} and Estimation of Simultaneous Equation Models},
  booktitle = {Handbook of {{Econometrics}}},
  author = {Hausman, Jerry A.},
  year = 1983,
  volume = {1},
  pages = {391--448},
  publisher = {Elsevier},
  doi = {10.1016/S1573-4412(83)01011-9},
  urldate = {2022-07-06},
  isbn = {978-0-444-86185-6},
  langid = {english},
  file = {/Users/ericqian/Zotero/storage/JNHD4IIG/1-s2.0-S1573441283010119-main.pdf}
}

@article{Jensen2023,
  title = {Is {{There}} a {{Replication Crisis}} in {{Finance}}?},
  author = {Jensen, Theis Ingerslev and Kelly, Bryan and Pedersen, Lasse Heje},
  year = 2023,
  month = oct,
  journal = {The Journal of Finance},
  volume = {78},
  number = {5},
  pages = {2465--2518},
  issn = {0022-1082, 1540-6261},
  doi = {10.1111/jofi.13249},
  urldate = {2026-03-27},
  abstract = {Several papers argue that financial economics faces a replication crisis because the majority of studies cannot be replicated or are the result of multiple testing of too many factors. We develop and estimate a Bayesian model of factor replication that leads to different conclusions. The majority of asset pricing factors (i) can be replicated; (ii) can be clustered into 13 themes, the majority of which are significant parts of the tangency portfolio; (iii) work out-of-sample in a new large data set covering 93 countries; and (iv) have evidence that is strengthened (not weakened) by the large number of observed factors.},
  langid = {english},
  file = {/Users/ericqian/Zotero/storage/5FRWVHE6/Jensen et al._2023_Is There a Replicati.pdf}
}

@book{Kilian2017,
  title = {Structural {{Vector Autoregressive Analysis}}:},
  shorttitle = {Structural {{Vector Autoregressive Analysis}}},
  author = {Kilian, Lutz and L{\"u}tkepohl, Helmut},
  year = 2017,
  month = nov,
  edition = {1},
  publisher = {Cambridge University Press},
  doi = {10.1017/9781108164818},
  urldate = {2022-10-15},
  isbn = {978-1-107-19657-5 978-1-108-16481-8 978-1-316-64733-2},
  langid = {english},
  file = {/Users/ericqian/Zotero/storage/VCY5WGMH/inference-in-models-identified-by-short-run-or-long-run-restrictions.pdf}
}

@article{Lazarus2018,
  title = {{{HAR Inference}}: {{Recommendations}} for {{Practice}}},
  shorttitle = {{{HAR Inference}}},
  author = {Lazarus, Eben and Lewis, Daniel J. and Stock, James H. and Watson, Mark W.},
  year = 2018,
  month = oct,
  journal = {Journal of Business \& Economic Statistics},
  volume = {36},
  number = {4},
  pages = {541--559},
  issn = {0735-0015, 1537-2707},
  doi = {10.1080/07350015.2018.1506926},
  urldate = {2022-09-05},
  langid = {english},
  keywords = {_tablet_modified},
  file = {/Users/ericqian/Zotero/storage/2VQGQMWV/Lazarus et al_2018.pdf}
}

@article{Leeper1996,
  title = {What {{Does Monetary Policy Do}}?},
  author = {Leeper, Eric M. and Sims, Christopher A. and Zha, Tao and Hall, Robert E. and Bernanke, Ben S.},
  year = 1996,
  journal = {Brookings Papers on Economic Activity},
  volume = {1996},
  number = {2},
  eprint = {2534619},
  eprinttype = {jstor},
  pages = {1},
  issn = {00072303},
  doi = {10.2307/2534619},
  urldate = {2023-03-27},
  langid = {english},
  file = {/Users/ericqian/Zotero/storage/8IF2LA46/1996b_bpea_leeper_sims_zha_hall_bernanke.pdf}
}

@techreport{Lewis2021,
  type = {Working Paper},
  title = {Latent {{Heterogeneity}} in the {{Marginal Propensity}} to {{Consume}}},
  author = {Lewis, Daniel J. and Melcangi, Davide and Pilossoph, Laura},
  year = 2021,
  number = {902},
  institution = {Federal Reserve Bank of New York},
  issn = {1556-5068},
  doi = {10.2139/ssrn.3489434},
  urldate = {2021-09-16},
  abstract = {We estimate the unconditional distribution of the marginal propensity to consume (MPC) using clustering regression and the 2008 stimulus payments. Since we do not measure heterogeneity as the variation of MPCs with observables, we can recover the full distribution of MPCs. Households spent at least one quarter of the rebate, and individual households used rebates for different goods. While many observables are individually correlated with our estimated MPCs, these relationships disappear when tested jointly, except for nonsalary income and the average propensity to consume. Household observables explain at most one quarter of MPC variation, highlighting the role of unobserved heterogeneity.},
  langid = {english},
  file = {/Users/ericqian/Princeton/Papers/Lewis_Melcangi_Pilossoph_2021.pdf}
}

@article{Manski1993,
  title = {Identification of {{Endogenous Social Effects}}: {{The Reflection Problem}}},
  shorttitle = {Identification of {{Endogenous Social Effects}}},
  author = {Manski, Charles F.},
  year = 1993,
  month = jul,
  journal = {The Review of Economic Studies},
  volume = {60},
  number = {3},
  pages = {531},
  issn = {00346527},
  doi = {10.2307/2298123},
  urldate = {2025-03-03},
  langid = {english},
  file = {/Users/ericqian/Zotero/storage/MSA9LFMG/Manski_1993_Identification of En.pdf}
}

@incollection{Newey1994,
  title = {Chapter 36 {{Large}} Sample Estimation and Hypothesis Testing},
  booktitle = {Handbook of {{Econometrics}}},
  author = {Newey, Whitney K. and McFadden, Daniel},
  year = 1994,
  month = jan,
  volume = {4},
  pages = {2111--2245},
  publisher = {Elsevier},
  doi = {10.1016/S1573-4412(05)80005-4},
  urldate = {2021-09-02},
  abstract = {Asymptotic distribution theory is the primary method used to examine the properties of econometric estimators and tests. We present conditions for obtaining cosistency and asymptotic normality of a very general class of estimators (extremum estimators). Consistent asymptotic variance estimators are given to enable approximation of the asymptotic distribution. Asymptotic efficiency is another desirable property then considered. Throughout the chapter, the general results are also specialized to common econometric estimators (e.g. MLE and GMM), and in specific examples we work through the conditions for the various results in detail. The results are also extended to two-step estimators (with finite-dimensional parameter estimation in the first step), estimators derived from nonsmooth objective functions, and semiparametric two-step estimators (with nonparametric estimation of an infinite-dimensional parameter in the first step). Finally, the trinity of test statistics is considered within the quite general setting of GMM estimation, and numerous examples are given.},
  langid = {english},
  file = {/Users/ericqian/Zotero/storage/3ITU2DPL/Newey and McFadden_1994_Chapter 36 Large sam.pdf}
}

@article{Pakes1989,
  title = {Simulation and the {{Asymptotics}} of {{Optimization Estimators}}},
  author = {Pakes, Ariel and Pollard, David},
  year = 1989,
  month = sep,
  journal = {Econometrica},
  volume = {57},
  number = {5},
  eprint = {1913622},
  eprinttype = {jstor},
  pages = {1027},
  issn = {00129682},
  doi = {10.2307/1913622},
  urldate = {2023-03-14},
  langid = {english},
  file = {/Users/ericqian/Zotero/storage/FJLT444A/Pakes and Pollard_1989_Simulation and the A.pdf}
}

@article{Sims1980,
  title = {Macroeconomics and {{Reality}}},
  author = {Sims, Christopher A.},
  year = 1980,
  month = jan,
  journal = {Econometrica},
  volume = {48},
  number = {1},
  eprint = {1912017},
  eprinttype = {jstor},
  pages = {1},
  issn = {00129682},
  doi = {10.2307/1912017},
  urldate = {2022-08-31},
  langid = {english},
  file = {/Users/ericqian/Zotero/storage/LKSXYBM8/1912017.pdf}
}

@article{Su2023,
  title = {Identifying Latent Group Structures in Spatial Dynamic Panels},
  author = {Su, Liangjun and Wang, Wuyi and Xu, Xingbai},
  year = 2023,
  month = aug,
  journal = {Journal of Econometrics},
  volume = {235},
  number = {2},
  pages = {1955--1980},
  issn = {03044076},
  doi = {10.1016/j.jeconom.2023.02.007},
  urldate = {2024-04-11},
  abstract = {This paper considers the identification of latent group structures in spatial dynamic panels. We follow Lee and Yu (2010) and consider a rich spatial dynamic panel data (SDPD) model with two-way fixed effects. In addition, we also allow for latent panel structures where individuals can be classified into a few groups such that individuals within the same group share the common slope parameters and do not otherwise. Both the number of groups and the individuals' group membership are unknown. To identify the latent group structures, we first adopt the GMM to obtain the preliminary unconstrained estimates of the slope coefficients. Then we apply the sequential binary segmentation algorithm (SBSA) of Wang and Su (2021) to these estimates and obtain the clusters. A BIC-type information criterion is proposed to choose the number of latent groups consistently. The asymptotic analysis shows that this method can identify the true group structure consistently, and the post-classification estimators enjoy the oracle property. Monte Carlo simulations demonstrate that our method has good finite sample performance. Finally, we apply our approach to the US housing market and identify two latent Metropolitan Statistical Areas groups.},
  langid = {english},
  file = {/Users/ericqian/Zotero/storage/7SHTAWD4/Su et al._2023_Identifying latent g.pdf}
}

\newpage
\section{Online appendix}
			\label{sec:additional_results}

\subsection{Proof of Theorem \ref{thm:gmm_asymptotic_normality}}
\label{thm:gmm_asymptotic_normality_proof}
		
I will first show that the hypotheses of Theorem 3.3 of \cite{Pakes1989} are satisfied by Assumption \ref{ass:general}. Doing so gives asymptotic normality of a GMM estimator with moment function $g(\V{r}_t, \boldsymbol{ \phi})$ and the weight matrix taken to be the identity matrix. I begin by verifying the (more stringent) conditions of \cite{Newey1994} Theorem 3.4 where the weight matrix is taken to be the identity matrix. Then, I close by showing that the estimator with weight matrix $\widehat{W}(\boldsymbol{ \phi})$ satisfies Lemma 3.5 of \cite{Pakes1989}, implying that the RGIV estimator is asymptotically normal with asymptotic variance $(G'WG)^{-1}G'W \Sigma W G (G'WG)^{-1}$.
		
Below, I verify the conditions of \cite{Newey1994} Theorem 3.4. Before doing so, I precompute $\nabla_\theta g(\V{r}_t, \widetilde{\boldsymbol{ \phi}})$ for $\widetilde{\boldsymbol{ \phi}} \in \boldsymbol{\Phi}$
{\small\begin{align*}
	\nabla_\phi g(\V{r}_t, \widetilde{\boldsymbol{ \phi}}) &= \begin{bmatrix}
		M_2 \\
		M_3 \\
		\vdots \\
		M_n
	\end{bmatrix} 
	M_i = \begin{bmatrix}
		0_{(n-i+1) \times (i-2) } & \begin{matrix}
			-r_{St} (r_{it} - \widetilde{\phi}_i r_{St}) \\
			-r_{St} (r_{(i+1)t} - \widetilde{\phi}_{i+1} r_{St}) \\				
			\vdots \\
			-r_{St} (r_{nt} - \widetilde{\phi}_n r_{St}) \\				
		\end{matrix} & -I_{n-i+1} \cdot r_{St} (r_{(i-1)t} - \widetilde{\phi}_{i-1}r_{St})
	\end{bmatrix}.
\end{align*}}
Also, for more compact notation, let $K = \sum_{i=1}^n S_i ^2 \sigma_i^2$ where $\sigma_i^2$ is given by Assumption \ref{ass:general_shockmoments}. Moreover, the hypotheses of \cite{Newey1994} Theorem 2.6 are satisfied in  Theorem \ref{thm:gmm_consistency}. I check the hypotheses of Theorem 3.4 of \cite{Newey1994} for identity weight matrix $\widehat{W} = I_{n(n-1)/2}$ (preserving the numbering of the original reference):

\begin{enumerate}[label=\roman*., leftmargin=*]
	\item By Assumption \ref{ass:general_parameterspace}, $\boldsymbol{ \phi}_0 \in \mathrm{interior}(\boldsymbol{\Phi})$.
	\item Inspecting the functional form of $\nabla_{\theta} g(\V{r}, \widetilde{\boldsymbol{ \phi}})$, $g(\V{r}, \widetilde{\boldsymbol{ \phi}})$ is continuously differentiable in a neighborhood $\nn$ of $\boldsymbol{ \phi}_0$ with probability approaching 1.

	\item $\EE[g(\V{r}, \boldsymbol{ \phi}_0)]=0$ is shown in Lemma \ref{lemma:gmm_id}.  The square of the shock orthogonality condition between countries $i$ and $j$ is bounded $\EE[u_{it}^2u_{jt}^2]^2 \leq \EE[u_{it}^4] \EE[u_{jt}^4] < \infty.$ Thus, $\EE[\| g(\V{r}, \boldsymbol{ \phi}_0) \|^2]$	is finite.

	\item Fix $i \in 1,...,n$ and $\widetilde{\phi} \in \boldsymbol{\Phi}$. Then
	\begin{align*}
		\| -r_{St} &(r_{it} - \widetilde{\phi}_i r_{St}) \| = \| -r_{St} \big(u_{it} + (\phi_i - \widetilde{\phi}_i)r_{St}\big) \| \tag{Assumption \ref{ass:general_reducedform}} \\
		&\leq \| r_{St} u_{it} \| + \|\phi_i - \widetilde{\phi}_i\| \cdot \| r_{St}^2\| =  \| \frac{u_{St}}{1-\phi_S} u_{it} \| + \|\phi_i - \widetilde{\phi}_i\| \cdot \| \big(\frac{u_{St}}{1-\phi_S} \big)^2\|. 
	\end{align*}
	
	The above expression is quadratic in idiosyncratic shocks $u_{it}$. From Assumption \ref{ass:general_shockmoments}, $\EE[\V{u}_t \V{u}_t']$ exists and $\EE[\sup_{\boldsymbol{ \phi} \in \boldsymbol{\Phi}} \| -r_{St} (r_{it} - \widetilde{\phi}_i r_{St}) \| ] < \infty$, so $\EE[\sup_{\boldsymbol{ \phi}\in \boldsymbol{\Phi}} \| \nabla_\phi g(r, \boldsymbol{ \phi})\|] < \infty$.

	\item  To show $G'G$ is full rank, it suffices to show that $\textrm{rank}(G) = n$. Recall $G = \EE[\nabla_{\phi} g(\V{r}_t, \boldsymbol{ \phi}_0)]$. Since $\EE[-r_{St}(r_{it} - \phi_i r_{St})] = \EE[r_{St} u _{it}] = -\EE\big[\frac{u_{St} u_{it}}{1-\phi_S}\big] = -\frac{S_i \sigma^2_i}{1-\phi_S}$, $G$ can be explicitly computed:
{\small	\begin{align*}
			G& = \begin{bmatrix}
	\EE(M_2) \\
	\EE(M_3) \\
	\vdots \\
	\EE(M_n)
\end{bmatrix} , \quad \EE(M_i) = -\frac{1}{1-\phi_S} \begin{bmatrix}
	0_{(n-i+1) \times (i-2) } & \begin{matrix}
		S_i \sigma^2_i	\\
		S_{i+1} \sigma^2_{i+1}\\				
		\vdots \\
		S_n \sigma^2_n\\				
	\end{matrix} & I_{n-i+1} \cdot S_{i-1}\sigma^2_{i-1}
\end{bmatrix}.
	\end{align*}}

To show $G$ is full rank, consider the $n \times n$ submatrix $G_{1:n, 1:n}$ formed by taking the first $n$ rows of $G$. Then, the determinant of $G'_{1:n, 1:n}$ can be computed by applying elementary row operations:
\begingroup
\allowdisplaybreaks
{\scriptsize\begin{align*}
	&\det(G_{1:n, 1:n}') = \det\left(-\frac{1}{1-\phi_S} \begin{bmatrix}
		\begin{matrix}
			S_2 \sigma^2_2 & S_3 \sigma^2_3 & \dots & S_n \sigma^2_n 
		\end{matrix} & 0 \\
		S_1 \sigma^2_1 I_{n-1} & \begin{matrix}
			S_3 \sigma^2_3 \\
			S_2  \sigma^2_2 \\
			0 \\
			\vdots\\
			0
		\end{matrix}
	\end{bmatrix}\right) \\
	&= (-1)^{n-1}\big(-\frac{1}{1-\phi_S}\big)^n \big[S_1 \sigma^2_1\big]^{n-1} \big[ -\frac{S_3 \sigma^2_3}{S_1 \sigma^2_1} S_2 \sigma^2_2 -\frac{S_2 \sigma^2_2}{S_1 \sigma^2_1} S_3 \sigma^2_3  \big] = 2 \Big(\frac{1}{1-\phi_S}\Big)^n (S_1 \sigma^2_1)^{n-2} S_2 S_3 \sigma^2_2 \sigma_3^2 \neq 0.
\end{align*}}
\endgroup

The determinant in line 2 is the product of the matrix's diagonal elements because the matrix is upper triangular. Since $1-\phi_S \neq 0$, $S_i > 0$, and  $\sigma^2_i >0$, the determinant is not zero. Hence, submatrix $G'_{1:n, 1:n}$ is full rank, which immediately implies $G$ is also full rank	$n = \textrm{rank}(G'_{1:n, 1:n}) \leq \textrm{rank}(G) \leq \min(\frac{n(n-1)}{2},n) = n.$

\end{enumerate}

Since the more stringent conditions of \cite{Newey1994} Theorem 3.4(i)-(v) are satisfied, the conditions of Theorem 3.3 of \cite{Pakes1989} are also satisfied. Continue by verifying the conditions of Lemma 3.5 of \cite{Pakes1989}. Recall from the proof of Theorem \ref{thm:gmm_consistency_proof} that the weight matrix can be decomposed as
$\widehat{W}(\boldsymbol{ \phi}) = \widehat{A}(\boldsymbol{ \phi})'\widehat{A}(\boldsymbol{ \phi})$ where $\widehat{A}(\boldsymbol{\phi})=	\mathrm{diag}\Big(\sqrt{\frac{1}{T} \sum_{t=1}^T g(\V{r}_t, \boldsymbol{\phi})^{\circ 2}} \Big)^{-1}$ for element-wise square root operator $\sqrt{\cdot}$. From the continuous mapping theorem and law of large numbers,	$\widehat{A}(\boldsymbol{ \phi}_0) \xrightarrow{p} \mathrm{diag}(1/(\sigma_1 \sigma_2),\dots, 1/(\sigma_1 \sigma_n), 1/(\sigma_2 \sigma_3),\dots, 1/(\sigma_{n-1} \sigma_n))=A.$
Let $\{\delta_n\}$ be a sequence of positive numbers that converges to zero. Then,	$\sup_{\| \boldsymbol{ \phi}- \boldsymbol{ \phi}_0 \| < \delta_n} \| \widehat{A}(\boldsymbol{ \phi}) - A\| = o_p(1)$ from continuity of the map from $\boldsymbol{ \phi}$ to $\widehat{A}(\boldsymbol{ \phi})$ at $\boldsymbol{ \phi} = \boldsymbol{ \phi}_0$. Hence, the RGIV estimator is asymptotically normal with asymptotic variance $(G'WG)^{-1}G'W \Sigma W G (G'WG)^{-1}$ for $W = A(\boldsymbol{ \phi}_0)'A(\boldsymbol{ \phi}_0) = \mathrm{diag}(1/(\sigma_1^2 \sigma_2^2),\dots, 1/(\sigma_1^2 \sigma_n^2), 1/(\sigma_2^2\sigma_3^2),\dots, 1/(\sigma_{n-1}^2 \sigma_n^2)).$

			\subsection{Proof of Proposition \ref{prop:GIV_het_elasticities}}
\label{prop:GIV_het_elasticities_proof}

Before computing the GIV estimand, $\EE[(u_{St} - u_{Et})u_{Et}]$, $\EE[z_t u_{Et}]$, and $\EE[(r_{St} - r_{Et})r_{St}]$ are precomputed below:
{\small
\begin{align*}
	\EE[(u_{St} - u_{Et})u_{Et}] &=  \EE\Big[ \big(\sum_{i=1}^n (S_i - \frac{1}{n})u_{it}\big)\frac{1}{n}\sum_{i=1}^n u_{it}  \Big] = \frac{1}{n} \sum_{i=1}^n (S_i - \frac{1}{n})\sigma^2 = 0. \\
	\EE[z_t u_{Et}] &= \EE[(r_{St} - r_{Et})u_{Et}] = \EE[(\phi_S - \phi_E) r_{St}u_{Et} + (u_{St} - u_{Et})u_{Et}] \\
	&= (\phi_S - \phi_E)\EE[r_{St} u_{Et}]  = \frac{\phi_S - \phi_E}{1-\phi_S} \EE[u_{St} u_{Et}]= \frac{\phi_S - \phi_E}{1-\phi_S} \frac{1}{n} \sigma^2. \\
\EE[(r_{St} - r_{Et})r_{St}]&= (\phi_S - \phi_E) \EE[r_{St}^2] + \EE[r_{St}(u_{St} - u_{Et})] = \frac{(\phi_S - \phi_E) \sigma^2 \sum_{i=1}^n S_i^2}{(1-\phi_S)^2} + \frac{\sigma^2}{1-\phi_S}[\sum_{i=1}^nS_i^2 - \frac{1}{n} ]. 
\end{align*}}

Combining, the GIV estimand is $\frac{\EE[z_t r_{Et}]}{\EE[z_t r_{St}]} = \frac{\EE[z_t (\phi_E r_{St} + u_{Et})]}{\EE[z_t r_{St}]} =  \phi_E + \frac{\phi_S - \phi_E}{n} \frac{1}{\frac{\phi_S - \phi_E}{1-\phi_S} \sum_{i=1}^n S_i^2 -\frac{1}{n} + \sum_{i=1}^n S_i^2}.$

\subsection{Proof of Lemma \ref{lemma:notid}}
\label{lemma:notid_proof}

	The size-weighted outcome variable can be written as idiosyncratic shocks and the common factor	$r_{St} = \phi_S r_{St} + \lambda_S f_t + u_{St} = \frac{1}{1-\phi_S}[\lambda_S f_t + u_{St}].$	For true parameter values $\bm{\theta} = [\bm{\phi}', \bm{\lambda}']'$. We will show that the system of equations given by the moment conditions doesn't admit a unique solution; it needn't  be true that $\widetilde{\bm\theta} = \bm\theta$.

	For units $i\neq j$, the moment condition can be written as
{\small	\begin{align}
		\widetilde{\lambda}_i \widetilde{\lambda}_j &= \EE\big\{ [r_{it} - \widetilde{\phi}_i r_{St}] [r_{jt} - \widetilde{\phi}_j r_{St}]   \big\} = \EE \big\{ [(\phi_i - \widetilde{\phi}_i ) r_{St} + \lambda_if_t + u_{it}][(\phi_j - \widetilde{\phi}_j ) r_{St} + \lambda_jf_t + u_{jt}] \big\} \nonumber\\
		&=(\phi_i - \widetilde{\phi}_i)(\phi_j - \widetilde{\phi}_j) \Big[\frac{\lambda_S^2 + \sum_{k=1}^n S_k^2 \sigma^2_k }{(1-\phi_S)^2}\Big] + (\phi_i - \widetilde{\phi}_i) \lambda_j \frac{\lambda_S}{1-\phi_S} \nonumber \\
		&\quad + (\phi_j - \widetilde{\phi}_j) \lambda_i \frac{\lambda_S}{1-\phi_S} + (\phi_i - \widetilde{\phi}_i) \frac{S_j \sigma^2_j}{1-\phi_S} + (\phi_j - \widetilde{\phi}_j) \frac{S_i \sigma^2_i}{1-\phi_S} + \lambda_i \lambda_j \label{eq:ortho}
	\end{align}
}One solution to above is $\widetilde{\bm{\theta}} = \bm{\theta}$. 	Next, guess that another solution satisfies the restrictions $\widetilde{\phi}_1 = \phi_1$ and $\widetilde{\lambda}_1 = \lambda_1$. From this guess, $\phi_k$ for $k>1$ can be expressed in terms of $\phi_1$
	\begin{align}
		\phi_k - \widetilde{\phi}_k &= \frac{\widetilde{\lambda}_k \widetilde{\lambda}_1 - \lambda_k \lambda_1 - (\phi_1 - \widetilde{\phi}_1)\Big[\frac{\lambda_k \lambda_S + S_k \sigma^2_k}{1-\phi_S}\Big] }{\frac{\phi_1 - \widetilde{\phi}_1}{(1- \phi_S)^2}K + \frac{\lambda_1 \lambda_S + S_1 \sigma^2_1}{1-\phi_S} } = \lambda_1(1-\phi_S) \frac{\widetilde{\lambda}_k  - \lambda_k }{  \lambda_1 \lambda_S + S_1 \sigma^2_1 }. \label{eq:phik_unit1}		
	\end{align}
	The final equality uses the restrictions $\widetilde{\phi}_1 = \phi_1$ and $\widetilde{\lambda}_1 = \lambda_1$. Storing constants as  $	K_j = \frac{\lambda_1 (\lambda_j \lambda_S + S_j \sigma^2_j)}{\lambda_1 \lambda_S + S_1 \sigma^2_1}$ for $j > 1$ and $K_0 = \frac{\lambda_1^2 K}{(\lambda_1 \lambda_S + S_1 \sigma^2_1)^2}$, substituting Equation \ref{eq:phik_unit1} into  Equation \ref{eq:ortho} gives
	\begin{align}
		\widetilde{\lambda}_i \widetilde{\lambda}_j &= K_0 (\widetilde{\lambda}_i - \lambda_i)(\widetilde{\lambda}_j - \lambda_j) + K_j(\widetilde{\lambda}_i - \lambda_i) + K_i(\widetilde{\lambda}_j - \lambda_j) + \lambda_i \lambda_j.
		\label{eq:ortho_2}
	\end{align}
	
	Consider the guess $\widetilde{\lambda}_j = \frac{\lambda_j - 2K_j + K_0 \lambda_j}{K_0 - 1}$. Continue by verifying that the guess satisfies Equation \ref{eq:ortho_2}. Expanding, the left-hand side can be written as
	{\small
		\begin{align*}
			LHS = \widetilde{\lambda}_i \widetilde{\lambda}_j = \frac{2K_0 \lambda_i \lambda_j - 2K_0 K_j \lambda_i - 2K_0 K_i \lambda_j + 4 K_i K_j - 2K_j \lambda_i - 2 K_i \lambda_j + \lambda_i \lambda_j K_0^2 + \lambda_i \lambda_j}{(K_0 - 1)^2}.
		\end{align*}
	}	
	
	Noting that $\widetilde{\lambda}_j - \lambda_j = 2\frac{\lambda_j - K_j}{K_0 - 1}$, the RHS of Equation \ref{eq:ortho_2} can be written as 
{	\small
	\begin{align*}
		RHS &= \frac{1}{(K_0 - 1)^2}\{4 K_0 (\lambda_i - K_i)(\lambda_j - K_j) + 2K_j (\lambda_i - K_i)(K_0 - 1) \\
		&\quad +2K_i (\lambda_j - K_j) (K_0 - 1) + \lambda_i \lambda_j (K_0-1)^2 \} = LHS.
	\end{align*}}

			\subsection{Proof of Lemma \ref{lemma:no_id_factor}}
				\label{app:no_id_factor}
			Let $\boldsymbol{\phi}$ be the true spillover coefficient and let $\check{\boldsymbol{\phi}} = \boldsymbol{\phi} - \widetilde{\boldsymbol{\phi}}$. Let 
\begin{align*}
	K_0 &= \EE[r_{St}^2] = \Big(\frac{1}{1-\phi_S}\Big)^2 \Big[\lambda_S^2 + \sum_{i=1}^n S_i^2 \sigma^2_i\Big] \\
	K_i &= \frac{\lambda_i \lambda_S + S_i \sigma^2_i}{1-\phi_S} \text{ for } i = 1,\ldots,n.
\end{align*}	
Then the element of $g_0(\widetilde{\boldsymbol{\phi}})$ corresponding to shocks $i$ and $j$ is $	\check{\phi}_i \check{\phi}_j K_0 + \check{\phi}_i K_j + \check{\phi}_j K_i + \lambda_i \lambda_j.$

	Now prove the contrapositive. Suppose there exists $\widetilde{\boldsymbol{\phi}}$ (and corresponding $\check{\boldsymbol{\phi}}$) such that
\begin{align}
	0 &= \check{\phi}_i \check{\phi}_j K_0 + \check{\phi}_i K_j + \check{\phi}_j K_i + \lambda_i \lambda_j 	\label{eq:moment_ij_singlefactor}
\end{align}	
	 for all distinct pairs $i,j$. We want to show that this implies there exists $c \in \mathbb{R}$ such that $S_i\sigma^2_i/\lambda_i = c$ for all but at most one $i \in \{1,\ldots,n\}$.  
	
	Define $A_{mn} \equiv K_m K_n - \lambda_m \lambda_n K_0$. Multiplying Equation \ref{eq:moment_ij_singlefactor} by $K_0$ and factoring:
\begin{equation}
    (\check{\phi}_i K_0 + K_i)(\check{\phi}_j K_0 + K_j) = A_{ij} \quad \text{for all } 1 \leq i < j \leq n. \label{eq:factored}
\end{equation}
Then, for any four distinct indices $i, j, k, l$, the following equality must hold
\begin{equation}
    (\check{\phi}_i K_0 + K_i)(\check{\phi}_j K_0 + K_j)(\check{\phi}_k K_0 + K_k)(\check{\phi}_l K_0 + K_l) = A_{ij}A_{kl} = A_{ik}A_{jl} = A_{il}A_{jk}. \label{eq:pairwise}
\end{equation}

Now let's derive the condition. Consider the second to last inequality of Equation \ref{eq:pairwise}
\begin{align}
    0 &= A_{ij} A_{kl} - A_{ik}A_{jl} = K_0 (\lambda_l K_i - \lambda_i K_l)(\lambda_j K_k - \lambda_k K_j) \nonumber\\
	&= \frac{K_0}{(1-\phi_S)^2}(\lambda_l S_i \sigma^2_i - \lambda_i S_l \sigma^2_l)(\lambda_j S_k \sigma^2_k - \lambda_k S_j \sigma^2_j).
	\label{eq:diff1}
\end{align}
Then Equation \ref{eq:diff1} holds if and only if  $S_i\sigma^2_i/\lambda_i = S_l\sigma^2_l/\lambda_l$ or $S_j\sigma^2_j/\lambda_j = S_k\sigma^2_k/\lambda_k$. An identical calculation shows that $A_{ik}A_{jl} = A_{il}A_{jk}$ (last equality of Equation \ref{eq:pairwise})  if and only if $S_i\sigma^2_i/\lambda_i = S_j\sigma^2_j/\lambda_j$ or $S_k\sigma^2_k/\lambda_k = S_l\sigma^2_l/\lambda_l$. Putting both conditions together, at least three of $\{S_i\sigma^2_i/\lambda_i, S_j\sigma^2_j/\lambda_j, S_k\sigma^2_k/\lambda_k, S_l\sigma^2_l/\lambda_l\}$ must be equal. The result for $n=4$ immediately follows.

 For $n > 4$, now let $i,j,k,l$ be an arbitrary set of four distinct indices. Use contradiction: suppose $i, j$ satisfy $S_i\sigma^2_i/\lambda_i \neq c$ and $S_j\sigma^2_j/\lambda_j \neq c$. We have a contradiction since it is not true that at least three of $\{S_i\sigma^2_i/\lambda_i, S_j\sigma^2_j/\lambda_j, S_k\sigma^2_k/\lambda_k, S_l\sigma^2_l/\lambda_l\}$ are equal.
			
		\subsection{RGIV with observed explanatory variables}
\label{sec:RGIV_controls}

This section contains formal results for the RGIV model with observed explanatory variables. 

Assumption \ref{ass:general_controls} (below) establishes conditions necessary for identification, consistency, and asymptotic normality. Assumption \ref{ass:general_controls} extends Assumption \ref{ass:general} to include control variables $\V{x}_{it}$ that are independent of idiosyncratic shocks $u_{it}$. In the derivations that follow, it will be convenient to reparametrize the coefficient on the control variables as $\boldsymbol{\psi}_i = \boldsymbol{{\beta}}_i + \frac{S_i \phi_i}{1-\phi_S} \boldsymbol{{\beta}}_i$. Then, store the re-parametrized coefficients in vector $\boldsymbol{\psi} = [\boldsymbol{\psi}_1',\dots \boldsymbol{\psi}_n']'$ in vector $\boldsymbol{\theta} = [\boldsymbol{ \phi}', \boldsymbol{\psi}']'$ and its corresponding parameter space as $\boldsymbol{\Theta}$.

\begin{assumption}{(Baseline model)}
	
	\label{ass:general_controls}
	\begin{enumerate}[label = (\roman*), ref= \theassumption(\roman*)]
		\item \textbf{Model}: For $n  \geq 3$ units, let fixed sizes $S_i \in (0,1)$ sum to 1.  Outcome $\V{r}_t =[r_{1t},...,r_{nt}]'$ responds to the size-aggregated outcome $r_{St}$ according to spillover coefficient $\phi_i$, coefficients $\boldsymbol{ \beta}_i$ ($k \times 1$), observed control variables $\V{x}_{it}$ ($k \times 1$), and unobserved shocks $\V{u}_{t} = [u_{1t}, u_{2t}, ...,u_{nt}]'$, \label{ass:general_reducedform_controls} $r_{it}  = \phi_i r_{St} +  \boldsymbol{ \beta}_i' \V{x}_{it} + u_{it}$ for all $i = 1,...,n$ where $\phi_S < 1.$

		\item \textbf{Moments}: For $\sigma^2_{i} > 0$, shocks $\V{u}_t$ are i.i.d. with moments $\EE(\V{u}_t) = 0, \quad \EE(\V{u}_t \V{u}_t') =  \mathrm{diag}(\sigma^2_1,...,\sigma^2_n), \quad \EE(\| \V{u}_t\|^4) < \infty.$ Moreover, $u_{it} \indep u_{jt}$ for $i \neq j$. $\EE(\V{x}_{it}\V{x}_{it}') = \Sigma_{XX}^i$ for positive definite $\Sigma_{XX}^i$ and $\V{x}_{it} \indep u_{it}$. \label{ass:general_shockmoments_controls}

		\item \textbf{Parameter space}: 
		For spillover coefficient $\boldsymbol{\phi} = [\phi_1,\dots,\phi_n]'$,  store parameters in $\check{\boldsymbol{ \theta}} = [\boldsymbol{ \phi}', \boldsymbol{ \beta}_1',\dots, \boldsymbol{ \beta}_n']'$. Then, the true parameter $\check{\boldsymbol{\theta}}_0$ is in the interior of parameter space $\check{\boldsymbol{\Theta}}$.  $\check{\boldsymbol{\Theta}}$ is compact and for any $\check{\boldsymbol{\widetilde{\theta}}}  \in \check{\boldsymbol{\Theta}}$, $\widetilde{\phi}_S   < 1$. \label{ass:general_parameterspace_controls}
	\end{enumerate}
\end{assumption}

Definition \ref{defn:naive_controls} (below) extends the RGIV estimator to include observed explanatory variables. Step 1 estimates the total effect of control variables $\V{x}_{it}$ on the outcome $r_{it}$. Step 2 writes the moment uncorrelatedness condition after subtracting the variation induced by the observed explanatory variables. See Section \ref{sec:extensions_controls} for intuition.

\begin{definition}
	\label{defn:naive_controls}
	Store data $\V{z}_t = [\V{r}_t', \V{x}_{1t}', \dots, \V{x}_{nt}']'$ for  outcome variable $\V{r}_t = [r_{1t}, \dots,r_{nt}]'$. Let $u_{i}(\V{z}_t, \boldsymbol{\phi}; \boldsymbol{\psi}) = (r_{it} - \boldsymbol{ \psi}_i' \V{x}_{it})  - \phi_i \sum_{\ell=1}^n S_\ell (r_{\ell t} - \boldsymbol{ \psi}_\ell' \V{x}_{\ell t})$ for $i=1,...,n$.
	The moment function $g^c(\V{z}_t, \boldsymbol{\phi}; \boldsymbol{\psi})$  is
{\small
	\begin{align*}
		g^c(\V{z}_t, \boldsymbol{\phi}; \boldsymbol{\psi}) &= [u_{1}(\V{z}_t, \boldsymbol{\phi}; \boldsymbol{\psi}) u_{2}(\V{z}_t, \boldsymbol{\phi}; \boldsymbol{\psi}), \dots, \, u_{1}(\V{z}_t, \boldsymbol{\phi}; \boldsymbol{\psi}) u_{n}(\V{z}_t, \boldsymbol{\phi}; \boldsymbol{\psi}),\\
		&\quad \, \text{ }u_{2}(\V{z}_t, \boldsymbol{\phi}; \boldsymbol{\psi}) u_{3}(\V{z}_t, \boldsymbol{\phi}; \boldsymbol{\psi}), \dots ,\text{}u_{n-1}(\V{z}_t, \boldsymbol{\phi}; \boldsymbol{\psi}) u_{n}(\V{z}_t, \boldsymbol{\phi}; \boldsymbol{\psi})]' .
	\end{align*}}
	
	Let the sample weight matrix $\widehat{W}$ be defined as $\widehat{W} = \mathrm{diag}\Big(\frac{1}{\widehat{\sigma}_1^2 \widehat{\sigma}_2^2}, \dots, \,\frac{1}{\widehat{\sigma}_1^2 \widehat{\sigma}_n^2}, \frac{1}{\widehat{\sigma}_2^2 \widehat{\sigma}_3^2}, \dots, \frac{1}{\widehat{\sigma}_{n-1}^2 \widehat{\sigma}_n^2}\Big)$ where $\widehat{\sigma}_i^2$ is a consistent estimator of idiosyncratic shock variances $\sigma^2_i$ for $i=1,\dots,n$.	Then the \textbf{robust granular instrumental variables estimator with observed explanatory variables} is characterized as the following two-step estimator:
	\begin{enumerate}
		\item For each $i=1,\dots,n$, $\widehat{\boldsymbol{\psi}}^\text{Step 1}_i = (\frac{1}{T} \sum_{t=1}^T \V{x}_{it} \V{x}_{it}')^{-1} (\frac{1}{T} \sum_{t=1}^T \V{x}_{it} r_{it})$.
		\item For $\widehat{Q}_T(\boldsymbol{\phi}; \widehat{\boldsymbol{\psi}}^\text{Step 1})=  \Big[\frac{1}{T} \sum_{t=1}^T g^c(\V{z}_t, \boldsymbol{\phi};\widehat{ \boldsymbol{\psi}}^\text{Step 1}) \Big]' \widehat{W} \Big[\frac{1}{T} \sum_{t=1}^T g^c(\V{z}_t, \boldsymbol{\phi};\widehat{ \boldsymbol{\psi}}^\text{Step 1})\Big]$, 
		$\widehat{\boldsymbol{\phi}}^{RGIV,c} = \arg\min_{\boldsymbol{\phi} \in \boldsymbol{\Phi}} \widehat{Q}_T(\boldsymbol{\phi}; \widehat{\boldsymbol{\psi}}^\text{Step 1} ).$

	\end{enumerate}

\end{definition}

For brevity, the formal results rely on simple extensions of the \cite{Newey1994} GMM identification, consistency, and asymptotic normality results as detailed in Appendix \ref{sec:NM1994_extend} to allow for the first step estimation of a nuisance parameter. Note that the two-step estimation results of \cite{Newey1994} (Section 6) cannot be directly applied since there are potentially more moments than estimated parameters. Alternatively, the estimator described in Definition \ref{defn:naive_controls} could be defined using a continuously updating GMM objective function in the second step. The procedure could be formalized by modifying the more general results of \cite{Pakes1989} to accommodate a first-step estimator.

Lemma \ref{lemma:rgiv_controls_ols} (below) establishes that the first step estimator $\widehat{\boldsymbol{\psi}}^\text{Step 1}$ is consistent and $O_p(1)$. Lemma \ref{lemma:gmm_id_controls} establishes identification of spillover coefficients $\boldsymbol{\phi}$. Theorems \ref{thm:gmm_consistency_controls} and
\ref{thm:gmm_asymptotic_normality_controls} give consistency and asymptotic normality of  $\widehat{\boldsymbol{\phi}}^\text{RGIV,c}$. 

\begin{lemma}
	\label{lemma:rgiv_controls_ols}
	Impose Assumption \ref{ass:general_controls}. As $T \to \infty$, $\widehat{\boldsymbol{\psi}}^\text{Step 1}_i \xrightarrow{p} \boldsymbol{\psi}_i$  and $\sqrt{T}(\widehat{\boldsymbol{\psi}}^\text{Step 1}_i - \boldsymbol{\psi}_i) = O_p(1)$.
\end{lemma}
\begin{proof}
	Since $0 = \EE [\V{x}_{it} (r_{it} - \boldsymbol{\psi}_i' \V{x}_{it})] = \EE[\V{x}_{it} u_{it}]$ and $\EE[\V{x}_{it} \V{x}_{it}']$ is full rank, the OLS estimator $\widehat{\boldsymbol{\psi}}^\text{Step 1}_i$ is consistent. Assumption \ref{ass:general_shockmoments_controls} (that $\EE(\V{u}_t \V{u}_t') = \mathrm{diag}(\sigma^2_1,\dots,\sigma^2_n)$ and $\EE(\V{x}_{it} \V{x}_{it}')$ full rank) implies $\widehat{\boldsymbol{\psi}}^\text{Step 1}_i$ is asymptotically normal so $\sqrt{T}(\widehat{\boldsymbol{\psi}}^\text{Step 1}_i - \boldsymbol{\psi}_i) = O_p(1)$.
\end{proof}

\begin{lemma}[Identification with observed explanatory variables]
	\label{lemma:gmm_id_controls}
	Impose Assumption \ref{ass:general_controls}. For $g_0^c(\boldsymbol{ \phi}) = \EE[g^c(\V{z}_t, \boldsymbol{\phi}; \boldsymbol{\psi}_0)]$, $g_0^c(\boldsymbol{ \phi}_0)= 0$ for the true parameter $\boldsymbol{\phi}_0$ and $g_0^c(\widetilde{\boldsymbol{ \phi}})  \neq 0$ for $\widetilde{\boldsymbol{ \phi}} \in \boldsymbol{\Phi}$ such that $\widetilde{\boldsymbol{ \phi}} \neq \boldsymbol{ \phi}_0$.
\end{lemma}
\begin{proof}
	The result immediately follows after applying Lemma \ref{lemma:gmm_id} to $\dot{r}_{it} = r_{it} - \boldsymbol{ \psi}_i' \V{x}_{it}$.
\end{proof}

\begin{theorem}[Consistency of RGIV with observed explanatory variables]
	\label{thm:gmm_consistency_controls}
	Impose Assumption \ref{ass:general_controls}. The RGIV estimator with observed explanatory variables is consistent $\widehat{\boldsymbol{\phi}}^{RGIV,c} \xrightarrow{p} \boldsymbol{\phi}_0$ for the true parameter $\boldsymbol{\phi}_0$ as $T \to \infty$.
\end{theorem}
\begin{proof}	
	Verify the conditions of Theorem \ref{thm:NM_consistency_modified}. Condition (i) follows from Lemma \ref{lemma:gmm_id_controls} and Lemma \ref{lemma:NM_id_modified}. Condition (ii) follows from Assumption \ref{ass:general_parameterspace_controls}. Condition (iii) follows by inspection of function $g^c(\V{z}_t, \boldsymbol{\phi}; \boldsymbol{\psi})$. Condition (iv) follows from Assumption \ref{ass:general_shockmoments_controls}.
\end{proof}

\begin{theorem}
	\label{thm:gmm_asymptotic_normality_controls}
	Impose Assumption \ref{ass:general_controls}. The RGIV estimator with observed explanatory variables is asymptotically normal 
	\begin{align*}
		\sqrt{T}(\widehat{\boldsymbol{ \phi}}^{RGIV,c} - \boldsymbol{ \phi}_0) \xrightarrow{d} \nn(0, (G'WG)^{-1}G'W \Sigma W G (G'WG)^{-1} )
	\end{align*}
	for RGIV population weight matrix $W =  \mathrm{diag}(\frac{1}{\sigma_1^2 \sigma_2^2},\dots, \frac{1}{\sigma_1^2 \sigma_n^2},  \frac{1}{\sigma_2 ^2\sigma_3^2},\dots, \frac{1}{\sigma_{n-1}^2 \sigma_n^2})$, moment covariance matrix $\Sigma = \EE[g^c(\V{r}_t, \boldsymbol{ \phi}_0; \boldsymbol{\psi}_0)g^c(\V{r}_t, \boldsymbol{ \phi}_0; \boldsymbol{\psi}_0)']$ and $G = \EE[\nabla_{\phi} g^c(\V{r}_t, \boldsymbol{ \phi}_0; \boldsymbol{\psi}_0)]$ as $T\to\infty$.
\end{theorem}
\begin{proof}
	Verify the conditions of Theorem \ref{thm:NM_asymptoticnormality_modified}. The hypotheses for Theorem \ref{thm:NM_consistency_modified} are satisfied. Lemma \ref{lemma:rgiv_controls_ols} implies $\sqrt{T}(\widehat{\boldsymbol{\psi}}^\text{Step 1} - \boldsymbol{\psi}_0) = O_p(1)$. Next, compute the elements of $\EE[\nabla_{\boldsymbol{\psi}} g^c(\textbf{z}_t, \boldsymbol{\phi}_0; \boldsymbol{\psi}_0)]$. Let $i,j,k \in \{1, \dots,n\}$ and $i \neq j \neq k$. Since $\textbf{x}_{it} \indep u_{it}$,
	\begin{align*}
		\EE\{&\frac{\partial}{\partial \boldsymbol{\psi}_i}[r_{it}  - \boldsymbol{\psi}_i' \textbf{x}_{it} - \phi_i\sum_{\ell=1}^n S_\ell (r_{\ell t} - \boldsymbol{ \psi}_\ell' \V{x}_{\ell t}) ] [r_{jt}  - \boldsymbol{\psi}_j' \textbf{x}_{jt} - \phi_j\sum_{\ell=1}^n S_\ell (r_{\ell t} - \boldsymbol{ \psi}_\ell' \V{x}_{\ell t}) ] \} \\
		&= \EE\{(-\textbf{x}_{it} + \phi_i S_i \textbf{x}_{it})u_{jt}(\textbf{z}_t,\boldsymbol{\phi}_0 ;\boldsymbol{\psi}_0) \} + \EE[\phi_j S_i \textbf{x}_{it} u_{it}(\textbf{z}_t, \boldsymbol{\phi}_0; \boldsymbol{\psi}_0)] = 0 \\
		\EE\{&\frac{\partial}{\partial \boldsymbol{\psi}_k}[r_{it}  - \boldsymbol{\psi}_i' \textbf{x}_{it}  - \phi_i\sum_{\ell=1}^n S_\ell (r_{\ell t} - \boldsymbol{ \psi}_\ell' \V{x}_{\ell t}) ] [r_{jt}  - \boldsymbol{\psi}_j' \textbf{x}_{jt} - \phi_j\sum_{\ell=1}^n S_\ell (r_{\ell t} - \boldsymbol{ \psi}_\ell' \V{x}_{\ell t}) ] \} \\
		&= \EE\{ \phi_i S_k \textbf{x}_{kt} u_{jt}(\textbf{z}_t, \boldsymbol{\phi}_0; {\psi}_0) \} + \EE[\phi_j S_k \textbf{x}_{kt} u_{it}(\textbf{z}_t, \boldsymbol{\phi}_0; \boldsymbol{\psi}_0)] = 0.
	\end{align*}
	Hence, $\EE[\nabla_{\boldsymbol{\psi}} g^c(\textbf{z}_t, \boldsymbol{\phi}_0; \boldsymbol{\psi}_0)]=0$. 	Condition (i) holds from Assumption \ref{ass:general_parameterspace_controls}. Condition (ii) holds by inspection of moment function $g(\V{z}_t, \boldsymbol{\phi}; \boldsymbol{\psi})$. Conditions (iii) and (iv) hold from Assumption \ref{ass:general_shockmoments_controls}. Condition (v) holds from applying the proof of Theorem \ref{thm:gmm_asymptotic_normality}.
\end{proof}

			\subsection{Observed explanatory variables: Second-step GMM estimation with a consistent first-step estimator of a nuisance parameter}
			\label{sec:NM1994_extend}
			For completeness, the below identification, consistency, and asymptotic normality results are  straightforward adaptations of those found in \cite{Newey1994} (abbreviated as NM). These results give conditions for identification and inference for two-step estimators, given a consistent first-step estimator of a nuisance parameter.  Following the notation of NM, the first step nuisance parameter estimator is consistent $\widehat{\boldsymbol{\gamma}} \xrightarrow{p} \boldsymbol{\gamma}_0$ where $\sqrt{n}(\widehat{\boldsymbol{\gamma}} - \boldsymbol{\gamma}_0) = O_p(1)$. In the second step, $\boldsymbol{\theta}$ is the parameter of interest and is estimated using GMM with moment function $g(\V{z}, \boldsymbol{\theta}; \boldsymbol{\gamma})$ and weight matrix estimator $\widehat{W} \xrightarrow{p} W$, where $\widehat{\boldsymbol{\gamma}}$ is an estimator for $\boldsymbol{\gamma}$. Moreover, the condition $\EE[\nabla_{\boldsymbol{\gamma}}g(\V{z}, \boldsymbol{\theta}_0; \boldsymbol{\gamma}_0)] = 0$ (for true parameter values $\boldsymbol{\gamma}_0$ and $\boldsymbol{\theta}_0$) ensures that estimation uncertainty for the first-step estimator doesn't enter the expression for the asymptotic variance of the second-step estimator, as is the case for the RGIV estimator with observed explanatory variables.
			
			The below identification lemma (Lemma \ref{lemma:NM_id_modified}) extends NM Lemma 2.3, the consistency theorem (Theorem \ref{thm:NM_consistency_modified}) extends NM Theorem 2.6, and  the asymptotic normality theorem (Theorem \ref{thm:NM_asymptoticnormality_modified}) extends NM Theorem 3.4. The proofs for these extensions are, for the most part, identical to the original NM results. Notably, care is taken to ensure uniform convergence of the GMM objective function in Theorem \ref{thm:NM_consistency_modified} and the expected Jacobian matrices in Theorem \ref{thm:NM_asymptoticnormality_modified}. 
			
			These results can be applied to estimators where the second step GMM estimator is over-identified, as is  the case for RGIV when $n \geq 4$. In contrast, an application of the results featured in \cite{Newey1994} Chapter 6 requires the number of moment conditions to equal the number of parameters of interest.
			
			\begin{lemma}[Identification, two-step]
				\label{lemma:NM_id_modified}
				If $W$ is positive semi-definite and, for $g_0(\boldsymbol{\theta}) = \EE[g(\V{z}, \boldsymbol{\theta}; \boldsymbol{\gamma}_0)]$, $g_0(\boldsymbol{\theta}_0) = 0$, and $W g_0(\boldsymbol{\theta}) \neq 0$ for $\boldsymbol{\theta} \neq \boldsymbol{\theta}_0$ then $Q_0(\boldsymbol{\theta}) = -g_0(\boldsymbol{\theta})' W g_0(\boldsymbol{\theta})$ has a unique maximum at $\boldsymbol{\theta}_0$.
			\end{lemma}
			\begin{proof}
				Apply an identical argument as the one used for the proof of NM Lemma 2.3. 
				
				Let $R$ be such that $R'R = W$. If $\theta \neq \boldsymbol{\theta}_0$, then $0 \neq W g_0(\boldsymbol{\theta}) = R'R g_0(\boldsymbol{\theta})$ implies $Rg_0(\boldsymbol{\theta}) \neq 0$ and hence $Q_0(\boldsymbol{\theta})= -[R g_0(\boldsymbol{\theta})]' [Rg_0(\boldsymbol{\theta})] < Q_0(\boldsymbol{\theta}_0) = 0$ for $\boldsymbol{\theta} \neq \boldsymbol{\theta}_0$.
			\end{proof}	
			
			\begin{theorem}[Consistency, two-step]
				\label{thm:NM_consistency_modified}
				Suppose that $\V{z}_i, (i=1,2,\dots)$, are i.i.d., $\widehat{W} \xrightarrow{p} W$, $\widehat{\boldsymbol{\gamma}} \xrightarrow{p}\boldsymbol{ \gamma}_0$, and (i) $W$ is positive semidefinite and $W \EE[g(\V{z}, \boldsymbol{\theta}; \boldsymbol{\gamma}_0)] = 0$ only if $\boldsymbol{\theta} = \theta_0$; (ii) $\boldsymbol{\theta}_0 \times \boldsymbol{\gamma}_0 \in \boldsymbol{\Theta} \times \boldsymbol{\Gamma}$ compact; (iii) $g(\V{z}, \boldsymbol{\theta}; \boldsymbol{\gamma})$ is continuous at each $\boldsymbol{\theta} \times \boldsymbol{\gamma} \in \boldsymbol{\Theta} \times \boldsymbol{\Gamma}$ with probability one; (iv) $\EE[\sup_{\boldsymbol{\theta}, \boldsymbol{\gamma}} \| g(\V{z}, \boldsymbol{\theta}; \boldsymbol{\gamma})\|] < \infty$. Then $\widehat{\boldsymbol{\theta}} \xrightarrow{p} \boldsymbol{\theta}_0$.
			\end{theorem}	
			\begin{proof}
				Proceed by verifying the hypotheses of Theorem 2.1 of Newey McFadden (abbreviated as NM2.1):
				\begin{itemize}
					\item NM2.1(i): Follows from Lemma \ref{lemma:NM_id_modified} and Condition (i).
					\item NM2.1(ii): Follows from Condition (ii).
					\item NM2.1(iii): Newey and McFadden Lemma 2.4, Condition (iii), and Condition (iv) imply that $\sup_{ \boldsymbol{\theta}} \| \frac{1}{n}\sum_{i=1}^n g(\V{z}_i, \boldsymbol{\theta}; \widehat{\boldsymbol{\gamma}})  - \EE[g(\V{z}, \boldsymbol{\theta}; \widehat{\boldsymbol{\gamma}})]\| \xrightarrow{p} 0$ and $\EE[g(\V{z}, \boldsymbol{\theta}; \boldsymbol{\gamma}_0)]$ is continuous in $\theta$. Moreover, $\widehat{\boldsymbol{\gamma}} \xrightarrow{p} \boldsymbol{\gamma}_0$ implies  $\sup_{ \boldsymbol{\theta}} \| \frac{1}{n}\sum_{i=1}^n g(\V{z}_i, \boldsymbol{\theta}; \widehat{\boldsymbol{\gamma}})  - \EE[g(\V{z}, \boldsymbol{\theta}; \boldsymbol{\gamma}_0)]\| \xrightarrow{p} 0$. Hence, 2.1(iii) holds because $Q_0(\theta) = - \EE[g(\V{z}, \boldsymbol{\theta}; \boldsymbol{\gamma}_0)]'W\EE[g(\V{z}, \boldsymbol{\theta}; \boldsymbol{\gamma}_0)] $ is continuous. 
					\item NM2.1(iv): For uniform convergence of the objective function $\widehat{Q}_n(\boldsymbol{\theta})$ to $Q_0(\boldsymbol{\theta})$, follow similar computations to Newey and McFadden (1994) Theorem 2.6:
					{\small		\begin{align*}
							|\widehat{Q}_n( \boldsymbol{\theta}) - Q_0(\boldsymbol{ \theta})| &\leq \| \frac{1}{n}\sum_{i=1}^ng(\textbf{z}_i, \boldsymbol{\theta}; \widehat{\boldsymbol{\gamma}}) -  \EE[g(\textbf{z}, \boldsymbol{\theta}; \boldsymbol{\gamma}_0)]\|^2 \| \widehat{W}\| + \\
							&\quad2 \| \frac{1}{n}\sum_{i=1}^ng(\textbf{z}_i, \boldsymbol{\theta}; \widehat{\boldsymbol{\gamma}}) -  \EE[g(\textbf{z}, \boldsymbol{\theta}; \boldsymbol{\gamma}_0)]\| \| \frac{1}{n}\sum_{n=1}^N g(\textbf{z}_t, \boldsymbol{\theta}; \widehat{\boldsymbol{\gamma}}) -  \EE[g(\textbf{z}, \boldsymbol{\theta}; \boldsymbol{\gamma}_0)]\| \|\widehat{W}\| +\\
							&\quad \| \EE[g(\textbf{z}, \boldsymbol{\theta}; \boldsymbol{\gamma}_0)]\|^2 \| |\widehat{W}-W\|^2.
					\end{align*}}
					
				\end{itemize}
			\end{proof}
			
			\begin{theorem}[NM (1994) Theorem 3.4, modified]
				\label{thm:NM_asymptoticnormality_modified}
				Suppose the hypotheses of Theorem \ref{thm:NM_consistency_modified} are satisfied. Also assume $\sqrt{n}(\widehat{\boldsymbol{\gamma}} - \boldsymbol{\gamma}_0) = O_p(1)$, $\EE[\nabla_{\boldsymbol{\gamma}} g(\V{z}, \boldsymbol{\theta_0}; \boldsymbol{\gamma}_0)] = \boldsymbol{0}$, (i) $\boldsymbol{\theta}_0 \times \boldsymbol{\gamma}_0 \in \mathrm{interior}(\boldsymbol{\Theta} \times \boldsymbol{\Gamma})$; (ii) $g(\V{z}, \boldsymbol{\theta}; \boldsymbol{\gamma})$ is continuously differentiable in a neighborhood $\nn$ of $\boldsymbol{\theta}_0 \times \boldsymbol{\gamma}_0$ with probability approaching 1; (iii) $\EE[g(\V{z}, \boldsymbol{\theta}_0; \boldsymbol{\gamma}_0)] = 0$ and $\EE[\|g(\V{z}, \boldsymbol{\theta}_0; \boldsymbol{\gamma}_0) \| ^2] < \infty$, (iv) $\EE[\sup_{\boldsymbol{\theta} \times \boldsymbol{\gamma}} \| \nabla_{\boldsymbol{\theta} \times \boldsymbol{\gamma}} g(\V{z}, \boldsymbol{\theta}, \boldsymbol{\gamma}) \|] < \infty$; (v) $G'WG$ is non-singular for $G = \EE[\nabla_{\boldsymbol{\theta}} g(\V{z}, \boldsymbol{\theta}_0; \boldsymbol{\gamma}_0)]$. Then, for $\Sigma = \EE[g(\V{z}, \boldsymbol{\theta}_0; \boldsymbol{\gamma}_0)g(\V{z}, \boldsymbol{\theta}_0; \boldsymbol{\gamma}_0)']$,
				\begin{align*}
					\sqrt{n} (\widehat{\boldsymbol{\theta}} - \boldsymbol{\theta}_0) \xrightarrow{d} \nn(0, (G'WG)^{-1} G'W \Sigma WG(G'WG)^{-1} ).
				\end{align*}
			\end{theorem}
			\begin{proof}
				Conditions (i), (ii), and (iii) imply that the first order condition is satisfied with probability approaching one: $2 \widehat{G}_n (\widehat{\boldsymbol{\theta}})'\widehat{W}\widehat{g}_n(\widehat{\boldsymbol{\theta}}) = 0$ for $\widehat{G}_n(\widehat{\boldsymbol{\theta}}) = \nabla_{\boldsymbol{\theta}} \frac{1}{n}\sum_{i=1}^n g(\V{z}_i, \widehat{\boldsymbol{\theta}}, \widehat{\boldsymbol{\gamma}})$. Expanding the FOC about $\boldsymbol{\theta}_0$ and $\boldsymbol{\gamma}_0$ and rearranging gives
				\begin{align*}
					\sqrt{n}(\widehat{\boldsymbol{\theta}} - \boldsymbol{\theta}_0) &= -(\widehat{G}_n(\widehat{\boldsymbol{\theta}})' \widehat{W}\widehat{G}_n(\overline{\boldsymbol{\theta}}))^{-1}\widehat{G}_n(\widehat{\boldsymbol{\theta}}) \widehat{W} \frac{\sqrt{n}}{n} \sum_{i=1}^n g(\V{z}_i, \boldsymbol{\theta}_0; \widehat{\boldsymbol{\gamma}}) \\
					&= -(\widehat{G}_n(\widehat{\boldsymbol{\theta}})' \widehat{W}\widehat{G}_n(\overline{\boldsymbol{\theta}}))^{-1}\widehat{G}_n(\widehat{\boldsymbol{\theta}}) \widehat{W} \frac{\sqrt{n}}{n} \sum_{i=1}^n g(\V{z}_i, \boldsymbol{\theta}_0; \boldsymbol{\gamma}_0) \\
					&\quad -(\widehat{G}_n(\widehat{\boldsymbol{\theta}})' \widehat{W}\widehat{G}_n(\overline{\boldsymbol{\theta}}))^{-1}\widehat{G}_n(\widehat{\boldsymbol{\theta}}) \widehat{W} \frac{1}{n} \sum_{i=1}^n \nabla_{\boldsymbol{\gamma}} g(\V{z}_i, \boldsymbol{\theta}_0; \overline{\boldsymbol{\gamma}}) \sqrt{n}(\widehat{\boldsymbol{\gamma}} - \boldsymbol{\gamma}_0) \\
					&\quad\xrightarrow{p} \nn(0, (G'WG)^{-1} G'W \Sigma WG(G'WG)^{-1} )
				\end{align*}
				where $\boldsymbol{\overline{\theta}}$ and $\boldsymbol{\overline{\gamma}}$ are intermediate values. In the above display, $\widehat{\boldsymbol{\gamma}} \xrightarrow{p} \boldsymbol{\gamma}_0$, $\widehat{\boldsymbol{\theta}}\xrightarrow{p} \boldsymbol{\theta}_0$, and condition (iv) imply $\widehat{G}_n(\widehat{\theta}) \xrightarrow{p} G$, $\widehat{G}_n(\overline{\theta}) \xrightarrow{p} G$, and $\frac{1}{n} \sum_{i=1}^n \nabla_{\boldsymbol{\gamma}} g(\V{z}_i, \boldsymbol{\theta}_0; \overline{\boldsymbol{\gamma}}) \xrightarrow{p} \boldsymbol{0}$. The final line follows from the Slutsky theorem and an i.i.d. central limit theorem.
				
			\end{proof}

			\subsection{RGIV when the shock factor is determined by known observables}
			\label{sec:factor_known_observables}
			In this section, I show that differencing RGIV moments can accommodate the case where the shock factor structure is entirely determined by known observables. I also show that the identification reduces to an overdetermined system of nonlinear equations, which admits a unique solution (with the parameter space restriction $\phi_S<1$) outside knife-edge cases.

			\subsubsection{Setup}
			Take Assumption \ref{ass:general} and augment with latent factors $\boldsymbol{\eta}_t$ ($m \times 1$) where $\EE[\boldsymbol{\eta}_t \boldsymbol{\eta}_t']  =\mathbf{I}$ (and finite fourth moments) and loadings $\boldsymbol{\lambda}_i$ ($m \times 1$), which are entirely determined by observables $\mathbf{x}_i$ ($m \times 1$) through $\boldsymbol{\Pi}$ ($m \times m$)
			\begin{align*}
				r_{it} &= \phi_i r_{St} + \underbrace{\boldsymbol{\lambda}_i' \boldsymbol{\eta}_t + u_{it}}_{v_{it}}, \quad	\boldsymbol{\lambda}_i = \boldsymbol{\Pi} \mathbf{x}_i.
			\end{align*}
			Also suppose there are more GMM moment conditions than there are unknowns ($n(n-1)/2-m(m+1)/2 \geq n$). $\boldsymbol{\Pi}$ ($m \times m$) maps the unit-specific observables to the loadings and is unknown to the researcher. In matrix form, for $	\mathbf{X} = \begin{bmatrix}
				\mathbf{x}_1 & \mathbf{x}_2 & \dots &		\mathbf{x}_n
			\end{bmatrix}$ and $ \boldsymbol{\Lambda} = \begin{bmatrix}
				\boldsymbol{\lambda}_1 & \boldsymbol{\lambda}_2 & \dots & \boldsymbol{\lambda}_n
			\end{bmatrix}$, 
			\begin{align*}
				\boldsymbol{\Lambda} = \boldsymbol{\Pi} \mathbf{X} \text{, } \mathbf{r}_t = \boldsymbol{\phi}r_{St} +\underbrace{ \boldsymbol{\Lambda}' \boldsymbol{\eta}_t + \mathbf{u}_t}_{\mathbf{v}_t}, \text{ and } \textrm{rank}(\mathbf{X}) = m.
			\end{align*}

			The shock covariance matrix is $\EE[\mathbf{v}_t\mathbf{v}_t' ] = \mathbf{X}' \boldsymbol{\Pi}' \boldsymbol{\Pi} \mathbf{X} + \mathbf{D}$ for diagonal matrix $\mathbf{D}$ with diagonal entries $\sigma^2_1, \sigma^2_2,\dots, \sigma^2_n$. Vectorizing and selecting the off-diagonal elements of the covariance matrix with selection matrix $\mathbf{S}$ ($n(n-1)/2 \times n^2$), 
			\begin{align*}
				\mathbf{S} \textrm{vec}\left({\EE[\mathbf{v}_t \mathbf{v}_t']}\right) = 	\mathbf{S} \left( \mathbf{X}' \otimes \mathbf{X}' \right) \mathrm{vec}\left( \boldsymbol{\Pi}'\boldsymbol{\Pi}
				\right).
			\end{align*}
			Premultiplying the above display by the annihilator matrix (for arbitrary matrix $W$, defined as $\mathbf{W}$ $M_\mathbf{W} \equiv \mathbf{I} - \mathbf{W}(\mathbf{W}'\mathbf{W})^{-1}\mathbf{W}'$),  
			\begin{align}
				\mathbf{M}_{\mathbf{S} \left( \mathbf{X}' \otimes \mathbf{X}' \right) }\mathbf{S} \textrm{vec}\left({\EE[\mathbf{v}_t \mathbf{v}_t']}\right) = \mathbf{0}. \label{eq:lincomb}
			\end{align}
			Hence, a linear combination of the off-diagonal elements of $\EE[\mathbf{v}_t \mathbf{v}_t']$ equals zero.

			\subsubsection{Identification}
			Equation \ref{eq:lincomb} sets a linear combination of RGIV moments to zero. Below, I show that the spillover coefficients are generally identified outside a set of knife-edge DGPs.  Equation \ref{eq:lincomb} implies a system of (univariate) quadratic equations in the errors of a particular spillover coefficient. For there to exist a solution other than the true solution, these quadratic equations must have identical zeros---which isn't true in general.
			
			For true spillover coefficients $\phi_i$ and proposed solutions $\widetilde{\phi}_i$, let $\check{\phi}_i \equiv \phi_i - \widetilde{\phi}_i$ give the error in spillover coefficient $\phi_i$. The moment condition corresponding to the idiosyncratic shocks of units $i$ and $j$ can be written as 
			\begin{align*}
				\EE\left[(r_{it} - \widetilde{\phi}_i r_{St})(r_{jt} - \widetilde{\phi}_j r_{St}) \right] = \check{\phi}_i \check{\phi}_j k + \check{\phi}_i k_j + \check{\phi}_j k_i + \lambda_i \lambda_j
			\end{align*}
			for $k \equiv\EE[r_{St}^2]$ and $k_i \equiv \EE[r_{St} (u_{it}+ \boldsymbol{\lambda}_i' \boldsymbol{\eta}_t)]$. For $i \neq j$, the above display's moments can be stored in an $n(n-1)/2 \times 1$ vector $\widetilde{\boldsymbol{g}}$ matching the ordering of $\mathbf{S} \textrm{vec}\left({\EE[\mathbf{v}_t \mathbf{v}_t']}\right)$ in Equation \ref{eq:lincomb}. Hence, identification of the moments characterized by Equation \ref{eq:lincomb} reduces to studying the roots of $\mathbf{M}_{\mathbf{S} \left( \mathbf{X}' \otimes \mathbf{X}' \right)} \widetilde{\mathbf{g}} = 0$.

			Since $\textrm{rank}(\mathbf{M}_{\mathbf{S} \left( \mathbf{X}' \otimes \mathbf{X}' \right) }) = n(n-1)/2 - m(m+1)/2$, $\mathbf{M}_{\mathbf{S} \left( \mathbf{X}' \otimes \mathbf{X}' \right) }$ can be expressed in reduced row echelon form (after elementary row operations) 
			\begin{align*}
				\begin{bmatrix}
					\mathbf{I}_{n(n-1)/2 - m^2} & -\mathbf{A} \\
					\mathbf{0}_{m^2 \times (n(n-1)/2 - m^2)} & \mathbf{0}_{m^2 \times m^2}
				\end{bmatrix}.
			\end{align*}
			Letting $\widetilde{\boldsymbol{g}}^I$ be the first $n(n-1)/2-m(m+1)/2$ moments and $\widetilde{\boldsymbol{g}}^{II}$ be the last $m(m+1)/2$ moments, $\mathbf{M}_{\mathbf{S} \left( \mathbf{X}' \otimes \mathbf{X}' \right)} \widetilde{\mathbf{g}} = 0$ can equivalently be expressed as
			\begin{align}
				\widetilde{\mathbf{g}}^I = \mathbf{A} \widetilde{\mathbf{g}}^{II}. \label{eq:lincomb2}
			\end{align}
			The moments $\widetilde{\mathbf{g}}^{I}$ equal some (known) linear combination of $\widetilde{\mathbf{g}}^{II}$ determined by matrix $\mathbf{A}$.
			
			To characterize the solutions, fix $\widetilde{\mathbf{g}}^{II}$ and let $\mathbf{a}\equiv \mathbf{A} \widetilde{\mathbf{g}}^{II}$ where $a_{ij}$ is the entry of vector $\mathbf{a}$ that corresponds to the moment condition for shocks $i\neq j$. Then the row of Equation \ref{eq:lincomb2} corresponding to the shocks of units $i$ and $j$ is $\check{\phi}_i(k_j + 	\check{\phi}_j k) + 	\check{\phi}_j k_i = a_{ij}.$
			
			After rearranging and setting unit 1 as the reference unit, $\check{\phi}_i = \frac{a_{i1} - \check{\phi}_1 k_i}{k_1 + \check{\phi}_1 k} \text{ and } \check{\phi}_j = \frac{a_{j1} - \check{\phi}_1 k_j}{k_1 + \check{\phi}_1 k}.$ Substituting $\check{\phi}_i$ and $\check{\phi}_j$,
			\begin{align*}
				a_{ij}=	 \check{\phi}_i(k_j + 	\check{\phi}_j k) + 	\check{\phi}_j k_i = \frac{a_{i1} - \check{\phi}_1 k_i}{k_1 + \check{\phi}_1 k} k_j +\frac{a_{j1} - \check{\phi}_1 k_j}{k_1 + \check{\phi}_1 k}k_i + \frac{(a_{i1} - \check{\phi}_1 k_i) (a_{j1} - \check{\phi}_1 k_j)  }{(k_1 + \check{\phi}_1 k)^2}k.
			\end{align*}
			
			Rearranging yields a quadratic equation in $\check{\phi}_1$
			\begin{align}
				0=\alpha_{1}^{(ij)} \check{\phi}_1^2 + 	\alpha_{2}^{(ij)} \check{\phi}_1 + 	\alpha_{3}^{(ij)} \label{eq:sys}
			\end{align}
			where $\alpha_1^{(ij)}\equiv a_{ij} k^2 + k_i k_j k$, $\alpha_2^{(ij)}\equiv 2a_{ij}k_1 k + 2 k_ik_jk_1$, and  $\alpha_3^{(ij)}\equiv a_{ij} k_1^2 -(a_{i1} k_j + a_{j1}k_i)k_1 - a_{i1}a_{j1}k$. Hence, the above display gives a system of quadratic equations (in $\check{\phi}_1$) defined by the pairs $i\neq j$ contained in moments $\widetilde{\mathbf{g}}^I$.
			
			If $\widetilde{\mathbf{g}}^{II} = \mathbf{0}$, then $\mathbf{a} = \mathbf{0}$, $\alpha_1^{(ij)} = k_i k_j k$, $\alpha_2^{(ij)} = 2 k_i k_j k$, and $\alpha_3^{(ij)} =0$. Here, $0 = \check{\phi}_1 (k \check{\phi}_1 + 2k_1)$. There are two solutions, which both do not depend on the index $i,j$: (1) the true solution $\check{\phi}_1 = 0$ and (2) a false solution $\check{\phi}_1 = -2\frac{k_1}{k}$ (which can be ruled out from the condition $\phi_S < 1$ as is done in Lemma \ref{lemma:gmm_id}).
			
			Now instead suppose $\widetilde{\mathbf{g}}^{II} \neq \mathbf{0}$. From the quadratic formula, the roots of Equation \ref{eq:sys} (dividing by $\alpha_1^{(ij)}$) are determined by $\frac{\alpha_2^{(ij)}}{\alpha_1^{(ij)}}$ and $\frac{\alpha_3^{(ij)}}{\alpha_1^{(ij)}}$.  $\frac{\alpha_3^{(ij)}}{\alpha_1^{(ij)}}$ in particular depends on the units $i,j$ through terms $k_i$ and $k_j$. Hence, besides knife-edge cases, it is not generally true that the roots for Equation \ref{eq:sys} coincide when $\widetilde{\mathbf{g}}^{II} \neq \mathbf{0}$.
			
			\subsection{Spillovers application}
			\subsubsection{Robustness}
			\label{sec:application_robustness}

			Table \ref{table:application_winsor} reports coefficients and standard errors under alternative winsorization schemes. Columns 1-3 show that the results are qualitatively similar to the preferred specification after winsorizing at 5\%, 0.5\%, and 10\% respectively. The fourth column, however, shows that estimation without winsorization is unreliable. These results show that the qualitative features of the main text's preferred specification aren't specific to the choice of 1\% winsorization and that the linear form of the model described in Assumption \ref{ass:general_reducedform} is sensitive to extreme outliers.
			
			\begin{table}[!htbp]
				\centering
				\begin{tabular}{l|cccc}
					\toprule
					& 5\%  & 0.5\% & 10\%  & 0\% \\
					\hline
					RGIV results: &&&&\\
					\quad $\phi_S$ & 0.57 & 0.54 & 0.58 & -16.57 \\
					& (0.06) & (0.08) & (0.05) & (10275.21) \\
					\quad $\phi_\text{IRL, PRT, ESP}$ & 0.82 & 0.82 & 0.83 & 0.98 \\
					& (0.05) & (0.06) & (0.05) & (2.38) \\
					\quad $\phi_\text{GRC, ITA}$ & 0.49 & 0.45 & 0.5 & -28.08 \\
					& (0.12) & (0.16) & (0.11) & (17084.06) \\
					\quad $\phi_\text{Core}$ & 0.42 & 0.38 & 0.42 & 0.47 \\
					& (0.03) & (0.03) & (0.03) & (2.48) \\
					\quad $\phi_\text{SVN}$ & 0.33 & 0.33 & 0.34 & 0.34 \\
					& (0.05) & (0.06) & (0.05) & (0.39) \\
					\hline
					Tests ($p$ values): &&&&\\
					\quad Specification & 0.817 & 0.872 & 0.679 & 0.616 \\
					\quad Homogeneity & 0 & 0 & 0 & 0.275 \\
					\bottomrule
				\end{tabular}
				\caption{Estimation results. Coefficient estimates for $\phi_S, \phi_\text{IRL,PRT,ESP}, \phi_\text{GRC, ITA}, \phi_\text{Core}$, and $\phi_\text{SVN}$ are listed above standard errors, which are provided in parentheses. $p$ values are provided in the bottom section of the table for the specification test and parameter homogeneity tests. Columns 1--4 give estimation results for winsorization at the 2.5 and 97.5 percentiles, 0.25 and 99.75 percentiles, 5 and 95 percentiles, and no winsorization respectively.}
				\label{table:application_winsor}
			\end{table}

			Table \ref{table:application_altgroups} summarizes RGIV results under alternative blocking schemes. Starting with the preferred specification, the first column shifts Ireland from the Portugal/Spain periphery block to the Greece/Italy periphery block (splitting the periphery countries into an ``Iberia'' block and ``Other'' block). The specification test's $p$ value of 0.95 suggests that the ``Iberia'' specification's moments are consistent with the data. Moreover, the spillover coefficient for the Portugal/Spain block is 0.89 compared to 0.83 for the Portugal/Spain/Ireland block in the preferred specification, suggesting that Ireland's spillover coefficient is lower than the size-weighted combination of Portugal/Spain's. Columns 2 and 3 show that spillover coefficients are nearly unchanged after moving Slovenia to the core and periphery block respectively (no specification test is reported since the model is just-identified). The specification in Column 4 separates France from the core countries and moves it into its own group. The specification test is rejected, giving indirect evidence that French shocks are correlated with those of other Euro area core countries. The final column gives the size-weighted spillover coefficient for a specification where each country is its own block. The specification test is again rejected, suggesting that the estimated idiosyncratic shocks are in fact correlated.

			\begin{table}[!htbp]
				\centering
				\begin{tabular}{l|cccccc}
					\toprule 
					& Iberia & Core SVN & Periphery SVN & Separate FRA& Country-level \\
					\hline
					RGIV results: &&&&\\
					\quad $\phi_S$ & 0.51 & 0.54 & 0.54 & 0.52 & -0.16 \\
					& (0.11) & (0.08) & (0.08) & (0.09) & (2.58)\\
					\quad $\phi_{I}$ & 0.89 & 0.82 & 0.83 & 0.83 &-- \\
					& (0.06) & (0.06) & (0.06) & (0.06)& -- \\
					\quad $\phi_\text{II}$ & 0.39 & 0.45 & 0.44 & 0.42&-- \\
					& (0.2) & (0.16) & (0.16) & (0.18)&-- \\
					\quad $\phi_\text{III}$ & 0.4 & 0.39 & 0.39 & 0.4&-- \\
					& (0.04) & (0.03) & (0.03) & (0.03)&-- \\
					$\quad \phi_\text{IV}$ & 0.33 &  &  & 0.38&-- \\
					& (0.06) & -- & -- & (0.04)&-- \\
					\hline
					Tests ($p$ values): &&&&&\\
					\quad Specification & 0.95 & -- & -- & <0.001 & <0.001 \\
					\quad Homogeneity & <.001 & <.001 & <.001 & <0.001& <0.001\\
					\bottomrule
				\end{tabular}
				\caption{Estimation results for alternative groupings. Coefficient estimates for $\phi_S, \phi_\text{I}, \phi_\text{II}, \phi_\text{III}$, and $\phi_\text{IV}$ are listed above standard errors, which are provided in parentheses. $p$ values are provided in the bottom section of the table for the specification test and parameter homogeneity tests. Groups: ``Iberia'' (I: PRT, ESP; II: GRC, ITA, IRL; III: AUT, BEL, FRA, FIN, NLD; IV: SVN), ``Core SVN'' (I: PRT, ESP; II: GRC, ITA, IRL; III: AUT, BEL, FRA, FIN, NLD, SVN), ``Periphery SVN'' (I: PRT, ESP,  IRL; II: GRC, ITA, SVN; III: AUT, BEL, FRA, FIN, NLD), ``Separate FRA'' (I: PRT, ESP, IRL; II: GRC, ITA; III: AUT, BEL, FIN, NLD, SVN; IV: FRA). ``Country-level'' (each country is its own block).}
				\label{table:application_altgroups}
			\end{table}
			
			\begin{table}[!htbp]
				\centering
				\begin{tabular}{l|cccc}
					\toprule
					& Andrews rule & No Fama-French & 0-factor GIV & 2-factor GIV\\
					\hline
					RGIV results: &&&\\
					\quad$\phi_S$ & 0.54 & 0.54& \\
					& (0.07) & (0.08) &&\\
					\quad$\phi_\text{PRT, ESP, IRL}$ & 0.83 & 0.83 &\\
					& (0.05) & (0.06) &&\\
					\quad$\phi_\text{GRC, ITA}$ & 0.44 & 0.45 &\\
					& (0.14) & (0.17)&& \\
					\quad$\phi_\text{Core}$ & 0.39 & 0.4 &\\
					& (0.03) & (0.03) \\
					\quad $\phi_\text{SVN}$ & 0.33 & 0.34 &\\
					& (0.05) & (0.06) &&\\
					$\phi^{GIV}$ && &0.42&0.31\\
					&&& (0.03) & (0.06)\\
					\hline
					Tests ($p$ values): &&&&\\
					\quad Specification & 0.93 & 0.952 &&\\
					\quad Homogeneity & <.001 & <.001 &&\\
					First stage $F$-statistic &&&411&112\\
					\bottomrule
				\end{tabular}
				\caption{Estimation results. Coefficient estimates for $\phi_S, \phi_\text{IRL,PRT,ESP}, \phi_\text{GRC, ITA}, \phi_\text{Core}$, and $\phi_\text{SVN}$ are listed above standard errors, which are provided in parentheses. $p$ values are provided in the bottom section of the table for the specification test and parameter homogeneity tests.}
			\end{table}
			
\clearpage

\subsubsection{Narrative}

Table \ref{table:top10_narrative} gives the ten largest (by absolute value) idiosyncratic shock contributions to movements in the size-weighted average yield $S_i u_{it}/(1-\phi_S)$. All ten events correspond to shocks to the Greece/Italy periphery block.

{\footnotesize
\makeatletter\let\LT@end@hline\relax\makeatother
\begin{longtable}{%
  l
  r
  >{\RaggedRight\arraybackslash}p{7.5cm}
  >{\RaggedRight\arraybackslash}p{4.2cm}
}
\toprule
\textbf{Date} & \textbf{Value} & \textbf{Description} & \textbf{Sources} \\
\midrule
\endfirsthead
\toprule
\textbf{Date} & \textbf{Value} & \textbf{Description} & \textbf{Sources} \\
\midrule
\endhead
\endfoot
\endlastfoot

Apr.\ 26, 2010 & $+0.054$ &
Three days after Greece requested EU financial assistance, Chancellor Angela Merkel publicly conditioned aid on the release of a ``sustainable, credible'' deficit-reduction plan. &
\href{https://www.bloomberg.com/news/articles/2010-04-26/merkel-says-greece-must-present-sustainable-cuts-plan-to-get-german-aid}%
{\textit{Merkel Says Greece Must Present Sustainable Cuts Plan to Get German Aid}}
(Apr.\ 26, 2010, Bloomberg)
\\[4pt]

Aug.\ 8, 2011 & $-0.065$ &
First trading day after the ECB announced on August 7 an
expansion of the Securities Markets Programme to include Italian and
Spanish debt. &
\href{https://www.bloomberg.com/news/articles/2011-08-08/bonds-rally-in-italy-spain-to-narrow-spread-to-bunds-after-ecb-buy-signal}%
{\textit{Bonds Rally in Italy, Spain to Narrow Spread to Bunds After ECB Buy Signal}}
(Aug.\ 8, 2011, Bloomberg)
\\[4pt]

May 15, 2014 & $+0.053$ &
Greek yields increased as opinion polls suggested declining support for the government-led coalition ahead of the May 2014 European Parliament election. &
\href{https://www.bloomberg.com/news/articles/2014-05-15/euro-area-bonds-gain-as-data-push-yields-to-records}%
{\textit{Euro-Area Bonds Gain as Data Push Yields to Records}}
(May 15, 2014, Bloomberg)
\\[4pt]

Oct.\ 15, 2014 & $+0.055$ &
Greek yields rose sharply on concerns about the Samaras government's plan to exit its bailout programme early without a credible financing arrangement. &
\href{https://www.bloomberg.com/news/articles/2014-10-15/greek-bonds-extend-selloff-pushing-yield-up-most-in-15-months}%
{\textit{Greek Bonds Extend Selloff Pushing Yield Up Most in 15 Months}}
(Oct.\ 15, 2014, Bloomberg)
\\[4pt]

Dec.\ 29, 2014 & $+0.058$ &
The Greek parliament failed to elect a president in the constitutionally mandated third round of voting. This gave rise to mandatory snap elections and created uncertainty over Greece's bailout programme. &
\href{https://www.bloomberg.com/news/articles/2014-12-29/greece-faces-snap-election-as-samaras-fails-to-install-president}%
{\textit{Greece Faces Snap Election as Samaras Fails to Install President}}
(Dec.\ 29, 2014, Bloomberg)
\\[4pt]

Jul.\ 6, 2015 & $+0.068$ &
First trading day following the July 5 Greek bailout referendum. Here 61\% of Greek voters rejected the creditors' terms. &
\href{https://www.bloomberg.com/news/articles/2015-07-05/greece-heads-for-no-vote-raising-risk-of-departure-from-euro}%
{\textit{Greece Heads for `No' Vote Raising Risk of Departure From Euro}}
(Jul.\ 5, 2015, Bloomberg)
\\[4pt]

May 16, 2018 & $+0.063$ &
On the evening of May~15, \textit{Huffington Post Italy} leaked a 39-page draft coalition programme in which the Five Star Movement and the League called for the ECB to cancel 250 billion Euro of Italian sovereign debt and proposed a mechanism for Euro exit.  &
\href{https://www.bloomberg.com/news/articles/2018-05-16/italy-bonds-roiled-as-populists-debate-300-billion-write-down}%
{\textit{Italy Bonds Roiled as Populists Debate 300 Billion Write-Down}}
(May 16, 2018, Bloomberg)
\\[4pt]

May 21, 2018 & $+0.054$ &
The Five Star Movement and the League formally proposed Giuseppe Conte as their candidate for prime minister, signalling that government formation was advancing. &
\href{https://www.bloomberg.com/news/articles/2018-05-21/italian-bonds-extend-slide-amid-concern-over-increased-borrowing}%
{\textit{Italian Bonds Extend Slide Amid Concern Over Increased Borrowing}}
(May 21, 2018, Bloomberg)
\\[4pt]

May 28, 2018 & $+0.054$ &
Over the weekend of May~26--27, President Mattarella vetoed the coalition's proposed appointment of Paolo Savona as economy minister and designated Carlo Cottarelli as prime minister; Luigi Di~Maio of the Five Star Movement called for Mattarella's impeachment. &
\href{https://www.bloomberg.com/news/articles/2018-05-28/italian-markets-jump-as-president-vetoes-savona-as-finance-chief}%
{\textit{Italian Markets Jump as President Vetoes Savona as Finance Chief}}
(May 28, 2018, Bloomberg)
\\[4pt]

May 29, 2018 & $+0.055$ &
The Italian political crisis remained unresolved as interim Prime Minister Cottarelli failed to announce a cabinet. &
\href{https://www.bloomberg.com/news/articles/2018-05-29/italian-two-year-bond-yields-climb-to-highest-level-since-2013}%
{\textit{Italian Two-Year Bond Yields Climb to Highest Level Since 2013}}
(May 29, 2018, Bloomberg)
\\[4pt]

\bottomrule
\caption{Ten largest (by absolute value) idiosyncratic shock contributions to movements in the size-weighted average yield $S_i u_{it}/(1-\phi_S)$. All ten events correspond to shocks to the Greece/Italy periphery block. The first column gives the date of the event. The second column gives the shock contribution. The third column gives a narrative description of events that occurred on or around the given date. The fourth column includes the title, date, and hyperlink for the corresponding Bloomberg News article.}
\label{table:top10_narrative}
\end{longtable}}

\clearpage
\subsection{Additional results on the application to the inelastic markets hypothesis}

\subsubsection{Data appendix}
\label{sec:demand_data}

The following control variables are from FRED (vintage June 6, 2019) with the mnemonics and transformations in parentheses: real GDP (GDPC1, quarterly percent change), 2Y Treasury (DGS2, quarterly first difference), 5Y Treasury (DGS5, quarterly first difference), 10Y Treasury (DGS10, quarterly first difference), 30Y Treasury (DGS30, quarterly first difference), Ted Spread (TEDRATE, quarterly first difference), University of Michigan Consumer Sentiment Index (UMCSENT, quarterly first difference), Nominal Goods-Only Major Currencies U.S. Dollar Index (DTWEXM, quarterly percent change) and WTI Spot Crude Oil Price (WTISPLC, quarterly percent change).

Following \citet{Gabaix2023}, the following control variables are constructed by running Fama-MacBeth regressions on the data of \citet{Jensen2023} and enter in levels: log market cap, log book-to-market ratio, and momentum. 

The credit spread of \citet{Gilchrist2012} enters in levels and is available at the Federal Reserve website\footnote{\url{https://www.federalreserve.gov/econres/notes/feds-notes/updating-the-recession-risk-and-the-excess-bond-premium-20161006.html}}.

\subsubsection{Block and sector size over time}

Figure \ref{fig:size_blocks} gives block size over time for the preferred grouping described in Table \ref{table:ffunds_block_def}. Figure \ref{fig:size_within_blocks} gives the size of the sector relative to block size over time.

\begin{figure}[!htbp]
	\centering
	\includegraphics[width=.6\textwidth]{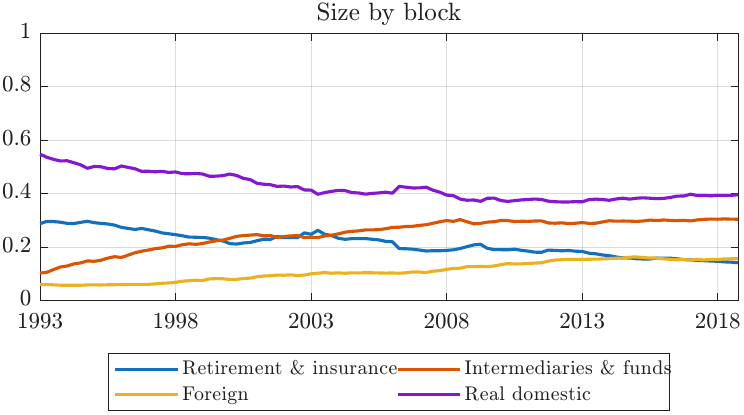}
	\caption{Block size over time. See Table \ref{table:ffunds_block_def} for a complete description of the block grouping.}
	\label{fig:size_blocks}
\end{figure}

\begin{figure}[!htbp]
	\centering
	\includegraphics[width=.75\textwidth]{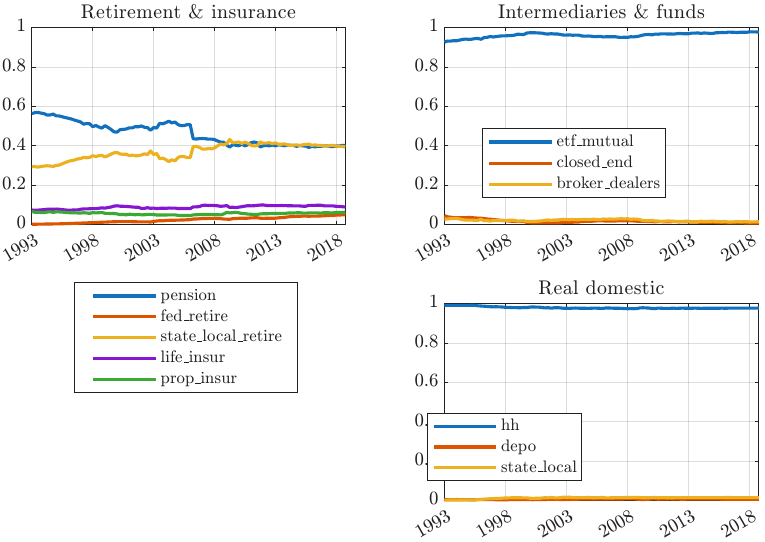}
	\caption{Size of sector relative to block size over time. The foreign block is omitted as the block has a single sector.}
	\label{fig:size_within_blocks}

\end{figure}

\subsubsection{Theoretical framework for the demand system with controls}
\label{sec:demand_framework}

Assumption \ref{ass:demand_controls} (below) establishes conditions for the demand system analog of Assumption \ref{ass:general_controls}. Note that these conditions are written for a generic outcome variable $y_{it}$ and endogenous variable $p_t$. We obtain Equation \ref{eq:demand_system_flowfunds} after substituting $\Delta q_{it}$ for $y_{it}$ and $\Delta p_t$ for $p_t$.

\begin{assumption}{(Demand system with controls)}
	
	\label{ass:demand_controls}
	\begin{enumerate}[label = (\roman*), ref= \theassumption(\roman*)]
		\item \textbf{Model}: For $n  \geq 2$ units, let fixed sizes $S_i \in (0,1)$ sum to 1.  Outcome $\V{y}_t =[y_{1t},...,y_{nt}]'$ responds to $p_t$ with coefficient $\phi_i$, coefficients $\boldsymbol{ \beta}_i$ ($k \times 1$), observed control variables $\V{x}_t$ ($k \times 1$), and unobserved shocks $\V{u}_{t} = [u_{1t}, u_{2t}, ...,u_{nt}]'$, \label{ass:demand_reducedform_controls} $y_{it}  = \phi_i p_t +  \boldsymbol{ \beta}_i' \V{x}_t + u_{it}$ for all $i = 1,...,n$. In addition, $y_{St} = \gamma p_t + \boldsymbol{\beta}_0' \mathbf{x}_t + \varepsilon_t$.

		\item \textbf{Moments}: For $\sigma^2_{i} > 0$ and $\sigma^2_\varepsilon$, shocks $(\V{u}_t', \varepsilon_t)'$ are i.i.d. with moments 
		\begin{align*}
			\EE((\V{u}_t', \varepsilon_t)') = 0, \quad \EE((\V{u}_t', \varepsilon_t) (\V{u}_t', \varepsilon_t)') =  \mathrm{diag}(\sigma^2_1,...,\sigma^2_n, \sigma^2_\varepsilon), \quad \EE(\| (\V{u}_t', \varepsilon_t)'\|^4) < \infty.
		\end{align*}
		Moreover, $u_{it} \indep u_{jt}$ for $i \neq j$, $u_{it} \indep \varepsilon_t$ for all $i$, $\EE(\V{x}_t\V{x}_t') = \Sigma_{XX}$ for positive definite $\Sigma_{XX}$, and $\V{x}_t \indep (\mathbf{u}_t', \varepsilon_t)'$. \label{ass:demand_shockmoments_controls}

		\item \textbf{Parameter space}: 
		For coefficients $\boldsymbol{\phi} = [\phi_1,\dots,\phi_n]'$,  store parameters in $\check{\boldsymbol{ \theta}} = [\boldsymbol{ \phi}', \gamma, \boldsymbol{ \beta}_1',\dots, \boldsymbol{ \beta}_n']'$. Then, the true parameter $\check{\boldsymbol{\theta}}_0$ is in the interior of parameter space $\check{\boldsymbol{\Theta}}$.  $\check{\boldsymbol{\Theta}}$ is compact and for any $\check{\boldsymbol{\widetilde{\theta}}}  \in \check{\boldsymbol{\Theta}}$, $\widetilde{\phi}_S   < \widetilde{\gamma}$. \label{ass:demand_parameterspace_controls}
	\end{enumerate}
\end{assumption}

Definition \ref{defn:demand_naive_controls} (below) defines the RGIV estimator applied to the demand system. 

\begin{definition}
	\label{defn:demand_naive_controls}
	Store data $\V{z}_t = [\V{y}_t', \V{x}_t', p_t]'$ for  outcome variable $\V{y}_t = [y_{1t}, \dots,y_{nt}]'$. Let $u_{i}(\V{z}_t, \boldsymbol{\phi}; \boldsymbol{\psi}) = (y_{it} - \boldsymbol{ \psi}_i' \V{x}_t)  - \phi_i (p_t -\boldsymbol{ \psi}_p' \mathbf{x}_{t} )$ for $i=1,...,n$ and $\varepsilon(\V{z}_t, \gamma; \boldsymbol{\psi}) = (y_{St} - \boldsymbol{ \psi}_S' \V{x}_t)  - \gamma (p_t -\boldsymbol{ \psi}_p' \mathbf{x}_{t} ) $.
	The moment function $g^c_{\mathrm{demand}}(\V{z}_t, \boldsymbol{\phi}; \boldsymbol{\psi})$  is
{\small
	\begin{align*}
		g^c_{\mathrm{demand}}(\V{z}_t, \boldsymbol{\phi}, \gamma; \boldsymbol{\psi}) &= [u_{1}(\V{z}_t, \boldsymbol{\phi}; \boldsymbol{\psi}) u_{2}(\V{z}_t, \boldsymbol{\phi}; \boldsymbol{\psi}), \dots, \, u_{1}(\V{z}_t, \boldsymbol{\phi}; \boldsymbol{\psi}) u_{n}(\V{z}_t, \boldsymbol{\phi}; \boldsymbol{\psi}),\\
		&\quad \, \text{ }u_{2}(\V{z}_t, \boldsymbol{\phi}; \boldsymbol{\psi}) u_{3}(\V{z}_t, \boldsymbol{\phi}; \boldsymbol{\psi}), \dots ,\text{}u_{n-1}(\V{z}_t, \boldsymbol{\phi}; \boldsymbol{\psi}) u_{n}(\V{z}_t, \boldsymbol{\phi}; \boldsymbol{\psi}), \\
		&\quad \, \text{ }u_{1}(\V{z}_t, \boldsymbol{\phi}; \boldsymbol{\psi}) \varepsilon(\V{z}_t, \gamma; \boldsymbol{\psi}), \dots, u_{n}(\V{z}_t, \boldsymbol{\phi}; \boldsymbol{\psi}) \varepsilon(\V{z}_t, \gamma; \boldsymbol{\psi})]' .
	\end{align*}}
	
	Let the sample weight matrix $\widehat{W}$ be defined as 
	\begin{align*}
	\widehat{W} = \mathrm{diag}\Big(\frac{1}{\widehat{\sigma}_1^2 \widehat{\sigma}_2^2}, \dots, \,\frac{1}{\widehat{\sigma}_1^2 \widehat{\sigma}_n^2}, \frac{1}{\widehat{\sigma}_2^2 \widehat{\sigma}_3^2}, \dots, \frac{1}{\widehat{\sigma}_{n-1}^2 \widehat{\sigma}_n^2}, \frac{1 }{\widehat{\sigma}_1^2} \frac{1}{\widehat{\sigma}_\varepsilon^2}, \dots, \frac{1 }{\widehat{\sigma}_n^2} \frac{1}{\widehat{\sigma}_\varepsilon^2}\Big)	
	\end{align*}		
	where $\widehat{\sigma}_i^2$ and $\widehat{\sigma}_\varepsilon^2$ are consistent estimators of idiosyncratic shock variances $\sigma^2_i$ for $i=1,\dots,n$ and $\sigma^2_\varepsilon$ respectively.	Then the \textbf{demand system robust granular instrumental variables estimator with observed explanatory variables} is characterized as the following two-step estimator:
	\begin{enumerate}
		\item For each $i=1,\dots,n$,
		\begin{align*}
		\widehat{\boldsymbol{\psi}}^{\textrm{Step 1}}_i &= (\frac{1}{T} \sum_{t=1}^T \V{x}_t \V{x}_t')^{-1} (\frac{1}{T} \sum_{t=1}^T \V{x}_t y_{it}) \text{ and }	
		\widehat{\boldsymbol{\psi}}_p = (\frac{1}{T} \sum_{t=1}^T \V{x}_t \V{x}_t')^{-1} (\frac{1}{T} \sum_{t=1}^T \V{x}_t p_{t}).
		\end{align*}

		\item For $\widehat{Q}_T(\boldsymbol{\phi}; \widehat{\boldsymbol{\psi}}^{\text{Step 1}})=  \Big[\frac{1}{T} \sum_{t=1}^T g^c_{\mathrm{demand}}(\V{z}_t, \boldsymbol{\phi};\widehat{ \boldsymbol{\psi}}^{\text{Step 1}}) \Big]' \widehat{W} \Big[\frac{1}{T} \sum_{t=1}^T g^c_{\mathrm{demand}}(\V{z}_t, \boldsymbol{\phi};\widehat{ \boldsymbol{\psi}}^{\text{Step 1}})\Big]$, 		$\widehat{\boldsymbol{\phi}}^{RGIV,c} = \arg\min_{\boldsymbol{\phi}_S < \gamma} \widehat{Q}_T(\boldsymbol{\phi}; \widehat{\boldsymbol{\psi}}^{\text{Step 1}} ).$				
	\end{enumerate}

\end{definition}

\begin{lemma}[Identification with observed explanatory variables, demand system]
	\label{lemma:demand_identification_controls}
	Impose Assumption \ref{ass:demand_controls}. For $g_0^c(\boldsymbol{ \phi}) = \EE[g^c_{\mathrm{demand}}(\V{z}_t, \boldsymbol{\phi}; \boldsymbol{\psi}_0)]$, $g_0^c(\boldsymbol{ \phi}_0)= 0$ for the true parameter $\boldsymbol{\phi}_0$ and $g_0^c(\widetilde{\boldsymbol{ \phi}})  \neq 0$ for $\widetilde{\boldsymbol{ \phi}} \in \boldsymbol{\Phi}$ such that $\widetilde{\boldsymbol{ \phi}} \neq \boldsymbol{ \phi}_0$.	 	
\end{lemma}
\begin{proof}
The result immediately follows after applying Lemma \ref{lemma:demand_identification_no_controls} to $\dot{y}_{it} = y_{it} - \boldsymbol{\psi}_i' \mathbf{x}_t$ and $p_t - \boldsymbol{\psi}_p' \mathbf{x}_t$.
\end{proof}

Assumption \ref{ass:demand_controls} nests the case without explanatory variables. For $k=1$, impose $\beta_i = 0$ for $i=0,\dots,n$. For $u_i(\V{z}_t, \boldsymbol{\phi}) = u_i(\V{z}_t, \boldsymbol{\phi}; 0)$ and $\varepsilon(\V{z}_t, \boldsymbol{\phi}) = \varepsilon(\V{z}_t, \boldsymbol{\phi}; 0)$, the moment function absent control variables is 
\begin{align*}
		g_{\mathrm{demand}}(\V{z}_t, \boldsymbol{\phi}, \gamma) &= [u_{1}(\V{z}_t, \boldsymbol{\phi}) u_{2}(\V{z}_t, \boldsymbol{\phi}), \dots, \, u_{1}(\V{z}_t, \boldsymbol{\phi}) u_{n}(\V{z}_t, \boldsymbol{\phi}),\\
		&\quad \, \text{ }u_{2}(\V{z}_t, \boldsymbol{\phi}) u_{3}(\V{z}_t, \boldsymbol{\phi}), \dots ,\text{}u_{n-1}(\V{z}_t, \boldsymbol{\phi}) u_{n}(\V{z}_t, \boldsymbol{\phi}), \\
		&\quad \, \text{ }u_{1}(\V{z}_t, \boldsymbol{\phi}) \varepsilon(\V{z}_t, \gamma), \dots, u_{n}(\V{z}_t, \boldsymbol{\phi}) \varepsilon(\V{z}_t, \gamma)]' .
	\end{align*}

\begin{lemma}[Identification, demand system no explanatory variables]
	\label{lemma:demand_identification_no_controls}
		Impose Assumption \ref{ass:demand_controls}. Also assume $k=1$ and $\beta_i = 0$ for $i=0,\dots,n$. For $g_0([\boldsymbol{ \phi}', \gamma]') = \EE[g_{\mathrm{demand}}(\V{z}_t, \boldsymbol{\phi}, \gamma)]$, $g_0([\boldsymbol{ \phi}', \gamma]')= 0$ for the true parameter $[\boldsymbol{ \phi}'_0, \gamma_0]'$ and $g_0([\widetilde{\boldsymbol{ \phi}}, \widetilde{\gamma}]')  \neq 0$ for any other $[\widetilde{\boldsymbol{ \phi}}', \widetilde{\gamma}]'$ in the parameter space.	 	 
\end{lemma}
\begin{proof}
	Follow the arguments of the proof of Lemma \ref{lemma:gmm_id} after substituting $p_t$ for $r_{St}$,
	\begin{align}
		\phi_k - \widetilde{\phi}_k &= - \frac{(\phi_1 - \widetilde{\phi}_1) \EE[p_t u_{kt}]}{\EE[p_t u_{1t}] + (\phi_1 - \widetilde{\phi}_1) \EE[p_t^2]} \text{ and } \gamma - \widetilde{\gamma} = - \frac{(\phi_1 - \widetilde{\phi}_1) \EE[p_t \varepsilon_t]}{\EE[p_t u_{1t}] + (\phi_1 - \widetilde{\phi}_1) \EE[p_t^2]}. \label{eq:demand_phitilde_gammatilde}
	\end{align}
	
	Then, for units $i\neq j$,
\begin{align}
	0 &= \frac{(\phi_1 - \widetilde{\phi}_1) \EE[p_t u_{it}] \EE[p_t u_{jt}]}{\EE[p_t u_{1t}] + (\phi_1 -\widetilde{\phi}_1) \EE[p_t^2]}\left\{-2 + \frac{(\phi_1 - \widetilde{\phi}_1) \EE(p_t^2)}{\EE(p_t u _{1t}) + (\phi_1 - \widetilde{\phi}_1) \EE(p_t^2)} \right\}.
	\label{eq:demand_twosolutions}
\end{align}
Compare to Equation \ref{eq:general_momcond}. The first term equals zero when $\phi_1 - \widetilde{\phi}_1 = 0$. Plugging into Equation \ref{eq:demand_phitilde_gammatilde}, $\widetilde{\boldsymbol{\phi}} = \boldsymbol{\phi}$ and $\widetilde{\gamma} = \gamma$.

Now focus on the second term of Equation \ref{eq:demand_twosolutions}. The second term equals zero when $\EE[p_t u_{1t}] = -\frac{1}{2} (\phi_1 - \widetilde{\phi}_1) \EE[p_t^2]$. For $k \geq 1$, substituting into Equation \ref{eq:demand_phitilde_gammatilde} gives
\begin{align*}
	\phi_k - \widetilde{\phi}_k &= -2 \frac{\EE[p_t u_{kt}]}{\EE[p_t^2]} \text{ and } \gamma - \widetilde{\gamma} = -2 \frac{\EE[p_t \varepsilon_t]}{\EE[p_t^2]}.
\end{align*}
Compare to Equation \ref{eq:general_falsesolution}. Then, $\widetilde{\phi}_S - \widetilde{\gamma} = (\phi_S - \gamma) + \frac{2\{\EE[p_t u_{St}] - \EE[p_t \varepsilon_t] \}}{\EE[p_t^2]} = -(\phi_S - \gamma) > 0.$
Thus, $(\widetilde{\boldsymbol{\phi}}', \widetilde{\gamma})'$ falls outside the parameter space.

\end{proof}

With identification, consistency and asymptotic normality are straightforward extensions of Theorems \ref{thm:gmm_consistency_controls} and \ref{thm:gmm_asymptotic_normality_controls}.

		\end{document}